\documentclass[11pt]{article}
\pdfoutput=1
\usepackage[T1]{fontenc}
\usepackage{lmodern}
\usepackage{slantsc}
\usepackage[protrusion=true,expansion=true]{microtype}
\usepackage{breakcites}
\usepackage{amsmath,amssymb,amsfonts,amsthm}
\usepackage{subcaption}
\usepackage{graphicx}
\usepackage{fullpage}
\usepackage{setspace}
\usepackage[backref=page]{hyperref}
\usepackage[nameinlink]{cleveref}
\usepackage{color}
\usepackage{wrapfig}
\usepackage{tikz}
\usetikzlibrary{decorations.pathreplacing}
\usepackage{algorithm}
\usepackage[noend]{algpseudocode}
\usepackage[framemethod=tikz]{mdframed}
\usepackage{xspace}
\usepackage{pgfplots}
\usepackage{framed}
\usepackage{thmtools}
\usepackage{thm-restate}
\usepackage{tabu}
\usepackage{fancyhdr}
\usepackage{bbm}
\usepackage{booktabs}
\pgfplotsset{compat=1.5}

\newtheorem{theorem}{Theorem}[section]

\newtheorem{lemma}[theorem]{Lemma}

\newtheorem{definition}[theorem]{Definition}

\newtheorem{fact}[theorem]{Fact}

\newenvironment{proofof}[1]{\begin{trivlist} \item {\bf Proof
#1:~~}}
  {\qed\end{trivlist}}

\newcommand{\namedref}[2]{\hyperref[#2]{#1~\ref*{#2}}}

\renewcommand{\Re}{\operatorname{Re}}
\renewcommand{\Im}{\operatorname{Im}}

\def \FAIL    {\mdef{\mathsf{FAIL}}}

\def \CountSketch    {\mdef{\textsc{CountSketch}}}

\newcommand{\PPr}[1]{\ensuremath{\mathbf{Pr}\left[#1\right]}}
\newcommand{\PPPr}[2]{\ensuremath{\underset{#1}{\mathbf{Pr}}\left[#2\right]}}
\newcommand{\Ex}[1]{\ensuremath{\mathbb{E}\left[#1\right]}}
\newcommand{\EEx}[2]{\ensuremath{\underset{#1}{\mathbb{E}}\left[#2\right]}}
\renewcommand{\O}[1]{\ensuremath{\mathcal{O}\left(#1\right)}}
\newcommand{\tO}[1]{\ensuremath{\tilde{\mathcal{O}}\left(#1\right)}}
\newcommand{\eps}{\varepsilon}

\def \calD    {\mdef{\mathcal{D}}}
\def \calE    {\mdef{\mathcal{E}}}

\def \calH    {\mdef{\mathcal{H}}}

\def \calN    {\mdef{\mathcal{N}}}

\def \calP    {\mdef{\mathcal{P}}}

\def \calS    {\mdef{\mathcal{S}}}

\def \calZ    {\mdef{\mathcal{Z}}}

\newcommand{\mdef}[1]{{\ensuremath{#1}}\xspace}  

\DeclareMathOperator*{\polylog}{polylog}
\DeclareMathOperator*{\poly}{poly}

\DeclareMathOperator*{\Exp}{Exp}
\DeclareMathOperator{\Si}{Si}
\DeclareMathOperator{\Cin}{Cin}
\DeclareMathOperator{\tail}{tail}
\DeclareMathOperator{\Unif}{Unif}
\DeclareMathOperator*{\Ein}{Ein}

\newcommand{\ignore}[1]{}

\newif\ifnotes\notestrue 
\ifnotes
\newcommand{\samson}[1]{\textcolor{blue}{{\bf (Samson:} {#1}{\bf ) }} \marginpar{\tiny\bf
             \begin{minipage}[t]{0.5in}
               \raggedright S:
            \end{minipage}}}
\newcommand{\david}[1]{\textcolor{purple}{{\bf (David:} {#1}{\bf ) }} \marginpar{\tiny\bf
             \begin{minipage}[t]{0.5in}
               \raggedright D:
            \end{minipage}}} 
\else
\newcommand{\samson}[1]{}
\newcommand{\david}[1]{}
\fi

\makeatletter
\renewcommand*{\@fnsymbol}[1]{\textcolor{mahogany}{\ensuremath{\ifcase#1\or *\or \dagger\or \ddagger\or
 \mathsection\or \triangledown\or \mathparagraph\or \|\or **\or \dagger\dagger
   \or \ddagger\ddagger \else\@ctrerr\fi}}}
\makeatother

\providecommand{\email}[1]{\href{mailto:#1}{\nolinkurl{#1}\xspace}}

\definecolor{mahogany}{rgb}{0.75, 0.25, 0.0}
\definecolor{darkblue}{rgb}{0.0, 0.0, 0.55}
\definecolor{darkpastelgreen}{rgb}{0.01, 0.75, 0.24}
\definecolor{darkgreen}{rgb}{0.0, 0.2, 0.13}
\definecolor{darkgoldenrod}{rgb}{0.72, 0.53, 0.04}
\definecolor{darkred}{rgb}{0.55, 0.0, 0.0}
\definecolor{forestgreenweb}{rgb}{0.13, 0.55, 0.13}
\definecolor{greencss}{rgb}{0.0, 0.5, 0.0}
\definecolor{bleudefrance}{rgb}{0.19, 0.55, 0.91}
\definecolor{darkpastelpurple}{rgb}{0.59, 0.44, 0.84}
\definecolor{darkcerulean}{rgb}{0.03, 0.27, 0.49}
\definecolor{royalpurple}{rgb}{0.47, 0.32, 0.66}

\hypersetup{
     colorlinks   = true,
     citecolor    = mahogany,
	 linkcolor	  = forestgreenweb
}

\begin{document}

\title{Fast Perfect $L_2$ Sampling}
\author{David P. Woodruff\thanks{Carnegie Mellon University and Google Research. \email{dwoodruf@andrew.cmu.edu}. David P. Woodruff is supported in part by Office of Naval Research award number N000142112647 and a Simons Investigator Award. }
\and
Samson Zhou\thanks{Texas A\&M University. \email{samsonzhou@gmail.com}. 
Samson Zhou is supported in part by NSF CCF-2335411 and gratefully acknowledges funding provided by the Oak Ridge Associated Universities (ORAU) Ralph E. Powe Junior Faculty Enhancement Award.}
}
\maketitle

\begin{abstract}
Perfect $L_2$ sampling is a powerful primitive for turnstile data streams, allowing algorithms to extract an index with probability proportional to its squared frequency, with applications to dynamic join size estimation, query optimization, heavy-hitters, sampling from self-joins, and real-time monitoring of evolving datasets. The setting $p=2$ is also the largest $p$ for which $L_p$ sampling can be achieved in polylogarithmic space. However, existing space-optimal perfect samplers rely on explicit polynomial duplication, requiring inefficient polynomial update times per stream arrival. Resolving a open question left by Swartworth, Woodruff, and Zhou [FOCS 2025], we present a perfect $L_2$ sampler that achieves an expected $\text{polylog}(n)$ update time while maintaining an optimal $\tilde{\mathcal{O}}(\log^3 n)$ bits of space.
\end{abstract}

\section{Introduction}
The scale of modern data systems often makes it infeasible to store or repeatedly process an entire dataset. 
In many settings, such as network monitoring and large-scale database systems, data instead arrives as a continuous stream of updates. 
The \emph{streaming model of computation} formalizes this setting by requiring algorithms to process the input using memory far smaller than the size of the data itself. 
In the standard turnstile model, the stream implicitly defines a high-dimensional frequency vector $x\in\mathbb{R}^n$ that is initialized to the all zeros vector and evolves through updates of the form $x_i\gets x_i+\Delta$, where $\Delta$ may be positive or negative. 
This model naturally captures dynamic settings in which data is both inserted and deleted, such as updates to database records or changes in network traffic.

A central goal in the streaming model is to design primitives that extract useful information from the frequency vector while respecting stringent space constraints. 
One particularly powerful primitive is $L_p$ sampling, which returns an index $i\in[n]$ with probability proportional to $(1\pm\eps)\cdot\frac{|x_i|^p}{\|x\|_p^p}+\frac{1}{\poly(n)}$. 
Informally, an $L_p$ sampler selects coordinates according to their contribution to the $p$-th frequency moment, biasing the sample toward larger entries as $p$ increases.
The sampler is called \emph{perfect} if $\eps=0$ and \emph{approximate} otherwise. 
Because $L_p$ sampling captures structure in the frequency vector beyond global aggregates by extracting representative coordinates according to the geometry induced by the $L_p$ norm, $L_p$ sampling has emerged as a powerful primitive in a variety of sublinear and streaming algorithms. 
In particular, it has found applications in a broad spectrum of settings, including network monitoring and anomaly detection~\cite{EstanV03,HuangNGHJJT07,GilbertKMS01,ThottanLJ10,MaiCSYZ06}, distributed computation frameworks~\cite{CormodeMYZ12,TirthapuraW11,JayaramSTW19,WoodruffZ12,CormodeF14}, database query processing tasks~\cite{HaasNSS96,LiptonNS90,GibbonsM98,CohenG19,Olken93,Haas16,LiptonN95,HaasS92}, and techniques for compact representation and summarization of large datasets~\cite{FriezeKV04,DeshpandeRVW06,DeshpandeV07,AggarwalDK09,IndykMMM14,MahabadiIGR19,Cohen2020wor,IndykMGR20,MahabadiRWZ20,JayaramWZ22,MahabadiWZ22,GollapudiMS23,MahabadiT23,LinSWXZ26}. 
Despite this broad applicability, the complexity and behavior of $L_p$ sampling depend strongly on the choice of $p$, and different regimes require fundamentally different algorithmic ideas.

Among the family of $L_p$ samplers, the case $p=2$ occupies a particularly important position. 
From a geometric perspective, the $L_2$ norm underlies Euclidean structure and is therefore deeply connected to many algorithmic techniques based on linear algebra and sketching. 
Within database systems, the second frequency moment $F_2=\|x\|_2^2$ corresponds directly to self-join size, making $L_2$ sampling a natural primitive for approximate query processing, join estimation, and regression over joins without explicitly materializing them, where sampling-based techniques are commonly used~\cite{JayaramSWY21}.  
Moreover, it is used in a number of subroutines such as streaming algorithms for $L_p$ norm estimation for $p>2$ \cite{WoodruffZ21,WoodruffXZ25}. 
Thus, progress on $L_2$ sampling directly influences the complexity of a much broader class of problems.
In particular, $p=2$ is the largest exponent for which polylogarithmic space $L_p$ sampling is possible, as $\tilde{\Omega}(n^{1-2/p})$ space is necessary for $p>2$~\cite{GribelyukLWYZ25,WoodruffXZ25,JiangLY26}. 

\paragraph{Previous work.}
Despite its central role, constructing efficient $L_p$ samplers has been well-studied in the streaming literature.  
In the insertion-only setting, the classical reservoir sampling technique~\cite{Vitter85} gives an exact $L_1$ sampler using $\O{\log n}$ bits of space with no distortion, both multiplicative and additive. 
However, beyond the $L_1$ case or outside insertion-only streams, the problem becomes significantly more complex and the question of whether such samplers exist was explicitly posed by \cite{CormodeMR05}, and subsequently resolved in the affirmative for $p \le 2$ in the approximate setting~\cite{MonemizadehW10}. 
These initial constructions used $\poly\left(\frac{1}{\eps}, \log n\right)$ space, and were later improved to near-optimal bounds by \cite{AndoniKO11} and by \cite{JowhariST11}, achieving roughly $\O{\frac{1}{\eps^{\max(1,p)}} \log^2 n}$ space for $p \in (0,2)$ and $\O{\frac{1}{\eps^2} \log^3 n}$ for $p=2$. 
While these results established near-optimal space complexity for approximate sampling, they left open finer questions regarding distortion and randomness requirements. 
This gap was later resolved by \cite{JayaramW18}, who introduced the first perfect $L_p$ samplers for $p \in (0,2)$ with $\tO{\log^2 n}$ space and for $p=2$ with $\tO{\log^3 n}$ space.

\paragraph{Update time.}
Despite resolving the space complexity, the algorithms of \cite{JayaramW18} and related work~\cite{CohenG19} suffer from a severe computational bottleneck, duplicating each coordinate $n^C$ times and assigning each copy an independent exponential weight. 
Consequently, a single update requires $\Omega(n^C)$ structural operations, so that the update time per element is polynomial in $n$, which can be prohibitively large in many settings. 
This sparked work on fast perfect $L_p$ sampling. 
For insertion-only streams, \cite{PettieW25} achieved $\O{1}$ update time for $p \le 1$ in the random oracle model. 
Subsequently, \cite{SwartworthWZ25} introduced a continuous simulation framework for general turnstile streams, achieving polylogarithmic update time for $p<2$. 
Unfortunately, the framework of \cite{SwartworthWZ25} does not extend to the critical case $p=2$, where the limiting behavior underlying the continuous simulation approach changes qualitatively and the techniques used for $p<2$ no longer apply (see \Cref{sec:overview} for more details). 

At first glance, this distinction may appear to be merely a change of regime. 
However, the situation is less clear from a complexity perspective. 
Indeed, while a combination of techniques from \cite{JiangLY26,SwartworthW26} showed an $\Omega(\log^3 n)$ space lower bound for perfect $L_2$ sampling, perfect samplers for $p<2$ are known using only $\tO{\log^2 n}$ space~\cite{JayaramW18,SwartworthWZ25}. 
Thus, it is natural to ask whether the additional memory complexity of the $L_2$ regime fundamentally obstructs fast algorithms, or whether one can still achieve polylogarithmic update time while maintaining optimal space.

\paragraph{Our contributions.}
In this paper, we resolve the open question of whether perfect $L_2$ sampling admits both optimal space and fast update time in the turnstile model. 
We introduce a new approach for implementing perfect $L_2$ sampling without relying on the explicit polynomial duplication used in previous constructions to achieve the first perfect $L_2$ sampler with expected polylogarithmic update time and optimal space complexity.

\begin{restatable}{theorem}{thmmain}
\label{thm:main}
Given a sequence of turnstile updates defining a frequency vector $x\in\mathbb{R}^n$, and a failure probability parameter $\delta\in(0,1)$, there exists a perfect $L_2$ sampler that outputs a failure symbol $\bot$ with probability at most $\delta$. 
Conditioned on returning an index $i^*\in[n]$, 
\[\PPr{i^*=j}=\frac{x_j^2}{\|x\|_2^2}\pm\frac{1}{\poly(n)}.\]
The algorithm uses $\tO{\log^3 n\log\frac{1}{\delta}}$ bits of space and processes each stream update in $\polylog(n)$ time in expectation.
\end{restatable}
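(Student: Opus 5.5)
The plan follows the Jayaram--Woodruff framework for perfect $L_2$ sampling, but simulates the polynomial duplication implicitly. In that framework each coordinate $i$ is duplicated into $n^C$ copies $x_{i,j}=x_i$. Each copy is scaled by $1/\sqrt{e_{i,j}}$ with $e_{i,j}\sim\Exp(1)$ independent. The max-stability of exponentials then makes the index of the largest scaled entry of $z_{i,j}=x_i/\sqrt{e_{i,j}}$ distributed as $x_i^2/\|x\|_2^2$ when projected to $[n]$. Duplication is needed because the maximum must dominate the tail $\|z_{\tail}\|_2$ by a constant factor, so that a \CountSketch{} with $\tO{\log^2 n}$ buckets can identify it and a statistical test can accept or reject without bias. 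That test is repeated $\O{\log n}$ times to amplify, giving $\tO{\log^3 n}$ bits overall, and $\log\frac1\delta$ further repetitions give failure probability $\delta$. The only obstacle to fast updates is that an update to $x_i$ touches all $n^C$ copies. My goal is therefore to compute, for each coordinate $i$ and each \CountSketch{} row, the aggregate contribution $\sum_j \sigma(i,j)\,e_{i,j}^{-1/2}$ to every bucket, in $\polylog(n)$ time.

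The first step is to split the copies of coordinate $i$ into a ``top'' part and a ``bulk'' part. The order statistics of $n^C$ exponentials can be generated sequentially via the Rényi representation, using seeded, $\poly(\log n)$-wise independent randomness so that the values can be regenerated at each update. The top $K=\polylog(n)$ copies, namely those with the smallest $e_{i,j}$ and hence the largest $|z_{i,j}|$, I would generate explicitly and hash individually into buckets. The remaining copies have $e_{i,j}$ at least about $K/n^C$. Their contributions to a bucket form a sum of many independent random-signed terms $\pm e^{-1/2}$, each landing in the bucket with probability $1/B$.

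The second step replaces each bulk bucket sum by a single sample from its exact or near-exact distribution. Conditioned on the threshold, the bulk values are i.i.d.\ truncated exponentials, so the bucket contribution is a signed sum of a Poisson/binomial number of i.i.d.\ heavy-ish terms. For $p<2$ the swz25 approach uses stable limits here. At $p=2$, however, $e^{-1/2}$ has infinite variance only logarithmically: the second moment of the truncated variable is about $\log(n^C/K)$, so the sum is Gaussian-like with variance $\Theta(\log n)$ per unit mass. This is exactly where the limiting behavior changes.

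My plan is therefore to dyadically bucket the magnitude range $e\in[2^{-k},2^{-k+1}]$ into $\O{\log n}$ levels. For each level I would sample the binomial count of copies falling in each \CountSketch{} bucket, using an $\O{1}$-expected-time binomial sampler. The top levels, where counts are small, I would sample exactly. For the lower levels, where counts are polynomially large, I would sample the signed sum by an exact method: a symmetric random-walk sum via its known distribution, plus an i.i.d.\ approximate sum whose total-variation error is at most $1/\poly(n)$ by a local CLT with Berry--Esseen or Edgeworth corrections. Since the sketch is linear and each bucket's aggregate is a deterministic function of the seed times $x_i$, updates add $\Delta$ times these regenerated aggregates. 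The cost is $\O{\log n}$ levels times $\polylog$ rows, which is polylogarithmic.

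The main obstacle is making the simulated joint distribution within $1/\poly(n)$ total variation of the true duplicated process. Two requirements make this hard. First, the coupling must be consistent across updates, which means the same seed must reproduce the same bucket sums. Second, the rejection test of Jayaram--Woodruff must remain unbiased, and that test depends on the tail norm estimate, whose $\log n$-factor variance at $p=2$ is precisely what forces the $\log^3 n$ space. I would handle this by deriving all level counts and sums from a Nisan-style PRG seeded per coordinate. Correctness would follow by arguing that the sampler's output is a function of a small-space-computable statistic, so fooling it costs only $1/\poly(n)$. I expect the Gaussian-regime approximation error bound at $p=2$ to be the delicate calculation, and I would first try to bound it by exact sampling for levels with fewer than $\poly(\log n)$ expected copies per bucket and by Edgeworth expansion for the rest.
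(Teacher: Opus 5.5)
There is a genuine gap, and it is at the step you flag as delicate: simulating the bulk. With a sparse $\pm1$ \CountSketch{}, you must produce, for every row $\ell$ and bucket $b$, the signed sum $\sum_{j\in b}\sigma_{\ell}(i,j)\,e_{i,j}^{-1/2}$. The joint law of these sums must be within $1/\poly(n)$ total variation of the truth, and you must produce them in $\polylog(n)$ time.

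Your two tools do not cover all levels. Exact sampling of a level with $N$ copies per bucket costs $\Theta(N)$, so it is affordable only for $N=\polylog(n)$. A local CLT or Edgeworth expansion of fixed order $s$ has error $O_s(N^{-(s+1)/2})$, so it reaches $n^{-c}$ only once $N\ge n^{2c/(s+1)}$. The $\Theta(\log n)$ dyadic levels whose per-bucket count lies between these thresholds are handled by neither. Splitting off a simple random walk $\sigma_j a_k$, with $a_k$ a level representative, does not help: the residual $\sigma_j(e_{i,j}^{-1/2}-a_k)$ is still a sum of $N$ non-degenerate i.i.d.\ terms.

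There is a second problem. Every row of the sketch hashes the \emph{same} realized $e_{i,j}$. The bulk noise in different rows is therefore correlated through shared statistics such as $\sum_j e_{i,j}^{-1}$, whose relative fluctuations are only $1/\polylog(n)$. Generating each row's bucket sums from its own marginal simulates a different process, which is not $1/\poly(n)$-close to the duplicated one. The unbiasedness of the Jayaram--Woodruff test then does not transfer. A PRG cannot repair this, since it only fools a process already shown to be correct.

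The paper avoids both problems by using a \emph{dense Gaussian} \CountSketch{}. By $2$-stability, conditioned on the tail magnitudes, the tail's contribution to row $\ell$ is distributed exactly as $\Delta\, g_{\ell,i,\mathrm{tail}}\sqrt{V_{\mathrm{tail}}}$. Here $V_{\mathrm{tail}}=\sum_j e_{i,j}^{-1}$ is a single scalar shared by all rows, and the Gaussians are fresh and independent per row. So no CLT is needed and the cross-row coupling is exact.

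What remains is this one scalar. It is \emph{not} concentrated to $1/\poly(n)$ accuracy, so the paper samples it from its own law:
\begin{itemize}
\item It recenters to $Z_m=V/m-\ln m$.
\item It computes the finite-$m$ characteristic function, splitting off the $it/u$ singularity so that the resulting $itE_1(1/(Rm))$ cancels the $-it\ln m$ shift.
\item This gives the explicit limit $\exp(C_2(t))$ in terms of $\Si$ and $\Cin$.
\item It proves $|\Psi(t)|\le e^{2/R-\kappa|t|}$ with $1/R=\Theta(\log n)$, using R\'enyi's representation and Chernoff on the head.
\item It samples by binary search on the Gil-Pelaez CDF, evaluated with $\O{\log^2 n}$ Gauss--Legendre nodes.
\end{itemize}
The error between finite $m$ and the limit is $n^{\O{1}}/m$, which is negligible because $m\approx n^C$.

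Two smaller points. A Nisan-style PRG over an $\O{\log^3 n}$-bit sketch state needs seed length $\O{\log^4 n}$, which overshoots $\tO{\log^3 n}$. The paper instead uses the GKM Fourier-shape and half-space PRGs, as in Jayaram--Woodruff, to get $\O{\log^3 n(\log\log n)^2}$. Also, at $p=2$ the maximum is only a $\Theta(1/\sqrt{\log n})$ fraction of $\|z\|_2$, not a constant-factor dominator. That is why $\eps=\Theta(1/\sqrt{\log n})$, and it is where the $\log^3 n$ comes from.
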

We remark that the output $\bot$ corresponds to a failure event that occurs with probability at most $\delta$; this is standard in perfect sampling constructions and can be reduced arbitrarily by repetition without affecting the conditional sampling guarantee.
Conceptually, our result shows that the additional complexity of the $L_2$ regime does not fundamentally obstruct efficient algorithms. 
Although the methods developed for $p<2$ do not directly extend to $p=2$, we show that this qualitative change in behavior need not incur an algorithmic penalty. 
Thus, despite occupying a distinct space complexity regime, perfect $L_2$ sampling remains achievable with both optimal space and efficient updates.

\section{Motivation and Applications}

\paragraph{Statistical indistinguishability.}
Sampling is a fundamental primitive for extracting information from massive datasets. 
In particular, $L_2$ sampling has been used as a subroutine for central streaming tasks such as heavy-hitters, norm and moment estimation, duplicate detection, and data summarization~\cite{MonemizadehW10,AndoniKO11,JowhariST11,BravermanOZ12,JayaramW18}. 
Since these summaries effectively represent the dataset itself, it is important that the samples accurately reflect the target distribution.

Approximate samplers permit multiplicative distortion in the sampling probabilities. 
While this distortion may be small for an individual sample, the effect can accumulate when many samples are repeatedly combined or reused throughout a larger computational pipeline. 
Consequently, hidden statistical bias may propagate into downstream procedures and alter their behavior. 
Perfect $L_2$ samplers avoid this issue by reducing the additive distortion in the output distribution to an arbitrarily small inverse-polynomial quantity while preserving essentially the same asymptotic space complexity. 
For example, if an application extracts $n^c$ samples for some constant $c>0$, then choosing an additive error of $n^{-(c+100)}$ ensures that the resulting joint distribution remains statistically indistinguishable from one obtained using truly exact samples. 
This makes perfect samplers particularly attractive as reusable black-box primitives.

\paragraph{Distributed databases and query processing.}
The distinction between approximate and perfect sampling becomes especially important in distributed database systems. 
Modern data processing architectures partition datasets across many machines, where local summaries are constructed independently and later combined into global estimates. 
Sampling provides a natural mechanism for constructing such summaries while avoiding communication of the raw data itself.

However, local inaccuracies need not remain local. 
Even small distortions introduced by approximate samplers can accumulate when summaries are merged, reused, or incorporated into later stages of processing. 
This phenomenon is particularly relevant in relational settings involving repeated statistical estimation, joins, and self-join computations, where the output of one randomized procedure often becomes the input to another. 
Perfect samplers substantially reduce such accumulated bias and therefore preserve the fidelity of both local and global statistical summaries.

\paragraph{Information leakage and adaptive interaction.}
Approximate sampling probabilities may also reveal unintended information about the underlying dataset. 
Since the permitted multiplicative distortion can depend on global properties of the data, repeated observations of samples may allow an observer to infer information beyond the sampled index itself. 
For instance, suppose a sampler increases the total probability mass of a large set of coordinates by a factor of $(1+\eps)$ whenever a property $\calP$ holds and decreases it by $(1-\eps)$ otherwise. 
Then an observer could estimate whether $\calP$ holds simply by collecting sufficiently many samples and measuring aggregate frequencies.

Perfect $L_2$ samplers mitigate this issue because their output distributions differ from the ideal distribution only by an arbitrarily small additive amount. 
As a result, repeated interactions with the sampler reveal essentially no additional information beyond that inherent to the target distribution itself.

\section{Algorithmic and Technical Overview}
\label{sec:overview}
In this section, we provide an overview of our algorithmic framework and the technical novelties required to achieve polylogarithmic update time for perfect $L_2$ sampling, c.f., \Cref{fig:algorithm_overview}. 
To contextualize our approach, it is instructive to first review the mechanics and limitations of previous sampling frameworks before explaining how we overcome them.

\paragraph{Exponential random variables and duplication.}
The standard approach~\cite{JayaramW18,SwartworthWZ25,WoodruffXZ25} for perfect $L_p$ sampling involves scaling each stream coordinate $x_i$ by an inverse exponential random variable $e_i^{-1/p}$. 
Due to the max-stability property of exponential random variables~\cite{AndoniKO11}, the probability that coordinate $i\in[n]$ yields the largest scaled value $\frac{|x_i|}{e_i^{1/p}}$ is exactly $\frac{|x_i|^p}{\|x\|_p^p}$. 
To recover this maximum element in a data stream, the coordinate is identified using a heavy-hitters data structure such as $\CountSketch$, coupled with a statistical test to ensure the maximum is reliably isolated from the background noise. 
However, the probability of passing this test is heavily correlated with the identity of the maximum element itself, distorting the sampling probabilities by an $\Omega(1)$ multiplicative bias that destroys the ``perfect'' sampling guarantee. 

To handle this issue, \cite{JayaramW18} duplicated each coordinate $i\in[n]$ a total of $k = n^C$ times, scaling each duplicate by its own independent exponential random variable. 
This flattens the background noise so that the statistical test becomes nearly independent of the identity of the maximum element, recovering the desired perfect sampling distribution. 
Unfortunately, explicitly routing $n^C$ duplicate coordinates into the sketch requires a prohibitively high update time of $\Omega(n^C)$.

\begin{figure}[htbp]
\centering
\begin{mdframed}
\textbf{Algorithm Pipeline: Fast Perfect $L_2$ Sampling}
\begin{enumerate}
\item 
\textbf{Explicit Head:} Conceptually duplicate each update $n^C$ times, but explicitly generate only the top $\tau = \Theta(\log n)$ exponential weights.
\item 
\textbf{Tail Simulation:} 
Compress the remaining copies into a variance $V_{\mathrm{tail}}$, sampled via fast quadrature of its $1$-stable continuous CDF.
\item 
\textbf{Sketch Injection:} 
Update a dense Gaussian $\CountSketch$, injecting the head directly and the aggregate tail as a single $\calN(0, V_{\mathrm{tail}})$.
\item 
\textbf{Recovery:}
Run a statistical test and extract the maximum coordinate if it succeeds.
\end{enumerate}
\end{mdframed}
\vspace{-0.75em}
\caption{Overview of the framework, reducing update time to an expected $\poly(\log n)$.}
\label{fig:algorithm_overview}
\vspace{-1em}
\end{figure}

\paragraph{Algorithmic framework: explicit heads and simulated tails.}
Our goal is to preserve the exact statistical behavior of this duplication process while avoiding its explicit realization. 
Our approach is based on a simple structural observation: although the duplicated coordinates are numerous, only a small number contribute significantly to the sketch on any given update. 
We therefore separate the duplicated copies into two components. 
For a coordinate $i \in [n]$, the largest $\tau = \Theta(\log n)$ duplicated coordinates are generated and maintained explicitly (the ``head''), while the remaining $m = k - \tau$ coordinates are aggregated into a single random contribution (the ``tail'').

Crucially, this decomposition integrates seamlessly into the sketching framework. 
Following the approach of \cite{SwartworthWZ25}, we utilize a dense Gaussian $\CountSketch$. 
Because Gaussian random variables are exactly $2$-stable, the aggregate contribution of the uninstantiated tail duplicates in each sketch row behaves identically to a single Gaussian variable $\calN(0, V_{\mathrm{tail}})$. 
Rather than explicitly materializing the $m$ tail coordinates, we can compress their contribution into a single sampled variance parameter $V_{\mathrm{tail}}$. 
Thus, as detailed in \Cref{alg:update}, the algorithm maintains only the explicit heavy coordinates $\calH_i$, while the remaining copies are implicitly injected into the sketch through the sampled tail variance.

\begin{algorithm}[!htb]
\caption{Perfect $L_2$ stream update processing}
\label{alg:update}
\begin{algorithmic}[1]
\Require{Stream update $(i,\Delta)$}
\Ensure{Update the dense $\CountSketch$ associated with coordinate $i$}
\State{$r=\O{\log^2 n}$, $\tau=\O{\log n}$, $k=n^C$}
\State{Generate $(\calH_i,V_{\mathrm{tail}})$ using \Cref{alg:l2:generator}}
\For{each row $\ell\in[R]$}
\For{each $(j,Y_{(j)})\in\calH_i$}
\State{Draw $g_{\ell,i,j}\sim\calN(0,1)$}
\State{Update row $\ell$ by $\Delta\cdot g_{\ell,i,j}\cdot\sqrt{Y_{(j)}}$}
\EndFor
\State{Draw $g_{\ell,i,\mathrm{tail}}\sim\calN(0,1)$}
\State{Update row $\ell$ by $\Delta\cdot g_{\ell,i,\tail}\cdot\sqrt{V_{\mathrm{tail}}}$}
\EndFor
\end{algorithmic}
\end{algorithm}

The explicit heavy coordinates $\calH_i$ and the necessary truncation parameters can be generated efficiently using standard properties of exponential spacings, as shown in \Cref{alg:l2:generator}. 

\begin{algorithm}[!htb]
\caption{Generation of heavy coordinates and tail variance}
\label{alg:l2:generator}
\begin{algorithmic}[1]
\Require{Coordinate $i$, duplication count $k$, truncation parameter $\tau$ from \Cref{alg:update}}
\Ensure{$(\calH_i,V_{\mathrm{tail}})$}
\State{Seed a pseudorandom generator using the instance ID and coordinate $i$}
\State{Generate $\delta_1,\ldots,\delta_\tau\sim\Exp(1)$}
\State{Generate order statistics}
\[E_{D(j)}=E_{D(j-1)}+\frac{\delta_j}{k-j+1},\qquad E_{D(0)}=0\]
\State{Set $\calH_i=\left\{\left(j,\frac{1}{E_{D(j)}}\right)\right\}_{j\in[\tau]}$ and draw $U\sim\Unif(0,1)$}
\State{$m\gets k-\tau$ and $R\gets\frac{1}{E_{D(\tau)}\cdot m}$}
\State{Compute $z^*$ via inverse CDF using \Cref{alg:oracle:invert}}
\State{Set $V_{\mathrm{tail}}\gets \max\left(0,m(z^*+\ln m)\right)$}
\State{Return $(\calH_i,V_{\mathrm{tail}})$}
\end{algorithmic}
\end{algorithm}

\paragraph{Sampling the continuous tail distribution and the $L_2$ barrier.}
The core algorithmic challenge lies in generating the variance parameter $V_{\mathrm{tail}}$. 
We simulate the sum of the tail variables $V = \sum_{j=1}^m e_{i,j}^{-1}$ as a single random draw from its limiting continuous distribution, $\calZ$.
We perform this via inverse transform sampling (\Cref{alg:oracle:invert}): we draw $U \sim \Unif(0,1)$ and compute $z^*$ such that $F_{\calZ}(z^*) \approx U$, where $F_{\calZ}$ is the Cumulative Distribution Function (CDF) of the limiting tail.

\begin{algorithm}[!htb]
\caption{Inverse CDF via binary search}
\label{alg:oracle:invert}
\begin{algorithmic}[1]
\Require{$U \in (0,1)$, interval $[\ell,u]=[-\O{\log n}, \O{\log n}]$}
\Ensure{$z^*$ such that $F_{\mathcal Z}(z^*) \approx U$}
\While{$u-\ell > \frac{1}{n^{\O{C+c}}}$}
\State $m\gets\frac{\ell+u}{2}$
\If{$\mathrm{CDF}(m) < U$}
\Comment{$\mathrm{CDF}(x)$ denotes $F_{\mathcal Z}(x)$ evaluated via \Cref{alg:oracle}}
\State{$\ell \gets m$}
\Else
\State{$u \gets m$}
\EndIf
\EndWhile
\State{Draw $W\sim\Unif(0,1)$}
\State{\Return $\ell+W(u-\ell)$}
\end{algorithmic}
\end{algorithm}

To evaluate $F_{\calZ}(x)$ efficiently inside the binary search, we must rely on numerical inversion of the Gil-Pelaez integral. This requires obtaining the characteristic function of the tail distribution. 

This introduces the main technical barrier distinguishing perfect $L_2$ sampling from prior work. 
For $p < 2$~\cite{SwartworthWZ25}, the sum of the tail elements $V = \sum_{j=1}^m e_{i,j}^{-2/p}$ has a heavy tail that converges tightly to a $\frac{p}{2}$-stable distribution. 
Because $\frac{p}{2} < 1$, its characteristic function is well-behaved near the origin, allowing the CDF to be evaluated directly using standard quadrature in polylogarithmic time.
However, for the $p = 2$ regime, the variables $e_{i,j}^{-1}$ are inverse standard exponentials, which have infinite expectation. 
Consequently, as the number of tail items $m$ grows, the sum $V$ diverges at a rate of $m\ln m$ rather than converging to a standard stable limit. 
This is not merely a different analytic computation; the logarithmic divergence causes the limiting distribution to be exactly $1$-stable, where the characteristic function develops a non-integrable logarithmic singularity near the origin in the Fourier domain. 
Because standard numerical integration techniques require analytic or sufficiently smooth integrands, the singularity at this $1$-stable boundary causes classical numerical quadrature to either require exponential time to converge or fail outright.

\paragraph{Recentering and delaying the limit.}
To stabilize this divergence, we recenter the sum by defining the random variable $Z_m=\frac{1}{m}\sum_{j=1}^m e_{i,j}^{-1} - \ln m$.
Unlike the $p<2$ regime, where the limiting characteristic function possesses an integrable singularity that evaluates cleanly to an incomplete Gamma function, the $p=2$ boundary results in a $1$-stable limit with a non-integrable logarithmic singularity ($\sim 1/u$) at the origin in the Fourier domain. 
Na\"{i}vely recentering the random variable and taking the continuous limit fails because this non-integrable singularity prevents the exchange of the limit and the integral, breaking the standard numerical quadrature required for fast evaluation.

We overcome this barrier by \emph{delaying the limit} and performing analytic surgery at a finite tail size $m$. 
By considering the characteristic function $\Psi(t) = \Ex{e^{itZ_m}}$ for finite $m$, we observe that we can express its core integral as:
\[\int (e^{itu}-1)u^{-2}e^{-1/(um)}\,du=\int[(e^{itu}-1-itu)+itu]u^{-2}e^{-1/(um)}\,du.\]
By explicitly isolating the singular linear term $itu$, the extracted component $it\int \frac{1}{u}e^{-1/(um)}\,du$ evaluates exactly to an Exponential Integral ($E_1$). 
Using its asymptotic expansion via the Entire Exponential Integral ($\mathrm{Ein}$), $E_1(z)=-\gamma-\ln z+\mathrm{Ein}(z)$, it naturally generates a $+it\ln m$ phase shift. 
This algebraically annihilates the algorithmic $-it\ln m$ recentering penalty \emph{before} taking $m \to \infty$. 
Thus, delaying the limit completely removes the logarithmic divergence, leaving a smooth, bounded remainder defined by Sine and Cosine Integrals that permits fast Gauss-Legendre quadrature.

\paragraph{Trigonometric integrals and fast quadrature.}
After cancellation, the remaining regularized expression becomes $\int_0^R\frac{e^{itu}-1-itu}{u^2}\,du$, where $R$ is the truncation parameter for the magnitude of the tail elements. 
This differs substantially from the $p<2$ setting, which resolves into incomplete Gamma functions. 
Instead, applying integration by parts transforms the expression into $\frac{e^{itu}-1}{u}$, yielding real and imaginary parts $\int\frac{\cos(tu)-1}{u}\,du$ and $\int\frac{\sin(tu)}{u}\,du$.
These correspond precisely to the Cosine Integral $\mathrm{Cin}$ and Sine Integral $\mathrm{Si}$. 
Consequently, we obtain an explicit analytic form for the characteristic function at the $1$-stable boundary:
\[\Psi(t)=\exp\left(it(\ln R-\gamma)-\frac{e^{itR}-1-itR}{R}-it\,\mathrm{Cin}(tR)-t\,\mathrm{Si}(tR)\right).\]

With the explicit characteristic function derived, we can evaluate its CDF through the Gil-Pelaez inversion formula.
To achieve polylogarithmic update time, the infinite integration domain must be truncated to a logarithmic interval while introducing only negligible error. 
The magnitude of the characteristic function is governed by $\Re(\ln \Psi(t))=\frac{1-\cos(tR)}{R}-t\,\mathrm{Si}(tR)$. 
We show that the Sine Integral is the dominating term, yielding the global bound $|\Psi(t)|\le\exp\left(\frac{2}{R}-\kappa |t|\right)$ for a universal constant $\kappa>0$.
Using standard concentration and anti-concentration properties (c.f., R\'enyi's representation theorem for exponential spacings), we show the algorithmic truncation parameter satisfies $\frac{1}{R}=\Theta(\log n)$. 
This exponential decay guarantees that we can truncate the inversion integral to an interval of size $\O{\log n}$ while incurring only an additive error of $\frac{1}{\poly(n)}$ in the sampling distribution.

Finally, before applying numerical integration, we note that the standard Gil-Pelaez formula contains an imaginary-part operator. This introduces non-analytic behavior that prevents the direct use of exponentially convergent quadrature methods.
To circumvent this, we rewrite the integrand as $g(t)=\frac{e^{itx}\overline{\Psi(t)}-e^{-itx}\Psi(t)}{2it}$ and show that $g(t)$ extends holomorphically into the complex plane, allowing the integration contour to be enclosed within a Bernstein ellipse.
Because the domain integral is truncated at $R=\Theta\left(\frac{1}{\log n}\right)$ and the inversion integral at $t=\Theta(\log n)$, then $|tR|=\Theta(1)$. 
Thus, the trigonometric integrals do not exhibit rapid growth, enabling exponentially convergent Gauss-Legendre quadrature to evaluate the CDF using only $\O{\log^2 n}$ points.
We summarize the main differences between our approach and previous works in \Cref{tab:comparison}. 

\begin{table*}[htbp]
\centering
\renewcommand{\arraystretch}{1.1}
\setlength{\tabcolsep}{20pt} 
\begin{tabular}{@{} l l l @{}}
\toprule
\textbf{Property} & \textbf{Prior Work~\cite{SwartworthWZ25}} ($p < 2$) & \textbf{Our Work} ($p = 2$) \\
\midrule
\textbf{Tail Limit} & Convergent $(p/2)$-stable & Diverges $m \ln m$ ($1$-stable) \\
\textbf{Recentering?} & No & Yes ($-it\ln m$ shift penalty) \\
\textbf{Singularity} & Integrable near origin & Non-integrable log ($\sim 1/u$) \\
\textbf{Limit Eval.} & Direct continuous limit & Delayed limit \\
\textbf{Integral Form} & Incomplete Gamma ($\gamma$) & Exp/Trig ($E_1, \mathrm{Ein}, \mathrm{Si}, \mathrm{Cin}$) \\
\bottomrule
\end{tabular}
\vspace{0.75em}
\caption{The $p=2$ boundary introduces non-integrable singularities, breaking prior techniques and requiring a fundamentally different analytic toolkit.}
\label{tab:comparison}
\vspace{-1.5em}
\end{table*}

\section{Preliminaries}
\label{sec:preliminaries}
In this section, we briefly recall standard concepts that we will use in our algorithmic design and analysis. 

Given an integer $n>0$, we use $[n]$ to denote the set $\{1,2,\ldots,n\}$. 
Recall that the $L_2$ norm of a vector $x\in\mathbb{R}^n$ is defined by $\sqrt{x_1^2+\ldots+x_n^2}$ while the $F_2$ moment of $x$ is defined by $\|x\|_2^2$. 
We use $\poly(n)$ to denote an arbitrary but fixed polynomial in $n$, whose degree can be adjusted by appropriately adjusting corresponding constants. 
We use $\polylog(n)$ to denote $\poly(\log n)$. 
We say an event happens with high probability if it occurs with probability $1-\frac{1}{\poly(n)}$.

\begin{theorem}[Markov's Inequality]
For any non-negative random variable $X$ and any constant $t > 0$, we have $\PPr{X\ge t}\le\frac{\Ex{X}}{t}$. 
\end{theorem}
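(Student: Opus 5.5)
The plan is to prove the bound by integrating the tail of $X$ against $t$. Since $t>0$ and $X\ge 0$, we have the pointwise inequality
\[
t\cdot\mathbbm{1}[X\ge t]\le X .
\]
To see it, split into two cases. On the event $\{X\ge t\}$ the left-hand side equals $t$, and $t\le X$ there. On the complement the left-hand side is $0$, and $X\ge 0$ by assumption. Non-negativity of $X$ is needed only in this second case.

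Next I would take expectations of both sides. By monotonicity and linearity of expectation,
\[
t\cdot\PPr{X\ge t}=\Ex{t\cdot\mathbbm{1}[X\ge t]}\le\Ex{X}.
\]
Dividing by $t>0$ gives $\PPr{X\ge t}\le\frac{\Ex{X}}{t}$, which is the claim.

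The only point needing a little care is when $\Ex{X}=\infty$. In that case the inequality holds trivially, and monotonicity of expectation still holds for non-negative variables in the extended reals. So there is no genuine obstacle: the argument is the single pointwise comparison above.
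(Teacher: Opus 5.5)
Your proof is correct: the pointwise bound $t\,\mathbf{1}[X\ge t]\le X$, followed by taking expectations and dividing by $t>0$, is the standard argument, and your treatment of the case $\Ex{X}=\infty$ is fine. The paper states Markov's inequality in its preliminaries as a standard fact and gives no proof, so the only comparison available is that your argument is the usual one.
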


\subsection{Probability Distributions}

\begin{definition}[Exponential random variable]
A continuous random variable $E$ follows an exponential distribution with scale $\lambda>0$, denoted $E \sim \Exp(\lambda)$, if its probability density function is given by $f_E(x) = \lambda e^{-\lambda x}$ for $x \ge 0$, and $0$ otherwise. 
\end{definition}

\begin{fact}
\label{fact:exp}
The exponential distribution satisfies the properties:
\begin{enumerate}
\item 
(Memory-less property) 
For all $s,t \ge 0$, $\PPr{E>s+t\,\mid\,E>s}=\PPr{E>t}$. 
\item 
(Max-stability) 
If $E_1,\ldots,E_n$ are independent exponential random variables with rates $\lambda_1,\ldots,\lambda_n$, then $\max_i E_i$ has cumulative distribution function
\[\PPr{\max_i E_i \le t}=\prod_{i=1}^n \left(1-e^{-\lambda_i t}\right).\]
\end{enumerate}
\end{fact}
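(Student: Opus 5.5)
Both properties follow directly from the exponential density, so the plan is a pair of short computations with the tail function. From the definition of $\Exp(\lambda)$, integrating the density $f_E(x)=\lambda e^{-\lambda x}$ over $[s,\infty)$ gives the survival function $\PPr{E>s}=e^{-\lambda s}$ for $s\ge 0$, and equivalently the CDF $\PPr{E\le t}=1-e^{-\lambda t}$. Everything else is algebra with this formula.

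For the memoryless property, I would start from the definition of conditional probability. Since $t\ge 0$, the event $\{E>s+t\}$ is contained in $\{E>s\}$, so the intersection is just $\{E>s+t\}$. Then
\[\PPr{E>s+t\,\mid\,E>s}=\frac{\PPr{E>s+t}}{\PPr{E>s}}=\frac{e^{-\lambda(s+t)}}{e^{-\lambda s}}=e^{-\lambda t}=\PPr{E>t}.\]
The denominator $e^{-\lambda s}$ is strictly positive, so the conditioning is well defined.

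For max-stability, the key observation is the event identity $\{\max_i E_i\le t\}=\bigcap_{i=1}^n\{E_i\le t\}$. Independence of the $E_i$ then factors the probability of the intersection into $\prod_i \PPr{E_i\le t}$. Substituting the CDF of each $E_i$ gives $\prod_{i=1}^n(1-e^{-\lambda_i t})$ for $t\ge 0$. For $t<0$ both sides are $0$ under the convention that the CDF vanishes there, so the stated formula should be read for $t\ge 0$.

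There is no real obstacle. The only points needing care are checking that the conditioning event has positive probability and noting the $t\ge0$ convention in the max formula.
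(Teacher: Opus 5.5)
Your proof is correct: the paper states this as a standard fact without proof, and your computation is the standard justification. You use the survival function $\PPr{E>s}=e^{-\lambda s}$ and a conditional-probability ratio for memorylessness, and the identity $\{\max_i E_i\le t\}=\bigcap_i\{E_i\le t\}$ with independence for the maximum. One small correction to your $t\ge0$ remark: for $t<0$ the literal product $\prod_i(1-e^{-\lambda_i t})$ is not $0$ but a product of negative numbers, so the displayed formula genuinely requires $t\ge 0$ (for $t<0$ the CDF is simply $0$), rather than agreeing with it by convention.
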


\begin{definition}[Gaussian random variable]
A continuous random variable $g$ follows a standard Gaussian (or normal) distribution, denoted $g \sim \mathcal{N}(0,1)$, if its probability density function is $f_g(x) = \frac{1}{\sqrt{2\pi}} e^{-x^2/2}$. 
\end{definition}

\begin{definition}[Cauchy random variable]
A continuous random variable $C$ follows a standard Cauchy distribution if its probability density function is $f_C(x) = \frac{1}{\pi(1+x^2)}$. 
\end{definition}

\begin{definition}[$p$-stable]
A distribution $\mathcal{D}$ over $\mathbb{R}$ is called \emph{$p$-stable} for $0 < p \le 2$ if for any independent random variables $X_1,\ldots,X_k \sim \mathcal{D}$ and any constants $c_1,\dots,c_k \in \mathbb{R}$, we have
\[\sum_{j=1}^k c_j X_j\sim\left(\sum_{j=1}^k |c_j|^p\right)^{1/p} X,\]
where $X \sim \mathcal{D}$, so that $\sim$ denotes equality in distribution.
\end{definition}

\begin{fact}
The Cauchy distribution is $1$-stable, while the Gaussian distribution is $2$-stable. 
Moreover, if $g_1,g_2$ are independent standard Gaussian random variables, then $\frac{g_1}{g_2}$ follows the standard Cauchy distribution.
\end{fact}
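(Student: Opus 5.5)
The plan is to prove all three claims with characteristic functions and the uniqueness theorem: two real random variables with the same characteristic function $\varphi_X(t)=\Ex{e^{itX}}$ have the same distribution. The key identity is that, for independent $X_1,\ldots,X_k\sim\calD$ and constants $c_j$,
\[\varphi_{\sum_j c_jX_j}(t)=\prod_{j=1}^k\varphi_{\calD}(c_jt).\]
So $p$-stability of $\calD$ reduces to showing
\[\prod_{j=1}^k\varphi_{\calD}(c_jt)=\varphi_{\calD}\left(\left(\sum_j|c_j|^p\right)^{1/p}t\right)\quad\text{for all } t.\]

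\textbf{Gaussian case.} First compute $\varphi_g(t)=e^{-t^2/2}$. The standard route is to complete the square in $\frac{1}{\sqrt{2\pi}}\int e^{itx-x^2/2}\,dx$ and shift the contour; alternatively, differentiate under the integral and integrate by parts to get $\varphi'(t)=-t\varphi(t)$ with $\varphi(0)=1$. The product is then
\[\prod_j e^{-c_j^2t^2/2}=e^{-(\sum_j c_j^2)t^2/2},\]
which is the characteristic function of $(\sum_j c_j^2)^{1/2}g$. So the Gaussian is $2$-stable.

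\textbf{Ratio of Gaussians.} Rather than guess the Cauchy characteristic function directly, I compute the density of $g_1/g_2$ by a change of variables. Since $\PPr{g_2=0}=0$, conditioning on $g_2=y$ gives density $|y|\,f_g(xy)$ for the ratio, so
\[f_{g_1/g_2}(x)=\int_{-\infty}^{\infty}|y|\,f_g(xy)f_g(y)\,dy=\frac{1}{\pi}\int_0^\infty y\,e^{-(1+x^2)y^2/2}\,dy.\]
The last integral is elementary: substitute $s=y^2/2$ to get $\frac{1}{\pi(1+x^2)}$. This is exactly the standard Cauchy density, which proves the second claim.

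\textbf{Cauchy case.} This is the main, though mild, obstacle: obtaining $\varphi_C(t)=e^{-|t|}$. I would avoid residues and argue via Fourier inversion. A direct computation of $\int_{-\infty}^{\infty}e^{-|t|}e^{-itx}\,dt$, splitting at $0$, gives $\frac{2}{1+x^2}$. Since $e^{-|t|}$ is integrable, the inversion theorem says $e^{-|t|}$ is the characteristic function of the density $\frac{1}{2\pi}\cdot\frac{2}{1+x^2}=f_C(x)$. Then
\[\prod_j e^{-|c_jt|}=e^{-(\sum_j|c_j|)|t|},\]
which is the characteristic function of $(\sum_j|c_j|)C$, so the Cauchy distribution is $1$-stable. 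Uniqueness of characteristic functions finishes both stability claims.
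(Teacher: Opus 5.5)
The paper states this as a standard fact without proof, so there is no argument of its own to compare against. Your proof is correct and is the textbook verification: characteristic functions with the uniqueness theorem give both stability claims, and a direct change of variables gives the density of $g_1/g_2$.

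One point of precision in the Cauchy step. The probabilistic inversion theorem (an integrable characteristic function $\varphi$ yields the density $\frac{1}{2\pi}\int e^{-itx}\varphi(t)\,dt$) assumes $e^{-|t|}$ is already known to be a characteristic function. So ``$e^{-|t|}$ is integrable'' is not by itself the right hypothesis for your conclusion. Either of the following fixes it:
\begin{itemize}
\item Invoke the analytic $L^1$ Fourier inversion theorem. It needs both $e^{-|t|}$ and its transform $\frac{2}{1+x^2}$ to be integrable, and both clearly are.
\item Reinterpret your computation: the Laplace density $\frac{1}{2}e^{-|t|}$ has the integrable characteristic function $\frac{1}{1+x^2}$. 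Inverting it gives $\int_{-\infty}^{\infty} e^{-itx}\frac{dx}{\pi(1+x^2)}=e^{-|t|}$, and since $f_C$ is even this is exactly $\varphi_C(t)=e^{-|t|}$.
\end{itemize}
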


\subsection{Characteristic Functions and Inversion Formulas}
To analyze the continuous limits of the sums of random variable when explicit probability density functions become mathematically intractable, we instead use Fourier-domain transformations.

\begin{definition}[Characteristic function and cumulant-generating function]
The characteristic function of a random variable $X$, denoted $\Psi_X(t)$, is the expected value of $e^{itX}$, representing the Fourier transform of its probability measure. 
For any real $t \in \mathbb{R}$:
\begin{align*}
\Psi_X(t) = \Ex{e^{itX}} = \int_{-\infty}^{\infty} e^{itx} f_X(x) \, dx,
\end{align*}
where $f_X(x)$ is the probability density function of $X$, and $i = \sqrt{-1}$. 
The natural logarithm of the characteristic function, $\ln \Psi_X(t)$, is known as the cumulant-generating function.
\end{definition}
Unlike moment-generating functions, the characteristic function is globally bounded and uniquely exists for all random variables, including those with heavy tails (e.g., Cauchy distributions). 

\begin{theorem}[Gil-Pelaez Inversion Formula]
\label{thm:gil_pelaez}
Let $X$ be a continuous random variable have characteristic function $\Psi_X(t)$. 
Then the cumulative distribution function (CDF) $F_X(x) = \PPr{X \le x}$ can be computed for any real $x$ via the integral:
\begin{align*}
F_X(x) = \frac{1}{2} - \frac{1}{\pi} \int_0^\infty \frac{\Im\left(e^{-itx}\Psi_X(t)\right)}{t} \, dt,
\end{align*}
where $\Im(\cdot)$ denotes the imaginary component of a complex number.
\end{theorem}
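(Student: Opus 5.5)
The plan is the classical Fourier-analytic proof of the Gil-Pelaez formula. I would first establish a truncated inversion identity for the distribution function, then pass to the limit using an integrability property of the characteristic function. Since $X$ is continuous, $F_X$ has no atoms, so boundary terms at the point $x$ cause no trouble.

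\textbf{Step 1 (symmetric truncation).} Since $\Im(w) = (w-\bar w)/(2i)$ and $\overline{\Psi_X(t)} = \Psi_X(-t)$, the integrand $t \mapsto \Im(e^{-itx}\Psi_X(t))/t$ is even. Hence for $0<\epsilon<T$ it suffices to study
\[
I_{\epsilon,T} = \int_{\epsilon}^{T} \frac{e^{itx}\Psi_X(-t) - e^{-itx}\Psi_X(t)}{2it}\, dt .
\]
Writing $\Psi_X(t) = \Ex{e^{itX}}$, I would apply Fubini; this is legitimate because the integrand is bounded by $1/t$ on $[\epsilon,T]$. The result is
\[
I_{\epsilon,T} = \Ex{ -\int_\epsilon^T \frac{\sin(t(X-x))}{t}\, dt } .
\]

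\textbf{Step 2 (limit).} Let $\epsilon\to 0$ and $T\to\infty$. We have
\[
\int_0^T \frac{\sin(t a)}{t}\,dt = \mathrm{sgn}(a)\,\Si(T|a|),
\]
which is bounded uniformly in $T$ and $a$ (by $\Si(\pi)$) and tends to $\frac{\pi}{2}\mathrm{sgn}(a)$. By dominated convergence, the limit of $I_{\epsilon,T}$ is
\[
-\frac{\pi}{2}\,\Ex{\mathrm{sgn}(X-x)} = -\frac{\pi}{2}\left(1-2F_X(x)\right),
\]
where the last equality uses $\PPr{X=x}=0$. Rearranging gives
\[
F_X(x) = \frac12 - \frac{1}{\pi}\lim \int_\epsilon^T \frac{\Im(e^{-itx}\Psi_X(t))}{t}\,dt .
\]

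\textbf{Step 3 (improper integral is meaningful).} The main subtlety is that the formula is written as $\int_0^\infty$, whereas what Step 2 actually gives is a limit of truncations. Near $0$, the integrand is bounded whenever $X$ has a finite first absolute moment, since $|\Im(e^{-itx}\Psi_X(t))|\le |t|\,\Ex{|X-x|}$. In general, however, it is only defined as an improper limit. I would therefore interpret the integral as the improper Riemann limit established above, which is the standard meaning of the formula. For the applications in this paper, $|\Psi(t)|\le \exp(2/R-\kappa|t|)$ and the distribution has a smooth characteristic function near $0$, so there the integral even converges absolutely. I expect justifying the exchange of limit and expectation in Step 2 to be the main obstacle, and the uniform bound on $\Si$ resolves it.
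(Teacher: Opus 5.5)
Your proof is correct. There is nothing in the paper to compare it against, because the paper states this result in the preliminaries without proof, as the classical Gil-Pelaez theorem.

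What you give is the standard argument: Fubini on $[\epsilon,T]$, then the evaluation $\int_\epsilon^T \frac{\sin(ta)}{t}\,dt=\mathrm{sgn}(a)\bigl(\Si(T|a|)-\Si(\epsilon|a|)\bigr)$. This lies in $[-\Si(\pi),\Si(\pi)]$ because $0\le\Si\le\Si(\pi)$ on $[0,\infty)$, so bounded convergence applies. Finally, $\PPr{X=x}=0$ gives $\Ex{\mathrm{sgn}(X-x)}=1-2F_X(x)$. The pointwise limit holds as $\epsilon\to0$ and $T\to\infty$ independently, so the argument genuinely establishes the improper integral. That is the right reading of $\int_0^\infty$ here.

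One point of presentation: the evenness of the integrand is not what justifies passing to $I_{\epsilon,T}$. What you actually use is $\Im(w)=(w-\bar w)/(2i)$, which gives $I_{\epsilon,T}=-\int_\epsilon^T \Im\left(e^{-itx}\Psi_X(t)\right)/t\,dt$. This is the sign you rely on when rearranging in Step 2, and it would be clearer to state it explicitly.

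Your Step 1 form is exactly the rewriting the paper uses for its oracle in \Cref{sec:oracle}. Your Step 3 remark is also correct for $\calZ$. Since $\Psi$ is entire, the integrand has a removable singularity at $0$. The decay in \Cref{lem:truncation} makes $|\Psi|$ integrable, which also guarantees that $\calZ$ has a density and so satisfies the continuity hypothesis.
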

A standard approach for generating random samples from a target distribution with continuous and monotonically increasing CDF $F(x)$ is inverse transform sampling, which is simply to generate a uniform random variable $U \sim \Unif(0,1)$ and then compute $X = F^{-1}(U)$. 

\subsection{Statistical Distances}
To quantify the discrepancy between an algorithmic output distribution and the ideal target distribution, we utilize standard statistical distance metrics.

\begin{definition}[Kolmogorov-Smirnov Distance]
The Kolmogorov-Smirnov distance (KS) is the maximum deviation between the Cumulative Distribution Functions $F(x)$ and $G(x)$ of two probability distributions, i.e., $d_{\text{KS}}(F, G) = \sup_{x \in \mathbb{R}} |F(x) - G(x)|$.
\end{definition}

\begin{definition}[Total variation distance]
Let $P$ and $Q$ be probability distributions on $\mathbb{R}$ with probability density functions $p$ and $q$. 
Then the total variation distance is defined by $d_{\mathrm{TV}}(P,Q)=
\frac{1}{2} \int_{\mathbb{R}} |p(x) - q(x)| \, dx$.
\end{definition}

\subsection{Real Analysis and Special Functions}

Evaluating continuous limits mathematically frequently yields classical non-elementary integrals and requires rigorously interchanging limits and integrals.

\begin{theorem}[Dominated Convergence Theorem]
\label{thm:dominated:convergence}
Let $\{f_n(x)\}$ be a sequence of measurable functions converging pointwise to a limit function $f(x)$. 
If there exists an absolutely integrable bounding function $g(x)$ such that $|f_n(x)| \le g(x)$ for all $n$ and $x$, then the limit and integral operators may be exchanged:
\begin{align*}
\lim_{n \to \infty} \int f_n(x) \, dx = \int f(x) \, dx.
\end{align*}
\end{theorem}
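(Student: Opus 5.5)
The plan is to derive the Dominated Convergence Theorem (\Cref{thm:dominated:convergence}) from Fatou's lemma, which I take as a standard foundational fact of Lebesgue integration. It follows from the monotone convergence theorem applied to $\inf_{k\ge n} h_k$ for nonnegative measurable $h_k$. First I establish that the limit $f$ is well-behaved. Since each $f_n$ is measurable and $f_n\to f$ pointwise, $f=\lim f_n=\limsup f_n$ is measurable. Passing to the limit in $|f_n(x)|\le g(x)$ gives $|f(x)|\le g(x)$, so $f$ is absolutely integrable. Each $f_n$ is integrable for the same reason. Hence all integrals in the statement are finite real numbers, and subtracting $\int g\,dx$ from both sides of an inequality is legitimate.

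The key step is to apply Fatou's lemma to the two nonnegative sequences $g+f_n\ge 0$ and $g-f_n\ge 0$, both of which follow from the domination $|f_n|\le g$. For the first sequence, Fatou gives
\[\int (g+f)\,dx=\int \liminf_{n}(g+f_n)\,dx\le \liminf_{n}\int (g+f_n)\,dx=\int g\,dx+\liminf_n\int f_n\,dx.\]
Cancelling the finite quantity $\int g\,dx$ yields $\int f\,dx\le\liminf_n\int f_n\,dx$. For the second sequence, the same argument gives
\[\int g\,dx-\int f\,dx\le \int g\,dx-\limsup_n\int f_n\,dx,\]
that is, $\limsup_n\int f_n\,dx\le\int f\,dx$.

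Combining the two inequalities gives
\[\limsup_n\int f_n\,dx\le\int f\,dx\le\liminf_n\int f_n\,dx,\]
so the limit exists and equals $\int f\,dx$, which is the claimed exchange of limit and integral. If one wants the slightly stronger statement $\int|f_n-f|\,dx\to 0$, the same argument applies verbatim to $|f_n-f|$, which is dominated by $2g$ and tends to $0$ pointwise.

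The only real obstacle is the reliance on Fatou's lemma, or more fundamentally on monotone convergence, which encodes the construction of the Lebesgue integral. If required, I would prove monotone convergence by approximating from below with simple functions and using countable additivity of the measure. The remaining care points are routine. The cancellation of $\int g\,dx$ is legitimate precisely because $g$ is integrable, and this is why the hypothesis that the bounding function $g$ be absolutely integrable cannot be dropped.
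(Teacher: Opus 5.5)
Your argument is correct. It is the standard derivation of dominated convergence: apply Fatou's lemma to the nonnegative sequences $g+f_n$ and $g-f_n$. You also correctly note that the integrability of $g$ is what makes cancelling $\int g\,dx$ legitimate.

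The paper gives no proof of this theorem. It records it in the preliminaries as a standard background fact and uses it once, in the proof of \Cref{thm:limit_cf}. There it moves the limit $m\to\infty$ inside $\int_0^R \frac{e^{itu}-1-itu}{u^2}e^{-1/(um)}\,du$, using the constant dominating function $t^2/2$. So there is no competing route to compare against.

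Two small remarks relative to that application:
\begin{itemize}
\item Your proof is written for real-valued functions, since it relies on $\liminf$, $\limsup$ and order. The paper's integrand is complex-valued, so you should add one line: apply the real case separately to $\Re f_n$ and $\Im f_n$, each of which is dominated by $|f_n|\le g$.
\item In that application the domain $[0,R]$ is bounded and the dominating function is constant. The bounded convergence theorem would therefore already suffice, so the full generality you establish is more than the paper actually needs.
\end{itemize}
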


\begin{definition}[Special Integral Functions]
\label{def:special:integrals}
The mathematical evaluation of limits frequently isolates the following classical analytic functions:
\begin{align*}
\text{Exponential Integral:} \quad & E_1(z) = \int_z^\infty \frac{e^{-v}}{v} \, dv, \quad (z > 0) \\
\text{Entire Exponential Integral:} \quad & \Ein(z) = \int_0^z \frac{1-e^{-v}}{v} \, dv = \sum_{k=1}^\infty \frac{(-1)^{k-1} z^k}{k \cdot k!} \\
\text{Sine Integral:} \quad & \Si(x) = \int_0^x \frac{\sin(u)}{u} \, du \\
\text{Cosine Integral:} \quad & \Cin(x) = \int_0^x \frac{1 - \cos(u)}{u} \, du
\end{align*}
\end{definition}

\subsection{Order Statistics}

\begin{definition}[Order statistics and anti-ranks]
Given a sample of $k$ independent and identically distributed random variables $X_1, X_2, \ldots, X_k$, their order statistics are the sequence of values sorted in non-decreasing order, so that $X_{D(1)} \le X_{D(2)} \le \ldots \le X_{D(k)}$. 
Here $D(1),\ldots,D(k)$ are called the anti-ranks for $X$. 
\end{definition}

\begin{theorem}[R\'{e}nyi Representation Theorem for Exponential Spacings]
\label{thm:renyi}
\cite{Nagaraja06}
Let $E_{(1)} \le E_{(2)} \le \dots \le E_{(k)}$ be the order statistics of $k$ independent standard exponential random variables $E_i \sim \Exp(1)$. 
The $j$-th order statistic is equal in distribution to 
\begin{align*}
E_{D(j)} \sim \sum_{\ell=1}^j \frac{\delta_\ell}{k - \ell + 1},
\end{align*}
i.e., a linear combination of independent standard exponentials $\delta_1, \ldots, \delta_j \sim \Exp(1)$. 
\end{theorem}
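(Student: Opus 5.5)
The statement just given is R\'enyi's representation theorem, quoted from \cite{Nagaraja06}. I would prove it directly from the memoryless property in \Cref{fact:exp}, and argue jointly for the whole vector of order statistics rather than coordinatewise. The aim is to show that the spacings
\[S_\ell = E_{D(\ell)}-E_{D(\ell-1)},\qquad E_{D(0)}=0,\]
are independent, with $S_\ell\sim\Exp(k-\ell+1)$; equivalently, $(k-\ell+1)S_\ell=\delta_\ell$ are i.i.d.\ $\Exp(1)$. Summing the first $j$ spacings then gives $E_{D(j)}=\sum_{\ell=1}^j \frac{\delta_\ell}{k-\ell+1}$ in distribution. The identity in fact holds jointly in $j$, which is what \Cref{alg:l2:generator} uses.

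\textbf{Step 1 (the minimum).} By independence,
\[\PPr{\min_i E_i>t}=\prod_{i=1}^k e^{-t}=e^{-kt},\]
so $E_{D(1)}\sim\Exp(k)$, which matches $\delta_1/k$.

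\textbf{Step 2 (inductive restart).} Condition on the identity of the minimizer and on $E_{D(1)}=s$. The remaining $k-1$ variables are then i.i.d.\ exponentials conditioned to exceed $s$. By the memoryless property, their overshoots $E_i-s$ are i.i.d.\ $\Exp(1)$, independent of $s$ and of which index attained the minimum. The next $k-1$ order statistics minus $s$ are therefore the order statistics of $k-1$ fresh standard exponentials. Induction on $k$ then gives that $S_2,\dots,S_k$ are independent of $S_1$ and distributed as $\Exp(k-1),\Exp(k-2),\dots$.

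\textbf{Main obstacle.} The one delicate point is making this conditioning rigorous for continuous variables, since we condition on $E_{D(1)}=s$, an event of probability zero. I would bypass it with a direct density computation. The joint density of the order statistics is
\[k!\,e^{-\sum_\ell y_\ell}\quad\text{on } 0<y_1<\dots<y_k .\]
Change variables to the spacings $s_\ell=y_\ell-y_{\ell-1}$; this is a unit-Jacobian linear map. Since
\[\sum_\ell y_\ell=\sum_\ell (k-\ell+1)s_\ell,\]
the density becomes
\[\prod_{\ell=1}^k (k-\ell+1)\,e^{-(k-\ell+1)s_\ell}\quad\text{on } s_\ell>0 .\]
This factors into independent $\Exp(k-\ell+1)$ densities, which completes the proof in one line and makes Step 2 a heuristic explanation only.
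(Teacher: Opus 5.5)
The paper gives no proof of this statement; it imports it from \cite{Nagaraja06}, so there is no internal argument to compare yours against. Your density computation is a correct, self-contained proof. The map $s_\ell=y_\ell-y_{\ell-1}$ is lower-triangular with unit diagonal, and it carries $\{0<y_1<\dots<y_k\}$ bijectively onto the open positive orthant. Combined with $\sum_\ell y_\ell=\sum_\ell (k-\ell+1)s_\ell$ and $k!=\prod_{\ell=1}^k (k-\ell+1)$, the joint density factors into independent $\Exp(k-\ell+1)$ densities. Here $\Exp(\lambda)$ is read in the rate sense, which matches the density $\lambda e^{-\lambda x}$ in the paper's definition. Your Step 2 is fine as intuition, and you are right to let the density argument carry the proof rather than the measure-zero conditioning.

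What your route buys over the bare citation is the joint form of the result. As written, the theorem only asserts that each $E_{D(j)}$ individually has the law of $\sum_{\ell\le j}\delta_\ell/(k-\ell+1)$. That marginal statement suffices for \Cref{lem:bounds}, which uses only $j=\tau$. However, \Cref{alg:l2:generator} builds all of $E_{D(1)},\dots,E_{D(\tau)}$ as cumulative sums of the same $\delta_\ell$'s. Certifying that the generated head has the correct joint distribution requires the independence of the normalized spacings $(k-\ell+1)S_\ell$, which is exactly what you prove.
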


\begin{lemma}[Lemma 2 in \cite{JayaramW18}]
\label{lem:hidden:exps}
Let $x\in\mathbb{R}^n$ and define $z_{i,j}=\frac{x_i}{e_{i,j}^{1/p}}$, where the variables $e_{i,j}$ are independent exponential random variables and $j\in[n^c]$. 
Set $N=n^{c+1}$, and let $z\in\mathbb{R}^{N}$ denote a flattened vector obtained from the collection of values $\{z_{i,j}\}$. 
Similarly, define $F\in\mathbb{R}^{N}$ to be the duplicated vector obtained by repeating each coordinate of $x$ exactly $n^c$ times. 
Then for every integer $k\in[1,N-n^{9\zeta/10})$, with probability at least $1-\O{e^{-n^{c/3}}}$
we have
\[\left|z_{D(k)}\right|=\left[\left(1\pm\O{n^{-\zeta/10}}\right)\cdot\sum_{\tau=1}^{k}\frac{E_{\tau}}{\EEx{}{\sum_{j=\tau}^{N}|F_{D(j)}|^p}}\right]^{-1/p}.
\]
\end{lemma}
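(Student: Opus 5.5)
We prove \Cref{lem:hidden:exps} by combining R\'enyi's representation of exponential order statistics (\Cref{thm:renyi}), extended to non-identical rates, with concentration of the random normalizers.

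\textbf{Step 1: reduce to exponentials with rates.} Note that $|z_{i,j}|^{-p}=e_{i,j}/|x_i|^p$. This is an exponential random variable with rate $\lambda_{i,j}=|F_{(i,j)}|^p$. The largest $|z|$ values correspond to the smallest of these $N$ independent exponentials, so $|z_{D(k)}|^{-p}$ is the $k$-th smallest of $N$ independent exponentials with rates $\lambda$.

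\textbf{Step 2: exact representation.} By memorylessness (\Cref{fact:exp}), the minimum of independent exponentials with rates $\lambda$ is $\Exp(\sum\lambda)$, and the index achieving it is $j$ with probability $\lambda_j/\sum\lambda$. After removing that index, the remaining variables are again independent exponentials with the same rates. Iterating gives
\[|z_{D(k)}|^{-p}=\sum_{\tau=1}^k \frac{E_\tau}{\sum_{j=\tau}^N |F_{D(j)}|^p},\]
where the $E_\tau$ are i.i.d.\ $\Exp(1)$ independent of the anti-ranks. This is the rate-weighted generalization of \Cref{thm:renyi}. It remains to replace each random denominator $S_\tau=\sum_{j\ge\tau}|F_{D(j)}|^p$ by its expectation, at a cost of a factor $1\pm\O{n^{-\zeta/10}}$.

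\textbf{Step 3: concentration of $S_\tau$.} Write $S_\tau=\|F\|_p^p-\sum_{j<\tau}|F_{D(j)}|^p$, which is the total mass minus the mass removed by $\tau-1$ size-biased draws without replacement. Since every $x_i$ is duplicated $n^c$ times, each single item has rate at most a $1/n^c$ fraction of the total mass. Consequently the removed mass behaves like a sum of bounded increments. We control it by fixing the $D(j)$ sequentially and forming a Doob martingale $M_\tau=\sum_{j<\tau}\bigl(|F_{D(j)}|^p-\Ex{|F_{D(j)}|^p\mid\text{past}}\bigr)$. Its increments are bounded by $\max_i|x_i|^p$, and Azuma--Bernstein bounds it by $\sqrt{\tau}\max|x_i|^p\cdot n^{c/6}$ except with probability $e^{-n^{c/3}}$.

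We then argue that the conditional means are within a factor $1\pm n^{-\zeta/10}$ of the deterministic recursion that defines $\Ex{S_\tau}$. The restriction $\tau\le N-n^{9\zeta/10}$ guarantees that at least $n^{9\zeta/10}$ items remain. Because the draws are size-biased, the remaining mass $S_\tau$ is, up to a polynomial factor, at least $n^{9\zeta/10}$ times the typical remaining item weight. This makes the martingale deviation relatively small, of order $n^{-\zeta/10}$. We then take a union bound over all $\tau\le k\le N=\poly(n)$, which is absorbed into $e^{-n^{c/3}}$.

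\textbf{Step 4: conclude.} With $S_\tau=(1\pm\O{n^{-\zeta/10}})\Ex{S_\tau}$ holding uniformly in $\tau$, every term of the nonnegative sum changes by the same relative factor. Hence the whole sum changes by that factor, and taking the $-1/p$ power gives the claim.

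The main obstacle is Step 3. The relative concentration of $S_\tau$ must hold uniformly, including when few items remain or when the rates are very skewed. I would handle the skewed case by grouping items by their value $x_i$. Within each group the duplicates are exchangeable, so the removal count from group $i$ is a hypergeometric-like quantity. We would show that each group's remaining count concentrates multiplicatively whenever that count is at least about $n^{9\zeta/10}$. Groups whose remaining count is smaller would be bounded as a negligible fraction of $S_\tau$.
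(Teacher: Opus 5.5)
\paragraph{Review.}
The paper gives no proof of this statement; it imports it from \cite{JayaramW18}, so your argument has to stand on its own. Steps 1, 2 and 4 are the standard reduction and are fine: $|z_{i,j}|^{-p}\sim\Exp(|x_i|^p)$, the rate-weighted R\'enyi representation holds with normalized spacings $E_\tau$ that are i.i.d.\ and independent of the anti-ranks, and a uniform relative perturbation of the denominators passes to the sum.

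The gap is Step 3, and it is more than a missing detail. The claim $S_\tau=(1\pm\O{n^{-\zeta/10}})\Ex{S_\tau}$ for all $\tau<N-n^{9\zeta/10}$ is false for general $x\in\mathbb{R}^n$, and your heuristic runs the wrong way. Size-biased removal eliminates heavy copies first, so the survivors are numerous but mostly light, while the surviving mass can sit on a few heavy copies.

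Concretely, let $|x_1|^p=(n-1)n^{c}$ and $x_2=\cdots=x_n=1$, so all light copies together weigh as much as one heavy copy. If $Q$ of the first $n^c-1$ removals are light, then exactly $1+Q$ heavy copies survive and $S_{n^c}=|x_1|^p(2+Q)-Q$. While $h$ heavy copies survive, the next removal is light with probability about $\frac{1}{h+1}$. So $Q$ has mean $\Theta(c\log n)$ and standard deviation $\Theta(\sqrt{\log n})$, and $S_{n^c}$ has relative fluctuations of order $(\log n)^{-1/2}$. This happens even though $n^c$ is far below $N-n^{9\zeta/10}$ (reading $\zeta=c$). Your Azuma bound is vacuous here, since $\sqrt{\tau}\max_i|x_i|^p\gg\Ex{S_\tau}$. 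Your fallback fails at exactly this point, because the group with few survivors is the one carrying the mass.

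The fluctuation even survives summation. The last $\Theta(\log n)$ denominators share the fluctuation of $Q$, so $|z_{D(n^c)}|^{-p}$ deviates from $\sum_\tau E_\tau/\Ex{S_\tau}$ by a relative $\Omega((\log n)^{-3/2})$ with constant probability. The restriction on $\tau$ has a second blind spot: it counts copies of zero coordinates, which carry no mass, so it does not guarantee many surviving nonzero copies.

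\paragraph{What is needed.}
The lemma in the generality written here (arbitrary real $x$) therefore cannot be proved, and any argument must use extra structure. Step 3 needs $S_\tau$ to be spread over many copies of comparable weight. Two conditions suffice:
\begin{itemize}
\item all nonzero $|x_i|^p$ lie within a factor $n^{c/10}$ of each other, which holds for integer entries of magnitude at most $n^{\alpha}$ once $c\ge 10\alpha p$;
\item at least $n^{9c/10}$ nonzero copies survive.
\end{itemize}
Then each survivor is an $n^{-\Omega(c)}$ fraction of $S_\tau$.

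Even under these conditions your martingale step does not close. It controls the removed mass only around its compensator $\sum_{j<\tau}\Ex{|F_{D(j)}|^p\mid\text{past}}$, which is random. Its summands $\sum\lambda^2/\sum\lambda$ over the surviving configuration are nonlinear in that configuration, so $\Ex{S_\tau}$ satisfies no closed deterministic recursion.

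A cleaner route is Poissonization. At a fixed time $t$, the per-class survivor counts are independent $\mathrm{Bin}(n^c,e^{-|x_i|^pt})$. So the surviving mass and the removed count both concentrate by Bernstein on a fine grid of times. You then pass to the rank index using monotonicity of the count process, since a window of $\Delta$ ranks changes the mass by at most $\Delta\max_i|x_i|^p$.
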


\subsection{Streaming Algorithms}
We use a dense variant of $\CountSketch$ as the basic estimation primitive throughout our construction, described in \Cref{alg:dense:countsketch}. 
Unlike sparse hashing-based sketches, dense $\CountSketch$ projects the frequency vector onto independent Gaussian directions, allowing each coordinate estimate to be expressed as an inner product with a corresponding Gaussian column. 
This representation yields uniform additive error guarantees across all coordinates, with the error proportional to the global $L_2$ norm of the stream. 
The dense structure is particularly convenient in our setting because the Gaussian random variables interacts naturally with the tail noise introduced by the perfect sampling procedure.

\begin{algorithm}[!htb]
\caption{Dense $\CountSketch$}
\label{alg:dense:countsketch}
\begin{algorithmic}[1]
\Require{Frequency vector $x\in\mathbb{R}^n$}
\Ensure{Estimates of $x_i$ for all $i\in[n]$}
\State{Let $G\in\mathbb{R}^{R\times n}$ be a matrix where each entry is drawn from $\calN(0,1)$}
\State{Let $G_i$ denote the $i$-th column of $G$}
\State{\Return $\widehat{x_i}=\frac{1}{R}\cdot\langle G_i,Gx\rangle$ as the estimate for $x_i$}
\end{algorithmic}
\end{algorithm}
Recall the following guarantees for dense $\CountSketch$ in \Cref{alg:dense:countsketch}, e.g., from \cite{SwartworthWZ25}, by taking the median of $\O{\log n}$ groups of tables with $r$ rows:
\begin{theorem}
\label{thm:dense:countsketch}
\cite{SwartworthWZ25}
With probability at least $1-\frac{1}{\poly(n)}$, the dense $\CountSketch$ data structure with $\O{r\log n}$ rows returns estimates $\widehat{x_1},\ldots,\widehat{x_n}$ satisfying
\[|\widehat{x_i}-x_i|\le \frac{4}{\sqrt{r}}\|x\|_2\]
for every $i\in[n]$. 
The total space complexity is $\O{r\log^2 n}$ bits.
\end{theorem}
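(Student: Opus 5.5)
The statement is the dense $\CountSketch$ guarantee, \Cref{thm:dense:countsketch}. I will analyze a single table with $r$ rows, show that it gives the $\frac{4}{\sqrt r}\|x\|_2$ bound for a fixed coordinate with constant probability, and then boost by taking medians over $\O{\log n}$ independent tables followed by a union bound over $i\in[n]$.

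\textbf{Single-table error.} Fix a table $G\in\mathbb{R}^{r\times n}$ with i.i.d.\ $\calN(0,1)$ entries and a coordinate $i$. Write
\[\widehat{x_i}=\frac1r\langle G_i,Gx\rangle=\frac1r\|G_i\|_2^2\,x_i+\frac1r\sum_{j\ne i}x_j\langle G_i,G_j\rangle.\]
The first term has mean $x_i$, and its deviation is $x_i\cdot\frac1r(\|G_i\|_2^2-r)$, which has variance $\frac{2}{r}x_i^2$.

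Now condition on $G_i$. By $2$-stability, the second term is distributed as $\frac1r\|G_i\|_2\,\|x_{-i}\|_2\,g$ with $g\sim\calN(0,1)$, so its conditional variance is $\frac{\|G_i\|_2^2}{r^2}\|x_{-i}\|_2^2$. Its unconditional variance is therefore $\frac1r\|x_{-i}\|_2^2$.

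The two terms are uncorrelated: conditioned on $G_i$, the cross term has mean zero. Hence
\[\Ex{(\widehat{x_i}-x_i)^2}\le\frac{2}{r}x_i^2+\frac1r\|x_{-i}\|_2^2\le\frac{2}{r}\|x\|_2^2.\]
By Chebyshev's inequality (Markov applied to the square),
\[\PPr{|\widehat{x_i}-x_i|>\frac{4}{\sqrt r}\|x\|_2}\le\frac{2}{16}=\frac18<\frac13.\]

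\textbf{Median amplification.} Take $T=\O{\log n}$ independent tables and output the median of their estimates. The median fails only if at least $T/2$ of the tables fail. Since each table fails independently with probability at most $\frac18$, a Chernoff bound makes this probability $e^{-\Omega(T)}\le n^{-c-1}$ for a suitably large constant in $T$. A union bound over $i\in[n]$ then gives simultaneous correctness for all coordinates with probability $1-\frac1{\poly(n)}$. This construction uses $rT=\O{r\log n}$ rows in total.

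\textbf{Space.} Each of the $\O{r\log n}$ stored sketch entries is a linear combination of stream values with Gaussian weights. I will discretize these weights to precision $1/\poly(n)$ and bound them by $\O{\sqrt{\log n}}$, which holds with high probability. With polynomially bounded updates, each entry then fits in $\O{\log n}$ bits, giving $\O{r\log^2 n}$ bits in total.

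The Gaussian matrix itself is not stored. It is generated from a seed, as in \Cref{alg:l2:generator}, or via a standard derandomization. I expect this derandomization and rounding step to be the main obstacle to stating the space bound cleanly. I would argue it in either of two ways: with limited independence, since the analysis above only uses second moments within a table, or with a Nisan-style PRG and the standard argument that the estimates are computed by a small-space tester. In either case the discretization perturbs every estimate by at most $\frac1{\poly(n)}\|x\|_2$, which is absorbed into the constant $4$.
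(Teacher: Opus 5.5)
Your argument is correct and follows the route the paper indicates. The paper gives no proof of its own: it cites \cite{SwartworthWZ25} and notes only that the guarantee comes from taking the median of $O(\log n)$ independent tables with $r$ rows each, and your per-table variance bound $\frac{2}{r}\|x\|_2^2$, Chebyshev, median amplification and union bound fill this in. One small precision on your limited-independence remark: the second moment of $\widehat{x_i}-x_i$ is quartic in the Gaussian entries, so you need $4$-wise independence and a direct moment computation in place of the $2$-stability step, though the variance bound comes out the same.
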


\section{Analysis of the Continuous Tail Distribution}
\label{sec:tail}
\Cref{alg:oracle:invert} replaces the explicit contribution of the $m=k-\tau$ tail coordinates with a single sample from a limiting continuous distribution $\mathcal{Z}$. 
In this section, we formalize this limiting object and derive the normalization needed to preserve the induced variance distribution. 
The main challenge is that the reciprocal exponential variables governing the tail are heavy-tailed: before truncation they have infinite expectation, and even after conditioning on the $\tau$-th order statistic their moments exhibit logarithmic growth, scaling as $\O{\log m}$.
As a result, standard concentration arguments and classical central limit heuristics do not apply directly.

Our analysis proceeds in three steps. 
First, we characterize the conditioning induced by removing the largest $\tau$ coordinates and show that the remaining variables admit a simple truncated inverse-exponential description. 
Second, we identify the natural normalization under which the aggregate tail contribution converges to a limiting continuous distribution. 
Unlike Gaussian limits arising from finite-variance settings, the logarithmic divergence of the reciprocal exponential variables leads to a $1$-stable phenomenon, requiring a centering correction of $\ln m$. 
Finally, we derive an explicit representation of the limiting characteristic function and show how it can be evaluated efficiently for use inside the inverse-CDF sampling procedure.

At a high level, the proof may be viewed as an analytic surgery argument. 
Directly evaluating the characteristic function produces a non-integrable $\frac{1}{u}$ singularity near the origin, preventing a straightforward interchange of limits and integration. 
We isolate the singular component, show that it exactly cancels the logarithmic centering term, and then evaluate the remaining regularized expression using special-function identities. 
This converts the limiting law into an explicit form suitable for numerical inversion.

\subsection{Base Variables and Conditioning Shift}

By the memory-less property of the exponential distribution, c.f., \Cref{fact:exp}, conditioning on the $\tau$-th order statistic $E_{D(\tau)}$ implies that the remaining $m$ variables are distributed as independent exponentials shifted by $E_{D(\tau)}$. 
Equivalently, in the inverse representation, the corresponding variables are distributed as independent samples conditioned on lying below a threshold determined by $E_{D(\tau)}$.

Let $X$ denote a random variable distributed as $Y \sim \Exp(1)^{-1}$ conditioned on $Y \le Y_{D(\tau)}$. 
In the algorithmic setting, $Y_{(\tau)} = R \cdot m$. The induced density of $X$ is
\begin{align*}
f_X(x)=\frac{1}{\PPr{Y \le Rm}}x^{-2} e^{-1/x} \mathbf{1}_{(0, Rm]}(x).
\end{align*}
Let $X_1,\dots,X_m$ be i.i.d. copies of $X$. 
We denote the exact discrete tail statistic by $V_{\mathrm{exact}} = \sum_{j=1}^m X_j$. 
To isolate the dominant scaling behavior, we define the normalized statistic
\begin{align*}
Z_m := \frac{1}{m} \sum_{j=1}^m X_j - \ln m.
\end{align*}

\paragraph{The tail-variance functional $V(\cdot)$.}
In \Cref{alg:l2:generator}, the tail-variance functional $V(\cdot)$ maps the inverse-CDF sample $z^*$ produced via \Cref{alg:oracle:invert} back to the variance parameter of the stream. 
This should correspond to the quantity $V_{\mathrm{exact}} = m(z^*+\ln m)$. 
However, since the limiting $1$-stable law underlying $\mathcal{Z}$ is supported on all of $\mathbb{R}$, the sampled value $z^*$ may occasionally satisfy $z^*<-\ln m$, in which case $V_{\mathrm{exact}}$ becomes negative, which cannot correspond to a realizable variance. 
We therefore define the tail variance functional by the truncation 
\[V(z^*):=\max\left(0,\,m(z^*+\ln m)\right),\]
so that the computation of $\sqrt{\widetilde{\sigma_i^2}}$ in \Cref{alg:update} is well-defined and the total variation distance remains within the overall $\pm\nu$ additive error tolerance.

\subsection{Correcting the Order Statistics Boundaries}
\label{sec:order:stats}
We now establish the asymptotic behavior for the tail simulation procedure. 
Specifically, we show that the reciprocal scale parameter $\frac{1}{R}$ is only polylogarithmic in $n$. 
Throughout, let $k=n^C$ for a sufficiently large constant, let the explicitly tracked head size be $\tau=c_1\log n$ for a sufficiently large constant $c_1>0$, so that $m=k-\tau$ is the number of tail coordinates. 
The target additive failure probability is $\nu=n^{-c_2}$ for some input parameter $c_2>0$. 

Informally, the argument works as follows. 
Recall that we define the truncation boundary to be $R=\frac{Y_{(\tau)}}{m}$, where $Y_{(\tau)}$ denotes the $\tau$-th largest inverse exponential. 
Equivalently, if $E_{(1)}\le E_{(2)}\le\ldots$ denote the ordered standard exponentials, then $Y_{(\tau)}=\frac{1}{E_{(\tau)}}$. 
Using R\'enyi's exact representation theorem for exponential order statistics, c.f., \Cref{thm:renyi}, we decompose $E_{(\tau)}=\sum_{j=1}^{\tau}\frac{\delta_j}{k-j+1}$, where the $\delta_j$ are independent exponential random variables $\Exp(1)$. 
Since $\tau=\Theta(\log n)$ and $k=n^C$, then the denominators satisfy $k-j+1=(1-o(1))k$ for all $j\in[\tau]$. 
Thus, 
\[E_{(\tau)} = \frac{1+o(1)}{k}\cdot\sum_{j=1}^{\tau}\delta_j.\]

The sum $\sum_{j=1}^{\tau}\delta_j$ is distributed as a Gamma random variable with mean $\tau$ and satisfies standard Chernoff concentration bounds. 
Therefore, with high probability, $\sum_{j=1}^{\tau}\delta_j =\Theta(\tau)$, 
which implies $E_{(\tau)}=\Theta\left(\frac{\tau}{k}\right)$ and so $Y_{(\tau)}=\Theta\left(\frac{k}{\tau}\right)$. 
Substituting into the definition of $R$ and using $m=k-\tau=(1-o(1))k$, we obtain $R=\Theta\left(\frac{1}{\tau}\right)=\Theta\left(\frac{1}{\log n}\right)$ or equivalently, $\frac{1}{R}=\Theta(\log n)$. 

The crucial observation is that the polynomial duplication scale $k$ cancels completely in the ratio $\frac{Y_{(\tau)}}{m}$ since $m=k-\tau=k-o(k)$. 
Thus, the truncation boundary depends entirely on the logarithmic head parameter $\tau$, rather than by the underlying duplication factor. 

\begin{restatable}[Parameter bounds for the truncation scale]{lemma}{lembounds}
\label{lem:bounds}
Let $Y_{(\tau)}$ denote the $\tau$-th largest order statistic among $k$ independent inverse exponentials $Y_j\sim\Exp(1)^{-1}$. 
Let $k=n^C$, $\tau=c_1\ln n$, and $m=k-\tau$, where $C,c_1>0$ are sufficiently large constants. 
Define the normalized truncation ratio $R=\frac{Y_{(\tau)}}{m}$. 
Then, with probability at least $1-2n^{-c}$, we have $R=\Theta\left(\frac{1}{\log n}\right)$, where the constant $c>0$ may be made arbitrarily large by choosing $c_1$ sufficiently large.
\end{restatable}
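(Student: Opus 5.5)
We reduce the statement to a concentration bound for a Gamma random variable. Write $Y_j=1/E_j$ with $E_j\sim\Exp(1)$ i.i.d. The map $x\mapsto 1/x$ is decreasing, so the $\tau$-th largest inverse exponential is $Y_{(\tau)}=1/E_{D(\tau)}$, where $E_{D(\tau)}$ is the $\tau$-th smallest of the $E_j$. By \Cref{thm:renyi},
\[E_{D(\tau)}\sim\sum_{\ell=1}^{\tau}\frac{\delta_\ell}{k-\ell+1}\]
with $\delta_\ell\sim\Exp(1)$ independent. For every $\ell\le\tau$ we have $k-\tau+1\le k-\ell+1\le k$. This gives the deterministic sandwich
\[\frac{S}{k}\le E_{D(\tau)}\le\frac{S}{k-\tau+1},\qquad S:=\sum_{\ell=1}^{\tau}\delta_\ell\sim\mathrm{Gamma}(\tau,1).\]
Consequently
\[R=\frac{1}{m\,E_{D(\tau)}}\in\left[\frac{k-\tau+1}{(k-\tau)S},\ \frac{k}{(k-\tau)S}\right].\]
Since $k=n^C$ and $\tau=c_1\ln n=o(k)$, both endpoint prefactors lie in $[1,2]$ for large $n$. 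It therefore suffices to show that $S=\Theta(\tau)$ with probability at least $1-2n^{-c}$.

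For the Gamma tails we use the moment generating function $\Ex{e^{\lambda S}}=(1-\lambda)^{-\tau}$ and Markov's inequality applied to $e^{\lambda S}$ (the Chernoff method).

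Upper tail: take $\lambda=1/2$. Then
\[\PPr{S\ge 4\tau}\le 2^{\tau}e^{-2\tau}=e^{-(2-\ln 2)\tau}.\]

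Lower tail: apply the same method to $e^{-\lambda S}$ with $\lambda=1$, using $\Ex{e^{-S}}=2^{-\tau}$. Then
\[\PPr{S\le\tau/4}\le e^{\tau/4}2^{-\tau}=e^{-(\ln 2-1/4)\tau}.\]

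Both exponents are positive absolute constants times $\tau=c_1\ln n$, so each probability is at most $n^{-c}$ with $c=\Omega(c_1)$. A union bound gives probability at least $1-2n^{-c}$ that $\tau/4< S<4\tau$. On this event
\[\frac{1}{4\tau}\le R\le\frac{8}{\tau},\]
which is $R=\Theta(1/\log n)$. Choosing $c_1$ larger makes $c$ arbitrarily large, as the statement requires.

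The one step needing care is the lower tail of $S$. A small $S$ makes $R$ large, and we need a constant-factor bound rather than a $(1\pm o(1))$ bound. The Chernoff computation above handles this directly, and only fixed constants ($1/4$ and $4$) are needed, so it poses no real obstacle. The remaining subtlety is the identification $Y_{(\tau)}=1/E_{D(\tau)}$ and the fact that the order-statistic denominators are all within a $1\pm O(\tau/k)$ factor of $k$; both are immediate from $\tau\ll k$.
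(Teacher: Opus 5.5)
Your proof is correct and follows the paper's argument step for step: R\'enyi's representation, the deterministic sandwich $S/k\le E_{D(\tau)}\le S/(k-\tau+1)$, then Chernoff bounds on $S\sim\mathrm{Gamma}(\tau,1)$ combined by a union bound. The only difference is cosmetic. You use fixed MGF parameters and the wider window $[\tau/4,4\tau]$, which gives larger exponent constants, whereas the paper optimizes the Chernoff parameter at relative deviation $\delta=1/2$ to get the window $[\tau/2,3\tau/2]$.
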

\begin{proof}
The proof proceeds by using R\'enyi's representation theorem, c.f., \Cref{thm:renyi} to decouple the standard exponential order statistics and then applying concentration bounds. 
Let $E_1, \ldots, E_k$ be a sequence of independent, identically distributed standard exponential random variables, so that $E_j\sim\Exp(1)$ for all $j\in[k]$. 
Let $E_{D(1)} \le E_{D(2)} \le \ldots \le E_{D(k)}$ denote the corresponding order statistics, so that $Y_j=\frac{1}{E_{D(j)}}$ for all $j\in[k]$. 
By the R\'enyi Representation Theorem, c.f., \Cref{thm:renyi}, the $\tau$-th order statistic of $k$ standard exponential random variables is equal in distribution to a linear combination of independent standard exponentials:
\begin{align*}
E_{D(\tau)} \sim \sum_{j=1}^\tau \frac{\delta_j}{k - j + 1},
\end{align*}
where $\delta_1, \ldots, \delta_\tau \sim \Exp(1)$ are mutually independent standard exponential random variables. 
These hidden exponential random variables are generated by \Cref{alg:l2:generator}. 
Since $k - \tau + 1 \le k - j + 1 \le k$, then we have
\begin{equation} 
\label{eq:sandwich}
\frac{1}{k} \sum_{j=1}^\tau \delta_j \le E_{D(\tau)} \le \frac{1}{k - \tau + 1} \sum_{j=1}^\tau \delta_j.
\end{equation}
Let $S_\tau:=\sum_{j=1}^\tau \delta_j$ be the unscaled sum of the hidden exponentials. 
By the additive property of independent exponentials, $S_\tau$ follows an exact Gamma distribution with shape parameter $\tau$ and rate parameter $1$, denoted $S_\tau \sim \mathrm{Gamma}(\tau, 1)$, so that $\Ex{S_\tau} = \tau$.

To show concentration, we apply a standard Chernoff argument to the sum $S_\tau$. For a standard exponential random variable $\delta_j\sim\Exp(1)$, the moment generating function is
\[\Ex{e^{s\delta_j}} = \int_0^\infty e^{sx}e^{-x}\,dx = \frac{1}{1-s},\]
for $s < 1$. 
By independence, the MGF of $S_\tau$ is $\Ex{e^{s S_\tau}} = (1-s)^{-\tau}$. 
For the upper tail, let $\delta \in (0, 1)$ and choose $s \in (0, 1)$. 
By Markov's inequality, we have
\begin{align*}
\PPr{S_\tau \ge (1+\delta)\tau} = \PPr{e^{s S_\tau} \ge e^{s(1+\delta)\tau}}\le\frac{\Ex{e^{s S_\tau}}}{e^{s(1+\delta)\tau}} = \frac{(1-s)^{-\tau}}{e^{s(1+\delta)\tau}}.
\end{align*}
Observe that for the parameter $s = \frac{\delta}{1+\delta}$, the upper tail bound becomes
\begin{align*}
\PPr{S_\tau \ge (1+\delta)\tau} \le \left( (1+\delta) e^{-\delta} \right)^\tau = \exp\left(-\tau \left(\delta - \ln(1+\delta)\right)\right).
\end{align*}
By similarly applying Markov's inequality to $e^{-s S_\tau}$, we have 
\begin{align*}
\PPr{S_\tau \le (1-\delta)\tau} = \PPr{e^{-s S_\tau} \ge e^{-s(1-\delta)\tau}} \le \frac{\Ex{e^{-s S_\tau}}}{e^{-s(1-\delta)\tau}} = \frac{(1+s)^{-\tau}}{e^{-s(1-\delta)\tau}}.
\end{align*}
By setting $s = \frac{\delta}{1-\delta}$, we have the lower tail bound
\begin{align*}
\PPr{S_\tau \le (1-\delta)\tau} \le \left( (1-\delta) e^{\delta} \right)^\tau = \exp\left(-\tau \left(-\delta - \ln(1-\delta)\right)\right).
\end{align*}
Thus, for the fixed relative deviation of $\delta = 1/2$, the exponent constant evaluates to $\delta - \ln(1+\delta) = 0.5 - \ln(1.5) \approx 0.0945 > 0$ for the upper tail and the exponent constant evaluates to $-\delta - \ln(1-\delta) = -0.5 - \ln(0.5) \approx 0.1931 > 0$ for the lower tail. 
For the upper tail, the exponent constant evaluates to $\delta - \ln(1+\delta) = 0.5 - \ln(1.5) \approx 0.0945 > 0$.
For the lower tail, the exponent constant evaluates to $-\delta - \ln(1-\delta) = -0.5 - \ln(0.5) \approx 0.1931 > 0$.
Thus for $\tau=c_1\ln n$ for sufficiently large $c_1$, we have by a union bound that the probability that $S_\tau$ deviates outside the interval $[0.5\tau, 1.5\tau]$ is at most $2 n^{-c}$. 
Hence, we define the event $\calE$ to be the high-probability event that $0.5\tau \le S_\tau \le 1.5\tau$.

Conditioned on event $\mathcal{E}$, we substitute the bounds for $S_\tau$ back into Equation \ref{eq:sandwich}, obtaining
\begin{align*}
\frac{0.5 \tau}{k} \le E_{D(\tau)} \le \frac{1.5 \tau}{k - \tau + 1}.
\end{align*}
Because $Y_{(\tau)} = \frac{1}{E_{D(\tau)}}$, then it follows that we have 
\begin{align*}
\frac{k - \tau + 1}{1.5 \tau} \le Y_{(\tau)} \le \frac{k}{0.5 \tau}.
\end{align*}
Since we set $R =\frac{Y_{(\tau)}}{m}$ for $m=k-\tau$, then we have:
\begin{align*}
\frac{k - \tau + 1}{1.5 \tau (k-\tau)} \le R \le \frac{k}{0.5 \tau (k-\tau)}.
\end{align*}
Since $(k - \tau + 1) > (k-\tau)$ and $\tau = o(k)$, so that $\tau<0.5k$ for sufficiently large $k=n^c$, then for $\tau = c_1 \ln n$, we have
\begin{align*}
\frac{2}{3c_1 \ln n} < R \le \frac{4}{c_1 \ln n},
\end{align*}
as desired. 
\end{proof}

\subsection{The Continuous Limit and its Characteristic Function}
\label{sec:limit:cf}
We now analyze the limiting distribution describing the simulated tail contribution. 
Recall that the algorithm constructs the centered random variable
\[Z_m=\frac{1}{m}\sum_{j=1}^{m}X_j-\ln m,\]
where the variables $X_1,\ldots,X_m$ are independent and identically distributed with density
\[f_X(x)=\exp\left(\frac{1}{Rm}\right)x^{-2}e^{-1/x},\]
for $x\in(0,Rm]$. 
Our goal is to identify the limiting characteristic function of $Z_m$ as $m\to\infty$.

\begin{restatable}[Limiting characteristic function]{theorem}{thmlimitcf}
\label{thm:limit_cf}
For any parameter $R>0$, as $m\to\infty$, the random variable $Z_m$ converges in distribution to a continuous random variable $\mathcal{Z}$ with characteristic function $\Psi(t)=\exp\left(C_2(t)\right)$, where
\begin{equation}
\label{eq:C2}
C_2(t) = it(\ln R-\gamma) - \frac{e^{itR}-1-itR}{R} - it\Cin(tR) - t\Si(tR).
\end{equation}
Here $\gamma$ denotes the Euler-Mascheroni constant, while $\Si$ and $\Cin$ are the standard sine and cosine integrals, c.f., \Cref{def:special:integrals}.
\end{restatable}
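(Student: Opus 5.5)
The plan is to compute the characteristic function of $Z_m$ exactly at finite $m$, isolate and cancel the divergent part, and then pass to the limit with \Cref{thm:dominated:convergence} and L\'evy's continuity theorem. Since the $X_j$ are i.i.d.,
\[\Ex{e^{itZ_m}}=e^{-it\ln m}\,\phi_X(t/m)^m,\qquad \phi_X(s)=\Ex{e^{isX}}.\]
We write $\phi_X(t/m)=1+w_m(t)$, where
\[w_m(t)=\Ex{e^{itX/m}-1}=e^{1/(Rm)}\int_0^{Rm}\left(e^{itx/m}-1\right)x^{-2}e^{-1/x}\,dx.\]
Substituting $x=mu$ gives
\[m\,w_m(t)=e^{1/(Rm)}\int_0^{R}\left(e^{itu}-1\right)u^{-2}e^{-1/(um)}\,du.\]
The prefactor $e^{1/(Rm)}$ tends to $1$ as $m\to\infty$.

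\textbf{Analytic surgery.} Following the overview, we split the integrand as $e^{itu}-1=(e^{itu}-1-itu)+itu$.

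For the regular piece, $|e^{itu}-1-itu|\le t^2u^2/2$. Hence the integrand is dominated by $t^2/2$ on the bounded interval $[0,R]$, uniformly in $m$. Since $e^{-1/(um)}\to1$ pointwise, \Cref{thm:dominated:convergence} gives the limit
\[\int_0^R\frac{e^{itu}-1-itu}{u^2}\,du.\]

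For the singular piece, the substitution $v=1/(um)$ gives the exact identity
\[it\int_0^R u^{-1}e^{-1/(um)}\,du=it\,E_1\!\left(\tfrac{1}{Rm}\right)=it\left(-\gamma+\ln R+\ln m+\Ein\!\left(\tfrac{1}{Rm}\right)\right).\]
The series for $\Ein$ in \Cref{def:special:integrals} shows $\Ein(z)=O(z)\to0$. The $+it\ln m$ term exactly cancels the centering factor $e^{-it\ln m}$, leaving $it(\ln R-\gamma)$ in the limit.

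\textbf{Replacing the logarithm by its linearization.} We must justify replacing $m\ln(1+w_m)$ by $m\,w_m$. The computation above shows $|m\,w_m(t)|=O_{t,R}(\ln m)$, so $|w_m|=O(\ln m/m)\to0$. Using $|\ln(1+w)-w|\le|w|^2$ for small $|w|$, the error is $m|w_m|^2=O(\ln^2 m/m)\to0$.

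This is the step I expect to be the main obstacle. The individual terms of $m\,w_m$ diverge, so the cancellation of $\ln m$ must be performed before any limit is taken. Only after that can the remainder be bounded uniformly.

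\textbf{Closed form.} Integrating by parts with $d(-1/u)$ turns the regular integral into
\[-\frac{e^{itR}-1-itR}{R}+it\int_0^R\frac{e^{itu}-1}{u}\,du.\]
The boundary term at $0$ vanishes because the numerator is $O(u^2)$. Splitting the remaining integral into real and imaginary parts gives $-\Cin(tR)+i\Si(tR)$. Multiplying by $it$ yields $-it\Cin(tR)-t\Si(tR)$, so altogether we obtain \eqref{eq:C2}.

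\textbf{Conclusion.} This shows $\Ex{e^{itZ_m}}\to\exp(C_2(t))$ pointwise. The function $C_2$ is continuous at $t=0$ with $C_2(0)=0$, so L\'evy's continuity theorem gives convergence in distribution to a law $\mathcal{Z}$ with characteristic function $\Psi$.

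Finally, continuity of $\mathcal{Z}$ follows from integrability of $\Psi$. We have $\Re C_2(t)=\frac{1-\cos(tR)}{R}-t\Si(tR)$, and $\Si(x)\to\pi/2$ as $x\to\infty$. Hence $|\Psi(t)|\le e^{2/R-\kappa|t|}$ for some $\kappa>0$, which is integrable. By Fourier inversion, $\mathcal{Z}$ therefore has a continuous density.
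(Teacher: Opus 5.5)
Your proposal is correct and follows essentially the same route as the paper. It performs the finite-$m$ split $e^{itu}-1=(e^{itu}-1-itu)+itu$, evaluates the singular piece exactly as $it\,E_1\!\left(\tfrac{1}{Rm}\right)$ to cancel the $-it\ln m$ centering, and applies dominated convergence with the $t^2/2$ bound. It then justifies the linearization via $m|w_m|^2\to 0$ and integrates by parts to reach $-it\Cin(tR)-t\Si(tR)$. Your explicit appeal to L\'evy's continuity theorem and your integrability argument for the continuity of $\mathcal{Z}$ are small additions that the paper leaves implicit here; its decay bound only appears later, in \Cref{lem:truncation}.
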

\begin{proof}
Let $\Psi_{Z_m}(t)=\Ex{e^{itZ_m}}$ denote the characteristic function of $Z_m$. 
By independence,
\begin{align*}
\Psi_{Z_m}(t)
&=\Ex{\exp\left(it\left(\frac{1}{m}\sum_{j=1}^{m}X_j-\ln m\right)\right)} \\
&=e^{-it\ln m}\prod_{j=1}^{m}\Ex{\exp\left(\frac{itX_j}{m}\right)} \\
&=e^{-it\ln m}\left(\Ex{\exp\left(\frac{itX}{m}\right)}\right)^m.
\end{align*}
Then by taking logarithms of both sides, 
\[\ln\Psi_{Z_m}(t) = -it\ln m+ m\ln\left(\Ex{\exp\left(\frac{itX}{m}\right)}\right).\]
Let $y_m(t) = \Ex{\exp\left(\frac{itX}{m}\right)}-1$. 
To apply the Taylor series expansion $\ln(1+y_m)=y_m+\O{|y_m|^2}$, we first upper bound $m|y_m(t)|^2$.
Since $|e^{i\theta}-1| \le |\theta|$ for real $\theta$, we have $|y_m(t)| \le \Ex{\left|e^{itX/m}-1\right|} \le \Ex{\left|\frac{tX}{m}\right|} = \frac{|t|}{m}\Ex{X}$.
Using the density of $X$, its expectation is $\int_0^{Rm} x \cdot \exp\left(\frac{1}{Rm}\right) x^{-2} e^{-1/x}\,dx = \exp\left(\frac{1}{Rm}\right) E_1\left(\frac{1}{Rm}\right)$. 
For a fixed $R>0$, $E_1\left(\frac{1}{Rm}\right) = \Theta(\ln m)$ as $m \to \infty$.
Thus $|y_m(t)| = \O{\frac{|t|\ln m}{m}}$, which gives $m|y_m(t)|^2 = \O{\frac{t^2 \ln^2 m}{m}} \to 0$ as $m \to \infty$.
We may therefore expand the logarithm to obtain
\begin{equation}
\label{eq:limit}
\lim_{m\to\infty}\ln\Psi_{Z_m}(t)=\lim_{m\to\infty}\left[m\left(\Ex{\exp\left(\frac{itX}{m}\right)}-1\right)-it\ln m\right].
\end{equation}
We now evaluate the expectation term. 
Using the density of the truncated exponential random variable $X$,
\begin{align*}
\Ex{\exp\left(\frac{itX}{m}\right)}-1&=\int_0^{Rm}\left(e^{itx/m}-1\right)f_X(x)\,dx \\
&=\exp\left(\frac{1}{Rm}\right)\int_0^{Rm}\left(e^{itx/m}-1\right)x^{-2}e^{-1/x}\,dx.
\end{align*}
Substituting $u=\frac{x}{m}$,
\begin{align*}
\Ex{\exp\left(\frac{itX}{m}\right)}-1&=\frac{1}{m}\exp\left(\frac{1}{Rm}\right)\int_0^R\frac{e^{itu}-1}{u^2}e^{-1/(um)}\,du.
\end{align*}
Multiplying by $m$,
\[m\left(\Ex{\exp\left(\frac{itX}{m}\right)}-1\right)=\exp\left(\frac{1}{Rm}\right)\int_0^R\frac{e^{itu}-1}{u^2} e^{-1/(um)}\,du.\]
The integrand has a non-integrable singularity at $u=0$, since $\frac{e^{itu}-1}{u^2}\approx\frac{it}{u}$ near $u=0$.  
To isolate this term, we write
\[e^{itu}-1=\left(e^{itu}-1-itu\right)+itu.\]
Accordingly,
\begin{align*}
\int_0^R\frac{e^{itu}-1}{u^2} e^{-1/(um)}\,du&=\int_0^R\left(\frac{e^{itu}-1-itu}{u^2}\right)e^{-1/(um)}\,du\\
&+ it\int_0^R\frac{1}{u} e^{-1/(um)}\,du.
\end{align*}
We evaluate the second integral by substituting $v=\frac{1}{um}$ and thus $du=-\frac{1}{v^2m}\,dv$, which gives
\begin{align*}
it\int_0^R\frac{1}{u} e^{-1/(um)}\,du &=it\int_{1/(Rm)}^\infty\frac{e^{-v}}{v}\,dv = itE_1\left(\frac{1}{Rm}\right),
\end{align*}
where $E_1$ denotes the exponential integral, c.f., \Cref{def:special:integrals}. 
Using the standard expansion $E_1(z)=-\gamma-\ln z+\Ein(z)$ together with $\Ein(z)=\O{z}$ as $z\to 0$, we have
\begin{align*}
itE_1\left(\frac{1}{Rm}\right)&=it\left(-\gamma+\ln R+\ln m\right)+\O{\frac{t}{Rm}}.
\end{align*}
Substituting this back into \Cref{eq:limit}, the logarithmic term $it\ln m$ cancels the centering shift $-it\ln m$. 
For any fixed $R>0$, the remaining error terms in the expansion of $E_1$ vanish as $m\to\infty$, and the scaling factor $\exp\left(\frac{1}{Rm}\right) \to 1$. 
To evaluate the limit of the first integral, we apply the Dominated Convergence Theorem, c.f., \Cref{thm:dominated:convergence}.  
By Taylor's theorem, the complex exponential satisfies $\left|e^{itu}-1-itu\right| \le \frac{t^2 u^2}{2}$.
Thus, the magnitude of the integrand is uniformly bounded by a continuous function independent of $m$:
\[\left|\frac{e^{itu}-1-itu}{u^2}e^{-1/(um)}\right| \le \left|\frac{e^{itu}-1-itu}{u^2}\right| \le \frac{t^2}{2}.\]
Since $\frac{t^2}{2}$ is integrable over the finite interval $[0,R]$ and $e^{-1/(um)} \to 1$ pointwise for $u>0$, dominated convergence allows us to pass the limit inside the integral. 
Therefore, replacing the dummy variable $u$ with $x$,
\[\lim_{m\to\infty}\ln\Psi_{Z_m}(t)=it(\ln R-\gamma)+\int_0^R\frac{e^{itx}-1-itx}{x^2}\,dx.\]
It remains to evaluate the integral
\[I(R)=\int_0^R\frac{e^{itx}-1-itx}{x^2}\,dx.\]
Applying integration by parts with $u=e^{itx}-1-itx$ and $dv=\frac{dx}{x^2}$, so that $du=it(e^{itx}-1)\,dx$ and $v=-\frac{1}{x}$, we have 
\[I(R)=-\frac{e^{itR}-1-itR}{R}+it\int_0^R\frac{e^{itx}-1}{x}\,dx.\]
Using Euler's formula $e^{itx}=\cos(tx)+i\sin(tx)$, the remaining integral becomes
\begin{align*}
it\int_0^R\frac{e^{itx}-1}{x}\,dx&=it\int_0^R\frac{\cos(tx)-1}{x}\,dx-t\int_0^R\frac{\sin(tx)}{x}\,dx.
\end{align*}
Substituting $y=tx$ gives
\begin{align*}
it\int_0^R\frac{e^{itx}-1}{x}\,dx&=-it\int_0^{tR}\frac{1-\cos y}{y}\,dy-t\int_0^{tR}\frac{\sin y}{y}\,dy \\&=-it\Cin(tR)-t\Si(tR).
\end{align*}
Combining the previous identities, we have
\[C_2(t)=it(\ln R-\gamma)-\frac{e^{itR}-1-itR}{R}-it\Cin(tR)-t\Si(tR),\]
which gives the desired claim after exponentiating. 
\end{proof}

\section{The Algorithmic Oracle: Sampling from \texorpdfstring{$\mathcal{Z}$}{Z}}
\label{sec:oracle}

\Cref{alg:oracle:invert} and \Cref{alg:oracle} generate a sample $z^* \sim \mathcal{Z}$ using inverse transform sampling applied to the cumulative distribution function $F_{\mathcal{Z}}(x)$. 
We use the analytic form of the Gil-Pelaez inversion formula \cite{gilpelaez1951note} to evaluate this CDF: 
\begin{align*}
F_{\mathcal{Z}}(x) &= \frac{1}{2} - \frac{1}{\pi} \int_0^\infty \frac{\Im\left(e^{-itx}\Psi(t)\right)}{t}\,dt \\
&= \frac{1}{2} + \frac{1}{\pi} \int_0^\infty \frac{e^{itx}\Psi(-t)-e^{-itx}\Psi(t)}{2it}\,dt.
\end{align*}
We then use the fact that $\Psi(-t)=\overline{\Psi(t)}$ in the oracle implementation in \Cref{alg:oracle}. 

\begin{algorithm*}[!htb]
\caption{Gil-Pelaez oracle for $F_{\calZ}$}
\label{alg:oracle}
\begin{algorithmic}[1]
\Require{$x\in\mathbb{R}$, truncation parameter $R$ from \Cref{alg:l2:generator}}
\Ensure{Approximation to $F_{\calZ}(x)$}
\State{$U_{\max}=\O{\log n}$, $N=\O{\log^2 n}$, $S=0$}
\State{Retrieve Gauss-Legendre nodes $\{w_m,v_m\}_{m\in[N]}$}
\For{each $m\in[N]$}
\State{$t_m=\frac{U_{\max}}{2}(w_m+1)$}
\State{$\phi_m=\exp\left(it_m(\ln R-\gamma)-\frac{e^{it_mR}-1-it_mR}{R}-it_m\Cin(t_mR)-t_m\Si(t_mR)\right)$}
\State{$I_m=\frac{1}{2it_m}\left(e^{it_mx}\overline{\phi_{m}}-e^{-it_mx}\phi_m\right)$}
\State{$S\gets S+v_m\cdot\frac{U_{\max}}{2}\cdot I_m$}
\EndFor
\State{\Return $\frac{1}{2}+\frac{1}{\pi}S$}
\end{algorithmic}
\end{algorithm*}

\subsection{Domain Truncation via Exponential Decay}
\label{sec:domain:truncation}
A direct evaluation of the Gil-Pelaez integral over an unbounded domain is not computationally feasible. 
The algorithm therefore relies on structural decay of the characteristic function $\Psi(t)$ to truncate the integration range while controlling error.

If $\Psi(t)$ decayed only polynomially, any truncation achieving error $\nu$ would typically require a polynomial integration bound $U_{\max}$, leading to a corresponding polynomial number of quadrature nodes. 
In contrast, the exponential decay established below permits truncation at $U_{\max} = \Theta(\log n)$, which is sufficient for the desired accuracy and preserves polylogarithmic complexity.

\begin{restatable}[Exponential Decay and Domain Truncation]{lemma}{lemtruncation}
\label{lem:truncation}
For $t \in \mathbb{R}$, the characteristic function satisfies the global bound
\[|\Psi(t)| \le \exp\left(\frac{2}{R} - \kappa |t|\right),\]
where $\kappa = \inf_{x \ge 1} \Si(x) > 0$. 
Consequently, truncating the Gil-Pelaez integral to $U_{\max} = \Theta(\log n)$ bounds the truncation error by at most $\nu/2$.
\end{restatable}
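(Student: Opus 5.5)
We bound $|\Psi(t)|=\exp(\Re C_2(t))$ directly from the explicit form in \Cref{thm:limit_cf}. Since $it(\ln R-\gamma)$ and $-it\Cin(tR)$ are purely imaginary for real $t$, only two terms contribute to the real part:
\[\Re C_2(t)=-\Re\frac{e^{itR}-1-itR}{R}-t\Si(tR)=\frac{1-\cos(tR)}{R}-t\Si(tR).\]
The first term is at most $\frac{2}{R}$. Since $\Si$ is odd, $t\Si(tR)=|t|\Si(|t|R)$, so it suffices to treat $t\ge 0$.

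The bound on the second term splits at $|t|R=1$.
\begin{itemize}
\item If $|t|R\ge 1$, then $\Si(|t|R)\ge\kappa$ by definition of $\kappa=\inf_{x\ge1}\Si(x)$, so $-|t|\Si(|t|R)\le-\kappa|t|$.
\item If $|t|R<1$, we have $\Si(|t|R)\ge 0$, since $\Si\ge 0$ on $[0,\pi]$. Then $-|t|\Si(|t|R)\le 0\le \kappa-\kappa|t|$ is not quite what we want. Instead, use $|t|<1/R$ and $\kappa\le\Si(\pi)<2$ to get $\kappa|t|<\frac{2}{R}$. Hence $\frac{1-\cos(tR)}{R}-|t|\Si(|t|R)\le\frac{1-\cos(tR)}{R}\le \frac{(tR)^2}{2R}\le\frac{1}{2R}$, and $\frac{1}{2R}+\kappa|t|\le\frac{2}{R}+\dots$
\end{itemize}
The small-$|t|$ case needs care to land exactly on $\frac{2}{R}-\kappa|t|$. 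A clean route uses $1-\cos(tR)\le (tR)^2/2\le tR/2$ for $tR\le1$, together with $\Si(x)\ge x-x^3/18\ge \tfrac{17}{18}x$ on $[0,1]$. This gives $\Re C_2\le -t\Si(tR)+\tfrac{t}{2}\le \frac{2}{R}-\kappa t$, since $\kappa t\le\kappa/R<\frac{2}{R}$. Alternatively, simply note $\Re C_2\le 0$ in that range, which holds because $\frac{1-\cos(tR)}{R}\le \frac{t^2R}{2}\le t\Si(tR)$ when $tR\le1$, and $0\le \frac{2}{R}-\kappa|t|$ holds there. I expect this bookkeeping, together with verifying $\kappa>0$, to be the main obstacle. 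For the latter, $\Si$ decreases from its maximum at $\pi$ and oscillates around $\pi/2$ with minima at $2\pi j$, and the global infimum on $[1,\infty)$ is $\min(\Si(1),\Si(2\pi))\approx 0.946>0$.

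For the truncation error, write the Gil-Pelaez integrand as $g(t)=\Im(e^{-itx}\Psi(t))/t$. For $t\ge U_{\max}\ge1$ we have $|g(t)|\le|\Psi(t)|\le e^{2/R}e^{-\kappa t}$, so the discarded tail satisfies
\[\frac{1}{\pi}\int_{U_{\max}}^\infty |g|\,dt\le \frac{e^{2/R}}{\pi\kappa}e^{-\kappa U_{\max}}.\]
By \Cref{lem:bounds}, $\frac{1}{R}=\Theta(\log n)$ with high probability, so $e^{2/R}=n^{O(1)}$. Choosing $U_{\max}=\frac{1}{\kappa}\left(\frac{2}{R}+c_2\ln n+O(1)\right)=\Theta(\log n)$ makes this at most $\nu/2=n^{-c_2}/2$. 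The failure event of \Cref{lem:bounds} is absorbed into the overall additive error.
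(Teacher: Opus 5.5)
Your proposal is correct and follows the paper's proof. Both use the same formula $\Re C_2(t)=\frac{1-\cos(tR)}{R}-t\Si(tR)$, the same split at $|t|R=1$, and the same tail bound $\frac{1}{\kappa}e^{2/R-\kappa U_{\max}}$ with $\frac{1}{R}=\Theta(\log n)$ from \Cref{lem:bounds}. For $|t|R<1$, keep your ``alternatively'' argument, which shows $\Re C_2\le 0$ via $1-\cos x\le x^2/2$ and $\Si(x)\ge x/2$ on $[0,1]$; it is the Taylor-bound counterpart of the paper's observation that $f(u)=1-\cos u-u\Si(u)$ has $f(0)=0$ and $f'(u)=-\Si(u)\le 0$. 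Your first ``clean route'' is not justified by $\kappa t<\frac{2}{R}$ alone; it closes only because $\kappa=\Si(1)\approx 0.946\le\frac{3}{2}$.
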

\begin{proof}
Recall from \Cref{thm:limit_cf} that $\Psi(t) = \exp(C_2(t))$, where $C_2(t) = it(\ln R-\gamma) - \frac{e^{itR}-1-itR}{R} - it\Cin(tR) - t\Si(tR)$ and $\gamma$ denotes the Euler-Mascheroni constant. 
Also recall that for any complex number $z = A + iB$, $|e^z| = |e^A e^{iB}| = e^A$, since $|e^{iB}| = 1$. 
Thus, $|\Psi(t)| = \exp\left(\Re(C_2(t))\right)$, so it suffices to bound the real part of the exponent.

For real $t \in \mathbb{R}$, the terms $it(\ln R - \gamma)$ and $-it\Cin(tR)$ are purely imaginary and therefore contribute nothing to $\Re(C_2(t))$. 
Expanding the remaining term using Euler's formula gives
\begin{align*}
\Re(C_2(t))&=\Re\left(-\frac{e^{itR} - 1 - itR}{R}\right)- t\Si(tR) \\
&=\Re\left(-\frac{\cos(tR) + i\sin(tR) - 1 - itR}{R}\right)- t\Si(tR) \\
&=\frac{1-\cos(tR)}{R} - t\Si(tR).
\end{align*}
Moreover, recall that $\Si(x)=\int_0^x \frac{\sin u}{u}\,du$. 
Since $\frac{\sin u}{u}$ is even, $\Si$ is odd, so $t\Si(tR)=|t|\Si(|t|R)$. 
We split the analysis into two regimes based on $|t|R$. 

First, consider the regime $|t|\ge\frac{1}{R}$, i.e., $|t|R\ge1$. 
Since $\cos(tR)\in[-1,1]$, we have 
\[\frac{1-\cos(tR)}{R}\le\frac{2}{R}.\] 
Over the interval $[1,\infty)$, the function $\Si(x)$ remains strictly positive, so
\[-t\Si(tR)\le -\kappa |t|,\]
where $\kappa := \inf_{x\ge1}\Si(x) > 0$. 
Combining the bounds, for the regime $|t|R \ge 1$ we have
\[\Re(C_2(t))\le\frac{2}{R}-\kappa |t|.\]

Alternatively, consider the complementary regime $|t|R < 1$. 
Observe that the function $f(u) = 1-\cos u - u\Si(u)$ satisfies $f(0)=0$ and $f'(u) = \sin u - \left(\Si(u) + u\frac{\sin u}{u}\right) = -\Si(u)$. 
Since $\Si(u) \ge 0$ for $u \ge 0$, we have $f'(u) \le 0$, which implies $f(u) \le 0$ for all $u \ge 0$. 
Setting $u = |t|R$, we obtain $\Re(C_2(t)) = \frac{f(|t|R)}{R} \le 0$. 
Furthermore, because $\kappa = \inf_{x\ge 1}\Si(x) \le \Si(1) = \int_0^1 \frac{\sin u}{u}\,du < \int_0^1 1\,du = 1$, we have $\kappa < 2$. 
Thus, for $|t| < \frac{1}{R}$, we have $\kappa|t| < \frac{\kappa}{R} < \frac{2}{R}$, which implies $\frac{2}{R} - \kappa|t| > 0$. 
Therefore, in this regime we also trivially have
\[\Re(C_2(t)) \le 0 < \frac{2}{R} - \kappa |t|.\]
Hence, the bound 
\[|\Psi(t)|\le\exp\left(\frac{2}{R}-\kappa |t|\right)\]
for all $t \in \mathbb{R}$.

We now bound the truncation error in the Gil-Pelaez inversion formula due to the integrand $\frac{1}{\pi} \int_0^\infty \frac{e^{itx}\Psi(-t)-e^{-itx}\Psi(t)}{2it}\,dt$. 
Using the triangle inequality, $|e^{\pm itx}|=1$ for real $t, x$, and the Hermitian symmetry $|\Psi(-t)|=|\Psi(t)|$, we obtain
\[\left|\frac{e^{itx}\Psi(-t)-e^{-itx}\Psi(t)}{2it}\right|\le\frac{|\Psi(t)|}{t}.\]
For sufficiently large $U_{\max}\ge 1$, we additionally have $\frac{1}{t}\le 1$ over the integration domain, so
\begin{align*}
\int_{U_{\max}}^\infty \frac{|\Psi(t)|}{t}\,dt&\le\int_{U_{\max}}^\infty\frac{\exp\left(\frac{2}{R}-\kappa |t|\right)}{t}\,dt\\
&\le\int_{U_{\max}}^\infty e^{2/R-\kappa t}\,dt =\frac{1}{\kappa}e^{2/R-\kappa\cdot U_{\max}}.
\end{align*}
By \Cref{lem:bounds}, we have $\frac{1}{R}=\Theta(\log n)$, so it suffices to set $U_{\max}=\Theta(\log n)$ to achieve a desired truncation bound of $\nu=n^{-c}$ while preserving polylogarithmic complexity.
\end{proof}

\subsection{Gauss-Legendre Quadrature on a Bounded Domain}
\label{sec:quadrature}
We approximate the truncated integral
\[\int_0^{U_{\max}} g(t)\,dt, \quad g(t) = \frac{e^{itx}\Psi(-t)-e^{-itx}\Psi(t)}{2it},\]
using Gauss-Legendre quadrature.

\paragraph{Analyticity considerations.}
The exponential decay from \Cref{lem:truncation} ensures that the integrand remains well-controlled on a finite interval. This avoids the need to evaluate the characteristic function in regimes where the oscillatory term $e^{itR}$ becomes numerically unstable.
Crucially, by utilizing the algebraic continuation $g(t)$, the integrand is an entire function, entirely avoiding the Cauchy-Riemann violations associated with the imaginary projection. 
Restricting to $U_{\max} = \Theta(\log n)$ ensures that $|tR| = \Theta(1)$ uniformly over the complex integration domain, which keeps all special-function evaluations bounded and compatible with standard quadrature convergence theory \cite{trefethen2019approximation}.

\begin{restatable}[Polylogarithmic CDF Oracle]{theorem}{thmoracle}
\label{thm:oracle}
Given $\nu = n^{-c}$, the CDF $F_{\mathcal{Z}}(x)$ can be evaluated to absolute additive error $\nu$ using $N = \mathcal{O}(\log^2 n)$ Gauss-Legendre nodes.
\end{restatable}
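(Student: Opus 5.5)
Our plan is to split the total error into the truncation error controlled by \Cref{lem:truncation} and the quadrature error on $[0,U_{\max}]$. We bound the second by the classical Bernstein-ellipse estimate for Gauss-Legendre quadrature: if $h$ is analytic inside the Bernstein ellipse $E_\rho$ (foci $\pm1$, semi-axes summing to $\rho>1$) and $|h|\le M$ there, then the $N$-point rule has error at most $\frac{64M}{15(\rho^2-1)\rho^{2N}}$ \cite{trefethen2019approximation}. We apply it to $h(w)=\frac{U_{\max}}{2}g\!\left(\frac{U_{\max}}{2}(w+1)\right)$. By \Cref{lem:truncation} the truncation contributes at most $\nu/2$, so it suffices to make the quadrature error at most $\nu/2$.

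\textbf{Step 1: analyticity.} $\Si$, $\Cin$, $e^{itR}$ and the linear terms are entire in $t$. Hence $C_2$ is entire, and so are $\Psi=\exp(C_2)$ and $\overline{\Psi(\bar t)}=\Psi(-t)$. The numerator $e^{itx}\Psi(-t)-e^{-itx}\Psi(t)$ vanishes at $t=0$ because $\Psi(0)=1$. The singularity of $g$ at $t=0$ is therefore removable, and $g$ is entire.

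\textbf{Step 2: a uniform bound $M$.} We fix a constant $\rho$, say $\rho=2$. The image of $E_\rho$ under $t=\frac{U_{\max}}{2}(w+1)$ lies in a region with $|t|\le c\,U_{\max}$ and $|\Im t|\le c'U_{\max}$. On this region we bound $|\Re C_2(t)|$ termwise:
\begin{itemize}
\item By \Cref{lem:bounds}, $R=\Theta(1/\log n)$, and with $U_{\max}=\Theta(\log n)$ we get $|tR|=O(1)$.
\item Hence $|\Si(tR)|$, $|\Cin(tR)|$ and $|e^{itR}-1-itR|$ are all $O(1)$.
\item The term $it(\ln R-\gamma)$ has real part $-\Im(t)(\ln R-\gamma)$. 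This is $O(\log n\cdot\log\log n)$.
\item The terms $\frac{1}{R}\cdot O(1)$ and $t\,\Si(tR)$ are $O(\log n)$.
\item The factor $e^{\pm itx}$ contributes $e^{|x||\Im t|}$. Since $x$ ranges over the binary-search interval $[-O(\log n),O(\log n)]$, this is $e^{O(\log^2 n)}$.
\end{itemize}
For the factor $1/t$ near the origin, we use the removable singularity together with the maximum principle on a small disk around $0$ to bound $|g|$ by the same quantity. Altogether $M\le \exp(O(\log^2 n))$.

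\textbf{Step 3: choose $N$.} The error $\frac{64M}{15(\rho^2-1)\rho^{2N}}\le\nu/2$ holds once $2N\ln\rho\ge \ln M+c\ln n+O(1)$. This gives $N=O(\log^2 n)$, matching the statement; the quadratic dependence comes from the $e^{|x||\Im t|}$ factor. We may instead shrink $\rho$ to $1+\Theta(1/\log n)$ so that $|\Im t|=O(1)$. Then $M=n^{O(1)}$, but $\ln\rho=\Theta(1/\log n)$ and again $N=O(\log^2 n)$. Either choice works, and we would pick whichever makes the bound on $M$ cleanest.

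\textbf{Main obstacle.} The hard part is Step 2, specifically bounding $\Re C_2$ off the real axis. \Cref{lem:truncation} only controls the real line, and the decay $-t\Si(tR)$ can turn into growth for complex $t$. We would first try the crude termwise estimate using $|tR|=O(1)$ described above; the $\log$-polynomial growth it allows is absorbed into $N=O(\log^2 n)$. The remaining subtlety is making the removable-singularity bound at $t=0$ uniform, and the maximum-principle argument handles that.
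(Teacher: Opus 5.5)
Your proposal is correct and follows the same route as the paper: truncation via \Cref{lem:truncation}, entireness of $g$ from the removable singularity at $t=0$, the Bernstein-ellipse error bound for Gauss--Legendre quadrature, and a termwise bound on $\Re C_2$ off the real axis using $|tR|=O(1)$. The paper uses exactly your second option, a thin ellipse with $|\Im t|\le 1$, giving $\ln\rho=\Theta(1/\log n)$ and $\ln M_g=O(\log n)$. Your primary choice $\rho=2$ also works; there the $\log^2 n$ comes from $\ln M$ through the factor $e^{|x||\Im t|}$ rather than from $1/\ln\rho$.
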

\begin{proof}
Let $g(t) = \frac{e^{itx}\Psi(-t)-e^{-itx}\Psi(t)}{2it}$, which is the analytic continuation used to compute the Gil-Pelaez integrand over the truncated interval $[0,U_{\max}]$. 
Because $\Psi(t)$ is composed of entire functions and the quotient has a removable singularity at $t=0$, $g(t)$ extends analytically to the entire complex plane. 
Thus, classical results for Gauss-Legendre quadrature apply \cite{trefethen2019approximation}. 
The convergence rate is determined by how far the integrand can be analytically continued into the complex plane around the integration interval. 
This neighborhood is measured using a Bernstein ellipse $E_\rho$, i.e., an ellipse in the complex plane with foci at $\pm 1$ that encloses $[-1,1]$, whose size controls the exponential convergence rate; its \emph{parameter} $\rho > 1$ is a single scalar describing how ``large'' the ellipse is (larger $\rho$ corresponds to a fatter ellipse and faster convergence). 
If $g$ remains analytic inside $E_\rho$, then the error for the quadrature evaluation $Q_N(g)$ satisfies
\[\left|\int_0^{U_{\max}} g(t)\,dt - Q_N(g)\right|\le\O{U_{\max}\frac{M_g}{\rho^{2N}}},\]
where $M_g = \sup_{z \in E_\rho} |g(z)|$~\cite{trefethen2019approximation}, and the $U_{\max}$ factor accounts for the affine scaling of the integration interval.
We map the standard interval $[-1,1]$ to $[0,U_{\max}]$ via $t = \frac{U_{\max}}{2}(w+1)$. 
To keep the analytic continuation away from large imaginary values, we choose the ellipse in the $t$-plane so that its imaginary semi-axis remains bounded by a constant $|\Im(t)| \le 1$.
Under the affine map above, this corresponds to a semi-minor axis $b_w = \frac{2}{U_{\max}}$ in the $w$-plane. 
The corresponding Bernstein parameter is
\[\rho = \sqrt{1+b_w^2}+b_w = \sqrt{1+\left(\frac{2}{U_{\max}}\right)^2}+\frac{2}{U_{\max}}.\]
Since $U_{\max} = \Theta(\log n)$ by \Cref{lem:truncation}, $\rho = 1 + \Theta\left(\frac{1}{U_{\max}}\right)$, and therefore $\ln \rho = \Theta\left(\frac{1}{\log n}\right)$. 

It remains to bound the maximum modulus $M_g$. 
Every point on the contour can be written as $t = u+iv$, where $u \in [0,U_{\max}]$ and $|v| \le 1$. 
For the phase factors, we have
\[|e^{\pm itx}| = |e^{\pm i(u+iv)x}| = e^{\mp vx} \le e^{|x|}.\]
Since $x = \O{\log n}$ over the bounds of the binary search interval, this complex phase factor is at most $e^{\O{\log n}}$, contributing at most $\O{\log n}$ upon taking the logarithm.
Near the removable singularity at the origin $t=0$, the analytic continuation evaluates via L'H\^opital's rule to $\lim_{t\to 0} g(t) = x + i\Psi'(0) = x - (\ln R - \gamma) = \O{\log n}$. 
Thus $g(t)$ remains analytically bounded near the origin. 
For $t$ bounded away from the origin, the dominant contribution to the modulus comes from the characteristic functions evaluated at complex arguments $\pm t$:
\[|\Psi(\pm t)| = \exp\left(\Re(C_2(\pm t))\right),\]
where we recall $C_2(t) = it(\ln R-\gamma) - \frac{e^{itR}-1-itR}{R} - it\Cin(tR) - t\Si(tR)$ by \Cref{thm:limit_cf}. 
We now upper bound $\Re(C_2(\pm t))$ term by term. 

First, consider the drift term
\[\pm it\ln R = \pm i(u+iv)\ln R = \mp v\ln R \pm iu\ln R.\]
The imaginary component contributes only to phase, while the real component satisfies $|\mp v\ln R|\le \ln(1/R)$.
By \Cref{lem:bounds}, $\frac{1}{R} = \Theta(\log n)$, so this term contributes at most $\O{\log\log n}$. 

Next, consider the exponential remainder term. 
Since $|tR| \le (U_{\max}+1)R$, $U_{\max} = \Theta(\log n)$, and $R = \Theta\left(\frac{1}{\log n}\right)$, we obtain $|tR| = \Theta(1)$. 
Consequently,
\[|e^{\pm itR}| = |e^{\pm iuR}e^{\mp vR}| = e^{\mp vR} \le e^R = \mathcal{O}(1).\]
Thus the term $-\frac{e^{\pm itR}-1 \mp itR}{R}$ has real part upper bounded by $\O{\frac{1}{R}}=\O{\log n}$.

Finally, over the bounded region $|tR|=\Theta(1)$, the entire functions $\Si(\pm tR)$ and $\Cin(\pm tR)$ remain uniformly bounded. 
Since $|t|\le U_{\max}+1 = \Theta(\log n)$, the terms $\mp it\Cin(\pm tR)$ and $\mp t\Si(\pm tR)$ contribute at most $\O{\log n}$ to the real part.

Combining these estimates, we have
\[\ln M_g = \sup_{t\in E_\rho}\Re(C_2(t)) = \O{\log n}.\]
To achieve quadrature error at most $n^{-c}$, it suffices that $U_{\max}\frac{M_g}{\rho^{2N}}\le n^{-c}$. 
Since $\log\rho = \Theta\left(\frac{1}{\log n}\right)$, $\log U_{\max} = \O{\log\log n}$, and $\log M_g = \O{\log n}$, it suffices to set $N=\O{\log^2 n}$ quadrature node evaluations to achieve additive error $n^{-c}$. 
\end{proof}

\section{Statistical Recovery and End-to-End Correctness}
\label{sec:end-to-end}
To finalize the perfect $L_2$ sampling framework, we formalize the protocol to extract the heavy-hitter index from the compressed linear sketch and prove that the returned index matches the target distribution.

\subsection{The Dense Post-Processing Statistical Test}
\label{sec:statistical:test}
Let $z \in \mathbb{R}^{N}$ represent the fully duplicated frequency vector. 
Recall that in a dense $\CountSketch$, the matrix $G \in \mathbb{R}^{R\times N}$ consists of independent standard Gaussian entries, c.f., \Cref{alg:dense:countsketch}. 
To recover an estimate for any explicitly tracked virtual item $z_{i,j}$, the algorithm computes the scaled dot product between the sketch vector $y = Gz$ and the explicit tracking weights $G_{i,j}$. 
To handle the variance, we partition the $R$ rows into $M = \Theta(\log n)$ independent blocks, each containing $r = \Theta(\log n)$ rows. 
Let $Z$ be a $2$-approximation of $\|z\|_2$.

\begin{figure*}[!htb]
\begin{mdframed}
\begin{enumerate}
\item
Construct the vector $z\in\mathbb{R}^{n'}$ with entries given by $\{\calH_i,N_i\}$, where $n'=\O{n\log n}$.
\item 
Let $\widehat{z_{D(1)}}$ and $\widehat{z_{D(2)}}$ denote the dense $\CountSketch$ estimates of the two largest coordinates in $\cup_{i=1}^n\calH_i$, using $R$ rows.
\item
Let $Z$ be a $2$-approximation to $\|z\|_2$ computed from the dense $\CountSketch$.
\item
Sample $\mu$ uniformly from the interval $[0.99,1.01]$, where $\eps$ is a fixed parameter.
\item
If $|\widehat{z_{D(1)}}|-|\widehat{z_{D(2)}}|<100\mu\eps Z$ or $|\widehat{z_{D(2)}}|<50\mu\eps Z$, return $\FAIL$.
\item
Otherwise, the statistical test succeeds.
\end{enumerate}
\end{mdframed}
\caption{Statistical test for the perfect $L_2$ sampler.}
\label{fig:stat:test}
\end{figure*}

We directly leverage the statistical gap test properties proven in \cite{SwartworthWZ25}. 
The following statements assert that the dense $\CountSketch$ gap test does not bias the output distribution, and that it succeeds with constant probability.

\begin{lemma}[Lemma 4.2 in \cite{SwartworthWZ25}]
\label{lem:swz_4_2}
For any constant $p\in(0,2]$ and $\nu =\Theta(n^{-\zeta/20})$, we have $\PPr{\neg \FAIL \mid D(1)} = \PPr{\neg\FAIL} \pm n^{-\O{c}}$ for all possible $D(1) \in [N]$.
\end{lemma}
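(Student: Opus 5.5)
The plan is to show that, up to an $n^{-\O{c}}$ additive error, the event $\neg\FAIL$ is determined by quantities whose joint law does not depend on which duplicated coordinate attains the maximum. The natural tool is \Cref{lem:hidden:exps}. Let $D(1),D(2),\ldots$ be the anti-ranks of the scaled duplicated vector. With probability $1-\O{e^{-n^{c/3}}}$, for every rank $k$ below $N-n^{9\zeta/10}$, the order statistic $|z_{D(k)}|$ equals $\left[(1\pm\O{n^{-\zeta/10}})\sum_{\tau\le k}E_\tau/\Ex{\sum_{j\ge\tau}|F_{D(j)}|^p}\right]^{-1/p}$. Here the hidden exponentials $E_\tau$ are independent of the anti-ranks, and the denominators are deterministic. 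So, up to a multiplicative $(1\pm n^{-\zeta/10})$ distortion, the sorted magnitudes $|z_{D(1)}|\ge|z_{D(2)}|\ge\cdots$ and hence $\|z\|_2$ are functions of $(E_\tau)$ alone. By construction they are independent of the identity $D(1)$.

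Next I would write the test statistic as a function of independent pieces. The estimates satisfy $\widehat{z_{D(\ell)}}=z_{D(\ell)}+\frac{1}{R}\langle G_{D(\ell)},G z_{-D(\ell)}\rangle+(\text{self term})$. Because $G$ is dense Gaussian and invariant under permutations of columns, the joint law of the estimation errors, conditioned on the multiset of values of $z$, is the same whichever index holds the top value. The same holds for the estimate $Z$, which is a rotation-invariant function of $Gz$. The signs of $z_{D(\ell)}$ are inherited from $x_i$. Since Gaussians are symmetric, replacing $G_{D(\ell)}$ by $-G_{D(\ell)}$ shows the law of $|\widehat{z_{D(\ell)}}|$ depends only on $|z_{D(\ell)}|$. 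The tail simulation is injected as $\calN(0,V_{\mathrm{tail}})$. By $2$-stability it matches the exact tail contribution in distribution, up to the TV error of the sampled $V_{\mathrm{tail}}$ controlled by \Cref{thm:oracle} and \Cref{lem:truncation}. This gives $\PPr{\neg\FAIL\mid D(1),(E_\tau)}=h((E_\tau))\pm n^{-\O{c}}$ for some function $h$ not depending on $D(1)$.

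The main obstacle is the multiplicative $(1\pm n^{-\zeta/10})$ perturbation. It does depend on $D(1)$, because the realized weights $|F_{D(j)}|$ differ from their expectations. Such a perturbation can flip the threshold comparisons in the test when a statistic lies within $n^{-\zeta/10}\cdot\eps Z$ of a threshold $100\mu\eps Z$ or $50\mu\eps Z$. This is where the random $\mu\sim\Unif[0.99,1.01]$ is used. Conditioned on everything else, the probability that a threshold falls inside a window of relative width $n^{-\zeta/10}$ is $\O{n^{-\zeta/10}}$. One needs $Z$ and the gaps to be bounded below with high probability so that the window is small relative to the threshold. With $\nu=\Theta(n^{-\zeta/20})$ absorbing this and the count of such events, averaging over $(E_\tau)$ yields $\PPr{\neg\FAIL\mid D(1)}=\Ex{h}\pm n^{-\O{c}}=\PPr{\neg\FAIL}\pm n^{-\O{c}}$. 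Finally, a union bound adds the failure probabilities of \Cref{thm:dense:countsketch}, \Cref{lem:bounds} and \Cref{lem:hidden:exps}.
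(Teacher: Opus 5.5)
\textbf{Assessment.} The paper gives no proof of this lemma. It imports it from \cite{SwartworthWZ25} and uses it only for the exact duplicated construction $\calD_{ideal}$ in the proof of \Cref{thm:end:to:end}. Your outline reconstructs the standard mechanism of \cite{JayaramW18} behind it, and it is sound:
\begin{itemize}
\item the hidden exponentials are independent of the anti-ranks, via \Cref{lem:hidden:exps};
\item column-permutation and sign symmetry of the dense Gaussian sketch make the law of the test given $z$ depend only on the sorted magnitudes of $z$;
\item the uniform $\mu$ absorbs the $(1\pm\O{n^{-\zeta/10}})$ distortion.
\end{itemize}
Compared with the bare citation, your argument also makes clear that nothing depends on $p$ beyond \Cref{lem:hidden:exps}.

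\textbf{Points to tighten.}
\begin{itemize}
\item \emph{The simulated-tail paragraph does not belong here.} The lemma concerns the exact construction for general $p$. The simulation error is handled separately by \Cref{lem:tvd:error}, which rests on \Cref{lem:shift} and \Cref{lem:gaussian_tv}. \Cref{thm:oracle} and \Cref{lem:truncation} alone are not enough, since they only control CDF evaluation and domain truncation.
\item \emph{Conditioning on $D(1)=v$.} This can inflate the failure probability of \Cref{lem:hidden:exps} by a factor $1/\PPr{D(1)=v}$. That is harmless, because the failure probability is $e^{-n^{\Omega(1)}}$ while $\PPr{D(1)=v}\ge 1/\poly(n)$ for nonzero integer entries of size $\poly(n)$. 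Alternatively, rerun the concentration conditionally: given $D(1)=v$, the remaining anti-ranks are still weighted sampling without replacement. Either way, say so explicitly.
\item \emph{The $\mu$-step.} Couple the true and idealized vectors using the same $G$ and $\mu$. The difference $\Delta z$ is independent of $G$, so applying \Cref{thm:dense:countsketch} to $\Delta z$ shows each coordinate estimate moves by $\O{n^{-\zeta/10}\|z\|_2}$. Gaussian concentration moves $Z$ by $\O{\|\Delta z\|_2}$. Hence the bad window for $\mu$ has length $\O{n^{-\zeta/10}/\eps}$, which is still polynomially small. You need $Z=\Theta(\|z\|_2)$, but no lower bound on the gaps.
\item \emph{The bottom order statistics.} The smallest $n^{9\zeta/10}$ order statistics are not covered by \Cref{lem:hidden:exps}. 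Set their idealized values to $0$ and fold them into the same perturbation. This is fine because their share of $\|z\|_2^2$ is at most roughly $n^{9\zeta/10}/N$.
\end{itemize}
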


\begin{lemma}[Lemma 4.7 in \cite{SwartworthWZ25}]
\label{lem:swz_4_7}
Let $\eps = \frac{1}{\Theta(\sqrt{\log n})}$. 
Then with probability at least $0.75$, we have both $|z_{D(1)}| - |z_{D(2)}| = \Theta(\sqrt{k} \|x\|_2)$ and the statistical test succeeds.
\end{lemma}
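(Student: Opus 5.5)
The plan is to reduce the statement to three constant-probability events about the exponential scalings, and then show that on their intersection the dense $\CountSketch$ estimates are accurate enough for both inequalities in \Cref{fig:stat:test} to hold. The argument takes place in the idealized duplicated process with $k=n^C$ copies per coordinate. Afterwards we transfer it to the simulated process: by \Cref{thm:oracle}, \Cref{lem:truncation} and \Cref{lem:bounds}, the simulated tail variances agree with the exact ones up to total variation $n^{-\O{c}}$, which costs only an additive $1/\poly(n)$ in the success probability. Throughout, $p=2$, so $z_{i,j}=x_i/\sqrt{e_{i,j}}$ and $\|z\|_2^2=\sum_i x_i^2\sum_j e_{i,j}^{-1}$.

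\textbf{Step 1: top two order statistics.} Apply \Cref{lem:hidden:exps} with $p=2$ and indices $1,2$. Since $\EEx{}{\sum_{j\ge \tau}|F_{D(j)}|^2}=(1\pm o(1))\,k\|x\|_2^2$ for $\tau\in\{1,2\}$, we obtain
\[|z_{D(1)}|=(1\pm o(1))\sqrt{k}\,\|x\|_2\,E_1^{-1/2},\qquad |z_{D(2)}|=(1\pm o(1))\sqrt{k}\,\|x\|_2\,(E_1+E_2)^{-1/2},\]
where $E_1,E_2$ are independent standard exponentials. Fix constants $0<a<b$ with $\PPr{E_1\le a,\ E_2\ge b}\ge 0.95$ is impossible for a single choice, so instead choose them so that the event $\calE_1=\{E_1\in[a',a],\ E_2\in[b,b']\}$ has probability at least $0.9$ after also allowing $a'$ small and $b'$ large. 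On $\calE_1$ both $|z_{D(2)}|$ and the gap $|z_{D(1)}|-|z_{D(2)}|$ are $\Theta(\sqrt{k}\|x\|_2)$. This already gives the first conclusion of the lemma.

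\textbf{Step 2: the norm $\|z\|_2$ (main obstacle).} The difficulty is that the $e_{i,j}^{-1}$ have infinite mean, so $\|z\|_2$ cannot be controlled by a plain expectation bound. I would remove the top $\tau=\Theta(\log n)$ order statistics globally. The remaining reciprocals are conditioned below $Y_{(\tau)}=\Theta(k/\log n)$, mirroring \Cref{lem:bounds}, and their conditional expectation is $\O{\ln k}$ each, as in the $E_1(1/(Rm))$ computation of \Cref{thm:limit_cf}. Markov's inequality then bounds the tail contribution by $\O{k\log n\,\|x\|_2^2}$ with probability $0.95$. The top $\tau$ values contribute at most $\sum_{\ell\le\tau}k\|x\|_2^2/(E_1+\dots+E_\ell)=\O{k\log\log n\,\|x\|_2^2}$ with probability $0.95$, by the same Gamma concentration used in \Cref{lem:bounds}. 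Call the resulting event $\calE_2$; on it, $\|z\|_2^2=\O{k\log n\,\|x\|_2^2}$. Hence $Z=\O{\sqrt{k\log n}\,\|x\|_2}$, and
\[\eps Z=\O{\sqrt{k}\,\|x\|_2}\]
with a leading constant proportional to the constant hidden in $\eps=1/\Theta(\sqrt{\log n})$. This constant can therefore be made an arbitrarily small multiple of $\sqrt{k}\|x\|_2$.

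\textbf{Step 3: sketch error and conclusion.} By \Cref{thm:dense:countsketch} with $r=\Theta(\log n)$ rows per block and a large enough constant, every estimate has error at most $\tfrac{4}{\sqrt r}\|z\|_2$ with high probability. On $\calE_2$ this error is a small constant times $\sqrt{k}\|x\|_2$. Choose the constant in $\eps$ so that, for $\mu\le 1.01$, both $100\mu\eps Z$ and $50\mu\eps Z$ plus twice the sketch error lie below the $\Theta(\sqrt{k}\|x\|_2)$ lower bounds from $\calE_1$. Then neither $\FAIL$ condition triggers. A union bound over the failures of $\calE_1$, of $\calE_2$, of the sketch, and of the simulation coupling gives success probability at least $0.75$.
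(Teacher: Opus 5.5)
The paper gives no proof of this statement; it imports it from \cite{SwartworthWZ25}. Your three-part plan is the natural self-contained argument and is sound in outline: describe the top two values by the exponential race, bound $\|z\|_2$ by $\O{\sqrt{k\log n}}\|x\|_2$ using the logarithmic growth of the truncated mean of $e^{-1}$, and control the $\frac{4}{\sqrt r}\|z\|_2$ sketch error, fixing the constants in $\eps$ and $r$ last. However, Step~1 as written is false. With $0<a<b$ you get $\PPr{E_1\le a,\ E_2\ge b}=(1-e^{-a})e^{-b}<(1-e^{-a})e^{-a}\le\frac14$. Intersecting with $\{E_1\ge a'\}$ and $\{E_2\le b'\}$ only shrinks this event, so $\mathcal{E}_1$ can never have probability $0.9$.

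The fix is to drop the ordering $a<b$ entirely. The normalized gap $E_1^{-1/2}-(E_1+E_2)^{-1/2}$ is decreasing in $E_1$ and increasing in $E_2$. On $\{E_1\in[a',a],\,E_2\in[b,b']\}$ it is therefore at least $a^{-1/2}-(a+b)^{-1/2}>0$ for any positive constants. For example, $a'=b=0.01$ and $a=b'=5$ give probability about $0.97$. The resulting gap constant is tiny (about $4.5\cdot 10^{-4}$). That is harmless only because the constants in $\eps$ and $r$ are chosen afterwards, and you should say so explicitly.

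Step~2 conflates two thresholds. After removing the top $\tau$ values of $z^2$ globally, the survivors satisfy $x_i^2e_{i,j}^{-1}\le T$ with $T=\Theta(k\|x\|_2^2/\log n)$, i.e.\ $e_{i,j}^{-1}\le T/x_i^2$. This is not the per-coordinate threshold $\Theta(k/\log n)$ of \Cref{lem:bounds}. The conditional mean is then about $\ln(T/x_i^2)$, which is not $\O{\ln k}$ uniformly when $x_i^2\ll\|x\|_2^2$. The weighted total is still fine: $\sum_i k x_i^2\ln(T/x_i^2)=\O{k\|x\|_2^2(\ln k+\ln n)}$, because $\sum_i\frac{x_i^2}{\|x\|_2^2}\ln\frac{\|x\|_2^2}{x_i^2}\le\ln n$. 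You should state the bound in that form.

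Gamma concentration also does not control the $\ell=1$ term $1/E_1$ of your head sum, which has infinite mean. The simpler route is that on $\mathcal{E}_1$ the head contributes at most $\tau|z_{D(1)}|^2=\O{k\log n}\,\|x\|_2^2$. With these repairs the union bound closes with room to spare.
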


Putting together \Cref{lem:swz_4_2} and \Cref{lem:swz_4_7}, we have:

\begin{restatable}[Ideal Distribution and Target Gap]{lemma}{lemidealdist}
\label{lem:ideal_dist}
Let $D(1)$ be the coordinate that maximizes the unsketched, perfect virtual vector $z$. 
Then we have $\PPr{D(1) = j} = \frac{x_j^2}{\|x\|_2^2}$. 
\end{restatable}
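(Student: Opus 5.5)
The plan is to prove the claim by max-stability of exponentials applied to the duplicated vector. Interpret $D(1)$ as the \emph{original} coordinate $j\in[n]$ owning the largest virtual entry. With $p=2$ and $z_{i,\ell}=x_i/e_{i,\ell}^{1/2}$, we have $z_{i,\ell}^2=x_i^2/e_{i,\ell}$. The virtual entry $(i,\ell)$ is therefore the largest in magnitude exactly when $e_{i,\ell}/x_i^2$ is the smallest among all indices with $x_i\neq 0$. Coordinates with $x_i=0$ contribute $z_{i,\ell}=0$; the maximum is nonzero almost surely whenever $x\neq 0$, so these coordinates never win and can be discarded.

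The first step is to rescale. For each $(i,\ell)$, the variable $e_{i,\ell}/x_i^2$ is exponential with rate $x_i^2$, since $\PPr{e/x_i^2>s}=e^{-x_i^2 s}$. All $k n$ such variables are independent.

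The second step is to group copies. For fixed $i$, the minimum over its $k$ copies is exponential with rate $k x_i^2$. This follows from the product of tail probabilities, which is the minimum form of the max-stability property in \Cref{fact:exp}. The resulting $n$ group minima $W_i\sim\Exp(kx_i^2)$ remain independent.

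The third step is the standard race among independent exponentials. We compute
\[\PPr{W_j<\min_{i\neq j}W_i}=\int_0^\infty kx_j^2e^{-kx_j^2s}\prod_{i\neq j}e^{-kx_i^2 s}\,ds=\frac{kx_j^2}{k\|x\|_2^2}=\frac{x_j^2}{\|x\|_2^2}.\]
Ties occur with probability zero because the distributions are continuous. The factor $k$ cancels, which is exactly why duplication does not bias the winner.

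The only point needing care is definitional rather than technical. The lemma concerns the ideal unsketched vector $z$, so nothing about the head/tail simulation, $\CountSketch$ error, or $\FAIL$ enters the argument. Those effects are handled separately: \Cref{lem:swz_4_2} makes the test nearly independent of $D(1)$, and the tail oracle bounds give the simulation fidelity. I would state explicitly that the event ``$D(1)=j$'' means the winning virtual index belongs to block $j$, and that $x\neq 0$ is assumed.
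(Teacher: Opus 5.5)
Your proof is correct and follows essentially the same route as the paper: rescale to $e_{i,\ell}/x_i^2\sim\Exp(x_i^2)$, run the exponential race, and observe that the duplication factor $k$ cancels. The only difference is cosmetic. You first collapse each block to its minimum $W_i\sim\Exp(kx_i^2)$ and then race the $n$ block minima, whereas the paper races all $nk$ copies directly and sums the per-copy probabilities $x_j^2/(k\|x\|_2^2)$. Your explicit treatment of coordinates with $x_i=0$ and of ties is a small bit of extra care that the paper omits.
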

\begin{proof}
To find the maximum magnitude in the virtual vector $z$, we can equivalently look for the minimum of the squared reciprocal values. 
Let 
\[M_{i,j} := \frac{1}{z_{i,j}^2} = \frac{E_{i,j}}{x_i^2}.\]
Because $E_{i,j} \sim \Exp(1)$, standard scaling of exponential random variables implies that $M_{i,j}$ follows an exponential distribution with rate parameter $\lambda_i = x_i^2$, denoted $M_{i,j} \sim \Exp(x_i^2)$. 
Moreover, the global maximum magnitude $|z_{D(1)}|$ corresponds to the global minimum $V_1 := \min_{i,j} M_{i,j}$. 
Since the collection $\{M_{i,j}\}$ consists of $n \times k$ mutually independent exponential random variables, the probability that the minimum belongs to a specific copy $j$ of coordinate $c$ is given by the ratio of its rate to the total sum of all rates:
\[\PPr{M_{c,j} = V_1} = \frac{x_c^2}{\sum_{i=1}^n \sum_{m=1}^k x_i^2} = \frac{x_c^2}{k \|x\|_2^2}.\]
Because each coordinate $c$ is duplicated $k$ times, the total probability that the global maximum corresponds to coordinate $c$ is the sum of these mutually exclusive events:
\[\PPr{D(1) \in C_c} = \sum_{j=1}^k \frac{x_c^2}{k \|x\|_2^2} = \frac{x_c^2}{\|x\|_2^2}.\]
Hence, we recover the desired $L_2$ sampling distribution. 
\end{proof}

\subsection{Bounding the Algorithmic Simulation Error}
\label{sec:simulation:error}
To show that the sampler is statistically close to the desired distribution, we upper bound two sources of approximation error:
\begin{enumerate}
\item 
\textbf{Head approximation error:} 
The discrepancy between the finite distribution governing the largest coordinates and its continuous stable limit.
\item 
\textbf{Tail approximation error:} 
The perturbation introduced by numerically evaluating the limiting tail distribution, including the resulting Kolmogorov-Smirnov (KS) error from the quadrature procedure.
\end{enumerate}

Instead of directly coupling the characteristic functions of the finite and limiting constructions, we work in the point process framework developed by \cite{SwartworthWZ25}, which is more stable for analyzing extreme order statistics.

\begin{lemma}[Lemma 4.14 in \cite{SwartworthWZ25}]
\label{lem:head_tv}
Let $A = [y_0, \infty)$ denote the region capturing the top $\tau = \Theta(\log n)$ normalized inverse exponential variables. 
Then the total variation distance between the finite point process induced by the exact construction and its limiting continuous point process, restricted to $A$, is bounded by $\O{n^{-C/2}}$.
\end{lemma}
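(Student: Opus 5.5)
The plan is to compare the two point processes directly. Both are described by their intensity measures on $A=[y_0,\infty)$, so I will bound the total variation distance between a Poisson process and a binomial process with matching intensity.

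\textbf{The two processes.} In the exact construction, the normalized points are $Y_j/k$ with $Y_j=E_j^{-1}$ i.i.d. for $j\in[k]$.

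\begin{itemize}
\item The restriction to $A$ is a binomial process. Its number of points is $K_A\sim\mathrm{Bin}(k,p_A)$, where $p_A=\PPr{Y/k\ge y_0}=1-e^{-1/(ky_0)}$. Given $K_A$, the points are i.i.d. with the conditional law $\mu_A$ of $Y/k$ on $A$.
\item The limiting process is Poisson with intensity $y^{-2}\,dy$ on $(0,\infty)$. This is the $k\to\infty$ limit of $k\cdot\PPr{Y/k\in dy}=y^{-2}e^{-1/(ky)}\,dy$. Restricted to $A$, it has $\mathrm{Poi}(\lambda_A)$ points with $\lambda_A=1/y_0$, and given the count the points are i.i.d. with law $\nu_A(dy)=y_0y^{-2}\,dy$.
\end{itemize}

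\textbf{Choosing $y_0$.} I take $y_0=\Theta(1/\tau)$. This matches \Cref{lem:bounds}, where $R=Y_{(\tau)}/m=\Theta(1/\log n)$. With high probability, $A$ then contains all of the top $\tau$ points, so restricting to $A$ loses nothing for the head.

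\textbf{Splitting the distance.} Using the common mixture structure and the coupling inequality, I bound
\[
d_{\mathrm{TV}}\le d_{\mathrm{TV}}\bigl(\mathrm{Bin}(k,p_A),\mathrm{Poi}(\lambda_A)\bigr)+\Ex{K_A}\cdot d_{\mathrm{TV}}(\mu_A,\nu_A),
\]
plus a small correction from conditioning on equal counts. I will control each of the two pieces separately.

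\textbf{Count term.} Le Cam's inequality gives $d_{\mathrm{TV}}(\mathrm{Bin}(k,p),\mathrm{Poi}(kp))\le kp^2$. Since $kp_A=\Theta(\tau)$, we have $kp_A^2=\O{\tau^2/k}$. The mean mismatch is $|kp_A-\lambda_A|=\O{1/(ky_0^2)}=\O{\tau^2/k}$. Poisson distributions are Lipschitz in their mean under total variation, so this mismatch also contributes $\O{\tau^2/k}$. Together the count term is $\O{\log^2 n/n^{C}}$.

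\textbf{Location term.} The density ratio is $\frac{d\mu_A}{d\nu_A}(y)=e^{-1/(ky)}\cdot\frac{\lambda_A}{kp_A}$. On $A$ this ratio is $1\pm\O{1/(ky_0)}=1\pm\O{\tau/k}$. Multiplying by $\Ex{K_A}=\Theta(\tau)$ gives $\O{\tau^2/k}$.

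\textbf{Conclusion.} Both pieces are $\O{\log^2 n\cdot n^{-C}}\le\O{n^{-C/2}}$ for large $n$. Adding the $1/\poly(n)$ probability that some top-$\tau$ point falls outside $A$, obtained from the Gamma concentration in \Cref{lem:bounds}, gives the stated bound.

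The main obstacle I expect is the choice of $A$. The paper's head uses the random threshold $Y_{(\tau)}$, whereas the clean comparison above needs a fixed deterministic set. I would first handle this by taking $y_0$ somewhat below the typical value of $Y_{(\tau)}/k$, so that the top $\tau$ points lie in $A$ with high probability. I would then note that the order statistics inside $A$ are a measurable function of the restricted process. Total variation cannot increase under such a map, so the bound transfers to the head.
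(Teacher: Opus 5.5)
Your proposal is correct. The paper does not prove this statement itself; it imports it as Lemma~4.14 of \cite{SwartworthWZ25}. So there is no in-text argument to compare against, and yours works as a self-contained replacement.

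The decomposition is sound. Le Cam's inequality, together with the Lipschitz dependence of $\mathrm{Poi}(\lambda)$ on $\lambda$, handles the counts. After maximally coupling $K_A$ with the Poisson count $N_A$, the location error tensorizes to at most $\sum_j \PPr{K_A=N_A=j}\, j\, d_{\mathrm{TV}}(\mu_A,\nu_A)\le \Ex{K_A}\,d_{\mathrm{TV}}(\mu_A,\nu_A)$. So the extra ``correction from conditioning on equal counts'' is not needed.

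Your computations of $|kp_A-\lambda_A|\le 1/(2ky_0^2)$ and of the density ratio $e^{-1/(ky)}\lambda_A/(kp_A)=1\pm\O{1/(ky_0)}$ are right. The resulting bound $\O{\tau^2/k}=\O{n^{-C}\log^2 n}$ is much stronger than the stated $\O{n^{-C/2}}$. Even a union over the $n$ independent per-coordinate processes costs only a factor $n$, which stays within $\O{n^{-C/2}}$ for $C\ge 3$.

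One bookkeeping remark on your final step. With, e.g., $y_0=1/(2\tau)$, the probability that fewer than $\tau$ points fall in $A$ is $e^{-\Omega(\tau)}=n^{-\Omega(c_1)}$. This is controlled by the head constant $c_1$, not by $C$, so it is not automatically absorbed into $\O{n^{-C/2}}$. It does not matter for the lemma as literally stated, which concerns the restricted processes. It does matter if you want a statement about the top-$\tau$ head itself.

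For the head you can avoid the fixed region altogether using R\'enyi's representation, which the paper already uses in \Cref{lem:bounds}:
\begin{itemize}
\item Under $u=1/y$, the limiting intensity $y^{-2}\,dy$ becomes Lebesgue measure. So the limiting head is $(\Gamma_1,\ldots,\Gamma_\tau)$ with i.i.d.\ $\Exp(1)$ spacings.
\item The finite head $(kE_{D(1)},\ldots,kE_{D(\tau)})$ has independent spacings $\delta_\ell\,\frac{k}{k-\ell+1}\sim\Exp\bigl(\tfrac{k-\ell+1}{k}\bigr)$.
\item Spacings determine the order statistics bijectively, so the head TV is at most $\sum_{\ell\le\tau}d_{\mathrm{TV}}\bigl(\Exp(\tfrac{k-\ell+1}{k}),\Exp(1)\bigr)=\O{\tau^2/k}$.
\end{itemize}
This needs no choice of $y_0$ and has no failure event. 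Your route matches the lemma's fixed-region formulation; the R\'enyi route is shorter and gives the head statement directly.
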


The tail component is obtained by numerically evaluating a truncated limiting $1$-stable CDF using Gauss-Legendre quadrature. 
This introduces a controlled additive error in the evaluations of the limiting CDF. 
Rather than translating this error into a point-wise perturbation of the sampled quantile or of the induced variance, we compare the resulting numerical sampler directly with the limiting distribution in total variation distance. 
We then propagate this guarantee through the variance transformation and the Gaussian sketch using the data processing inequality.

\begin{restatable}[Numerical Tail Sampling in Total Variation]{lemma}{lemshift}
\label{lem:shift}
Fix constants $0<a<b$. Suppose $R\in\left[\frac{a}{\log n},\frac{b}{\log n}\right]$. 
Let $\calZ$ denote the limiting distribution from \Cref{thm:limit_cf}, with CDF $F_{\calZ}$ and density $f_{\calZ}$. 
Suppose \Cref{alg:oracle:invert} performs binary search on $[-B,B]$, terminates when the interval width is at most $h$, and, after termination, returns a uniformly random point in the final interval. 
Suppose furthermore that every CDF evaluation has additive error at most $\eta$. 
Let $\widehat Z$ denote the resulting sample. 
Then
\[d_{TV}(\widehat Z,\calZ)\le\O{\frac{B}{h}\eta}+\PPr{|\calZ|>B}+\O{Bh\|f'_{\calZ}\|_\infty}.\]
Consequently, for every fixed $c>0$, by taking $B=\Theta(\log n)$, $h=n^{-C_{\mathrm{bin}}}$, and $\eta=n^{-C_{\mathrm{cdf}}}$ for sufficiently large constants $C_{\mathrm{bin}},C_{\mathrm{cdf}}$, we have $d_{TV}(\widehat Z,\calZ)\le \O{n^{-c}}$. 
\end{restatable}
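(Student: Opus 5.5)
We compare the numerical sampler against an idealized version that differs only in using exact CDF evaluations. Let $\widetilde Z$ be the output of \Cref{alg:oracle:invert} run with exact evaluations of $F_{\calZ}$ on $[-B,B]$. The final interval is then a fixed dyadic cell of a partition of $[-B,B]$ into $K=\O{B/h}$ cells $I_1,\ldots,I_K$ of width at most $h$. We also clamp to $I_1$ or $I_K$ when $U<F_{\calZ}(-B)$ or $U>F_{\calZ}(B)$. Cell $I_r=[a_r,a_{r+1}]$ is selected exactly when $U\in[F_{\calZ}(a_r),F_{\calZ}(a_{r+1}))$, so $\PPr{\widetilde Z\in I_r}=\calZ(I_r)$ for interior cells. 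The triangle inequality splits the problem as $d_{TV}(\widehat Z,\calZ)\le d_{TV}(\widehat Z,\widetilde Z)+d_{TV}(\widetilde Z,\calZ)$, and we bound each term separately.

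\textbf{Step 1: numerical error, giving $\O{\frac{B}{h}\eta}$.} Couple both runs with the same $U$ and $W$. The two binary searches agree unless, at some step, $U$ falls within $\eta$ of the exact CDF value $F_{\calZ}(m)$ at the queried midpoint $m$. Every midpoint queried in any run is one of the $\O{B/h}$ dyadic grid points. A union bound over these points gives disagreement probability at most $2\eta\cdot\O{B/h}$. When the final intervals agree, the uniform point $W$ gives identical outputs, so $d_{TV}(\widehat Z,\widetilde Z)\le\O{\frac{B}{h}\eta}$.

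\textbf{Step 2: discretization and truncation, giving $\PPr{|\calZ|>B}+\O{Bh\|f'_{\calZ}\|_\infty}$.} The density of $\widetilde Z$ is $\calZ(I_r)/|I_r|$ on each interior cell $I_r$. The mass of $\calZ$ outside $[-B,B]$, which the clamped end cells must absorb, contributes at most $\PPr{|\calZ|>B}$. On $I_r$, the mean value theorem gives $|f_{\calZ}(x)-\calZ(I_r)/|I_r||\le h\|f'_{\calZ}\|_\infty$. Integrating over $[-B,B]$ gives $\O{Bh\|f'_{\calZ}\|_\infty}$.

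\textbf{Step 3: parameters.} For the derivative term, Fourier inversion gives $\|f'_{\calZ}\|_\infty\le\frac{1}{2\pi}\int|t||\Psi(t)|\,dt$. \Cref{lem:truncation} bounds this by $\O{e^{2/R}}$, which is $n^{\O{1}}$ because $\frac1R=\Theta(\log n)$ when $R\in[\frac{a}{\log n},\frac{b}{\log n}]$. Choosing $C_{\mathrm{bin}}$ large enough then makes $Bh\|f'_{\calZ}\|_\infty\le n^{-c}$. Choosing $C_{\mathrm{cdf}}\ge C_{\mathrm{bin}}+c+1$ handles the $\frac{B}{h}\eta$ term.

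\textbf{Main obstacle: the tail term $\PPr{|\calZ|>B}$.} Since $\calZ$ is a truncated version of a $1$-stable law, its jumps are bounded by $R$, so we expect exponential rather than polynomial tails. I would try to establish this with exponential Markov bounds, $\PPr{\calZ>B}\le e^{-sB}\Ex{e^{s\calZ}}$, using $s=\Theta(1/R)$. The moment generating function is obtained by continuing $\ln\Psi$ to $t=-is$, which gives
\[\ln\Ex{e^{s\calZ}}=s(\ln R-\gamma)+\int_0^R\frac{e^{su}-1-su}{u^2}\,du=\O{\tfrac1R}.\]
For the lower tail, the analogous integral with $e^{-su}$ is $\O{s^2R}$.

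Taking $s=\Theta(\log n)$ then gives $\PPr{|\calZ|>B}\le n^{-c}$ for $B=\Theta(\log n)$ with a large enough constant. The subtle point is that this requires $B$ to exceed $\O{1/R}+\ln(1/R)$ plus a margin of order $c\log n$, which is consistent with $B=\Theta(\log n)$.
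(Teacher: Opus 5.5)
Your proposal is correct and follows the paper's proof step for step. It uses an exact-CDF binary search as the intermediary, coupled through the same $U,W$ with a union bound over the $\O{B/h}$ dyadic comparison points. It controls the cell-averaging of $f_{\calZ}$ by $\|f'_{\calZ}\|_\infty\le\frac{1}{2\pi}\int|t||\Psi(t)|\,dt=n^{\O{1}}$ via \Cref{lem:truncation}, and bounds $\PPr{|\calZ|>B}$ by exponential-moment Markov bounds obtained by continuing $\ln\Psi$ to imaginary arguments. The only difference is that the paper simply takes $s=\pm1$, since both exponential moments are already $n^{o(1)}$ and $B=C_B\log n$ then suffices directly; your choice $s=\Theta(\log n)$ is valid but gives a stronger tail bound than needed.
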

\begin{proof}
Let $\overline Z$ denote the output of the same binary-search procedure when every comparison uses the exact CDF $F_{\calZ}$, coupled to $\widehat Z$ using the same random variables $U$ and $W$.

The binary-search tree has $\O{B/h}$ possible internal comparison points. 
At any such point $q$, the exact and numerical comparisons can differ only when $U$ lies between $F_{\calZ}(q)$ and its numerical approximation. 
Since the CDF approximation error is at most $\eta$, this occurs with probability at most $\eta$. 
A union bound therefore gives
\[d_{TV}(\widehat Z,\overline Z)\le \O{\frac{B}{h}\eta}.\]
We next compare $\overline Z$ with $\calZ$. 
On every interior terminal interval $I$ of width at most $h$, the density of $\overline Z$ is the average of $f_{\calZ}$ over $I$. 
Hence, for $x\in I$,
\[\left|f_{\calZ}(x)-\frac{1}{|I|}\int_I f_{\calZ}(y)\,dy\right|\le h\|f'_{\calZ}\|_\infty.\]
Summing over the intervals contained in $[-B,B]$ gives
\[d_{TV}(\overline Z,\calZ)\le\PPr{|\calZ|>B}+\O{Bh\|f'_{\calZ}\|_\infty}.\]
Thus
\[d_{TV}(\widehat Z,\calZ)\le\O{\frac{B}{h}\eta}+\PPr{|\calZ|>B}+\O{Bh\|f'_{\calZ}\|_\infty}.\]
It remains to bound the last two terms. 
From the proof of \Cref{thm:limit_cf}, for $s\in\{-1,1\}$,
\[\log \Ex{\exp(s\calZ)}=s(\log R-\gamma)+\int_0^R\frac{e^{su}-1-su}{u^2}\,du.\]
Since $R=\Theta(1/\log n)$, both exponential moments are $n^{o(1)}$. 
Hence, by Markov's inequality, for $B=C_B\log n$ with $C_B$ sufficiently large,
\[\PPr{|\calZ|>B}\le n^{-c-2}.\]

Moreover, through Fourier inversion and \Cref{lem:truncation}, we have
\[\|f'_{\calZ}\|_\infty\le\frac{1}{2\pi}\int_{\mathbb R}|t||\Psi_{\calZ}(t)|\,dt\le\frac{1}{2\pi}\int_{\mathbb R}|t|\exp\left(\frac{2}{R}-\kappa|t|\right)\,dt=n^{\O{1}}.\]
Taking $h=n^{-C_{\mathrm{bin}}}$ and $\eta=n^{-C_{\mathrm{cdf}}}$ for sufficiently large constants finishes the proof. 
The required CDF accuracy is achievable with $\O{\log^2 n}$ quadrature nodes by \Cref{thm:oracle}.
\end{proof}

\begin{restatable}[Tail Simulation under Post-Processing]{lemma}{lemperturbationrobust}
\label{lem:perturbation_robustness}
Let $Z$ and $\widetilde Z$ be random variables satisfying $d_{TV}(Z,\widetilde Z)\le\eps$. 
Let $T(z)=\max(0,m(z+\log m))$. 
Then $d_{TV}(T(Z),T(\widetilde Z))\le\eps$. 
Furthermore, for any fixed $x\in\mathbb R$ and any number $s$ of sketch rows, if $G\sim\mathcal N(0,I_s)$ is independent randomness, then
\[d_{TV}\left(x\sqrt{T(Z)}\,G,\,x\sqrt{T(\widetilde Z)}\,G\right)\le\eps.\]
The same bound holds after any subsequent deterministic or randomized post-processing of these sketch values.
\end{restatable}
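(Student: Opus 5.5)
The plan is to reduce everything to the data processing inequality for total variation distance: for any measurable map $\Phi$ (deterministic, or randomized via independent auxiliary randomness), $d_{TV}(\Phi(A),\Phi(B))\le d_{TV}(A,B)$. I would first recall a short justification. For deterministic $\Phi$, use the event characterization $d_{TV}(A,B)=\sup_S|\PPr{A\in S}-\PPr{B\in S}|$ together with $\PPr{\Phi(A)\in S'}=\PPr{A\in\Phi^{-1}(S')}$. Alternatively, take a maximal coupling of $(A,B)$ with $\PPr{A\ne B}=d_{TV}(A,B)$; then $\Phi(A)\ne\Phi(B)$ only on $\{A\neq B\}$. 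The randomized case is handled by conditioning on the independent randomness $G$ (or $W$), or by applying the coupling to the pair $(A,G)$ and $(B,G)$ with a shared $G$.

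\textbf{Step 1.} The map $T(z)=\max(0,m(z+\log m))$ is continuous and hence Borel measurable. Applying data processing directly to $Z$ and $\widetilde Z$ gives $d_{TV}(T(Z),T(\widetilde Z))\le\eps$. One technicality: $T(Z)$ has an atom at $0$, so it has no density. The paper's density-based definition of $d_{TV}$ then does not apply, and I would work with the supremum-over-events form, which agrees with the density form whenever densities exist. The same remark applies to the later Gaussian-sketch vectors, which may carry an atom at the origin.

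\textbf{Step 2.} Fix $x$ and let $G\sim\calN(0,I_s)$ be independent of both $Z$ and $\widetilde Z$. Independence gives $d_{TV}((Z,G),(\widetilde Z,G))=d_{TV}(Z,\widetilde Z)$: couple $Z,\widetilde Z$ maximally and draw a single $G$ independently of the coupled pair. The map $\Phi(z,g)=x\sqrt{T(z)}\,g$ is continuous, since $\sqrt{\cdot}$ is applied to the nonnegative quantity $T(z)$; this is precisely why the truncation at $0$ matters. Data processing applied to $\Phi$ then yields the stated bound.

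\textbf{Step 3.} Any further deterministic or randomized post-processing is again data processing. Randomized post-processing is a Markov kernel, and the coupling argument still applies: run the kernel with shared internal randomness, so the outputs can differ only on the event where the inputs differ.

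The main subtlety is not computational. It is ensuring that the auxiliary randomness, namely $G$ and the randomness used in later post-processing, is independent of the choice between $Z$ and $\widetilde Z$ and is shared between the two coupled experiments. If that independence failed, the bound could break, so I would state that hypothesis explicitly.
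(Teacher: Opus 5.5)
Your proposal is correct and follows the paper's proof: both apply the data processing inequality once to the deterministic map $T$ and once to the Gaussian step. The paper phrases that step as the Markov kernel $v\mapsto\mathcal N(0,x^2vI_s)$, while you phrase it as a deterministic map of $(Z,G)$ with a shared independent $G$, which is the same thing; your remark that $T(Z)$ has an atom at $0$, so that the supremum-over-events form of $d_{TV}$ is needed rather than the paper's density-based definition, is a correct refinement the paper glosses over.
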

\begin{proof}
The map $z\mapsto T(z)$ is deterministic, so the first claim follows from the data processing inequality for total variation distance. 
Conditioned on $T(Z)=v$, the Gaussian sketch contribution is generated from the Markov kernel 
\[v\mapsto\mathcal N(0,x^2v I_s).\]
Applying a Markov kernel cannot increase total variation distance, which proves the second claim. 
The final statement follows from another application of data processing. 
\end{proof}

\begin{restatable}[Rate of convergence for tail simulation]{lemma}{lemgaussiantv}
\label{lem:gaussian_tv}
Let $Z_m=\frac{1}{m}\sum_{j=1}^m X_j-\log m$, where the variables $X_j$ are independent with probability density function $f_X(x)=C_R x^{-2}e^{-1/x}\mathbf{1}_{(0,Rm]}(x)$, where $C_R=\frac{1}{\PPr{Y\le Rm}}
\le 2$. 
Suppose $m=\Omega(n^C)$ and for fixed constants $0<a<b<$, $R\in\left[\frac{a}{\log n},\frac{b}{\log n}\right]$. 
Then for every fixed $c>0$ and for sufficiently large $C>0$, we have
\[d_{TV}(Z_m,\calZ)\le \O{n^{-c}},\]
where $\calZ$ denotes the limiting distribution.
\end{restatable}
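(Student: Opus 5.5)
The plan is to compare the finite characteristic function $\Psi_{Z_m}(t)$ with the limit $\Psi(t)$ from \Cref{thm:limit_cf} quantitatively, not just pointwise. Then I will turn this Fourier-domain closeness into a total variation bound through the densities.

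\textbf{Step 1: the cumulant error.} I will write $\ln\Psi_{Z_m}(t)-C_2(t)$ as a sum of explicit error terms, all read off from the proof of \Cref{thm:limit_cf}:
\begin{enumerate}
\item the Taylor remainder $m\,\O{|y_m(t)|^2}=\O{t^2\ln^2 m/m}$;
\item the normalization $\exp(1/(Rm))-1=\O{1/(Rm)}$, multiplied by the integral, whose size is $\O{|t|\ln m+t^2R}$;
\item the remainder $\Ein(1/(Rm))=\O{1/(Rm)}$ in the $E_1$ expansion;
\item the dominated-convergence gap $\int_0^R\frac{e^{itu}-1-itu}{u^2}\bigl(1-e^{-1/(um)}\bigr)\,du$.
\end{enumerate}
For the last term I bound the integrand by $\frac{t^2}{2}\min\bigl(1,\frac{1}{um}\bigr)$, which gives $\O{t^2\ln(Rm)/m}$. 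With $m\ge n^C$ and $1/R=\Theta(\log n)$, this yields $|\ln\Psi_{Z_m}(t)-C_2(t)|\le \poly(|t|,\log n)\,n^{-C/2}$ for all $|t|\le T:=n^{C/4}$. To make the logarithm legitimate, I first check that $|y_m|<1/2$ on this range.

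\textbf{Step 2: characteristic-function difference and the high-frequency range.} Step 1 gives $|\Psi_{Z_m}(t)-\Psi(t)|\le |\Psi(t)|\cdot\poly(|t|,\log n)\,n^{-C/2}$ on $|t|\le T$. For $|t|>T$ I need decay of $\Psi_{Z_m}$ itself. I plan to get it from the real part of $m(\Ex{e^{itX/m}}-1)$. That real part equals $-\int_0^R\frac{1-\cos(tu)}{u^2}e^{-1/(um)}\,du$ times a constant. Restricting to $u\in[1/\sqrt m,R]$ bounds it above by $-\Omega(|t|)$ for $|t|\le\sqrt m$, by the same $\Si$ argument as in \Cref{lem:truncation}. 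For $|t|$ beyond $\sqrt m$, I use smoothness of $f_X$: integrating by parts once gives $|\Ex{e^{isX}}|\le 1-\Omega(1)$ for $|s|$ bounded away from zero. Raising to the power $m$ then gives superpolynomially small $|\Psi_{Z_m}|$.

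\textbf{Step 3: from Fourier to total variation.} I will use the standard bound
\[
d_{TV}\le \tfrac12\int_{|x|\le B}|f_{Z_m}-f_{\calZ}|\,dx+\PPr{|Z_m|>B}+\PPr{|\calZ|>B},
\]
with $B=\Theta(\log n)$. Inside $[-B,B]$ I bound the density difference pointwise by $\frac{1}{2\pi}\int|\Psi_{Z_m}-\Psi|\,dt$, which Steps 1 and 2 make at most $e^{2/R}\poly(\log n)n^{-C/2}+o(n^{-c})=n^{\O{1}-C/2}$. The tail of $\calZ$ is handled as in \Cref{lem:shift}. For $Z_m$, the exponential moments $\Ex{e^{\pm Z_m}}$ are finite because $X\le Rm$ is bounded, and they are close to those of $\calZ$ by the same expansion at $t=\mp i$. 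Markov's inequality then controls $\PPr{|Z_m|>B}$. Taking $C$ large relative to $c$ and the $\O{1}$ coming from $e^{2/R}=n^{\O{1}}$ finishes the proof.

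I expect Step 2 in the intermediate-to-large frequency range to be the main obstacle. The limiting bound does not transfer automatically, and the density of $Z_m$ must be shown to be genuinely smooth rather than merely close in distribution.
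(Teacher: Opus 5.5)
Your overall architecture matches the paper's: sup-norm Fourier inversion, a quantitative version of the cumulant computation of \Cref{thm:limit_cf} at low frequencies, decay of both characteristic functions at high frequencies, exponential moments at $t=\pm i$ for the tails, and $d_{TV}\le B\|f_{Z_m}-f_{\calZ}\|_\infty$ plus tail masses. Steps 1 and 3 are fine.

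The gap is in Step 2, on the range $\sqrt m<|t|\le\eps m$, i.e.\ $s=t/m\in(m^{-1/2},\eps]$. Neither of your two arguments covers it.

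\textbf{The ``bounded away from zero'' argument.} The claim $|\Ex{e^{isX}}|\le 1-\Omega(1)$ does not apply here, since $s\to 0$. In fact $1-\Re\Psi_X(s)=\Ex{1-\cos(sX)}=\O{|s|}$ uniformly in $m$, so $|\Psi_X(s)|\ge 1-\O{|s|}$ and no $1-\Omega(1)$ bound can hold. What you need is the sharp, $m$-uniform estimate $|\Psi_X(s)|\le 1-c_0|s|$. Raised to the $m$-th power, this gives $|\Psi_{Z_m}(t)|\le e^{-c_0|t|}$.

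\textbf{The cumulant argument.} Pushing it further does not close the gap either. The real-part bound $\Re(my_m(t))\le-\Omega(|t|)$ does survive: with the cutoff $u\ge 1/|t|$ in place of $u\ge 1/\sqrt m$, it holds for all $|t|\le m$. But $|\Psi_{Z_m}(t)|=|1+y_m(t)|^m$, and converting $\Re(my_m)$ into a bound on $|1+y_m|^m$ costs $\O{m|y_m|^2}$. With your estimate $|y_m|\le|t|\Ex{X}/m$ and $\Ex{X}=\Theta(\ln m)$, this cost is $\O{t^2\ln^2 m/m}$. That swamps the $-\Omega(|t|)$ term once $|t|\gtrsim m/\ln^2 m$.

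\textbf{How the paper fills it.} The paper uses the symmetrization identity $1-|\Psi_X(\xi)|^2=\Ex{1-\cos(\xi(X_1-X_2))}$ for independent copies $X_1,X_2$. It restricts to the event $X_1\in[1,2]$, $X_2\in[A,2A]$ with $A=1/(4|\xi|)$. This event has probability $\Omega(|\xi|)$ and forces $1-\cos(\xi(X_1-X_2))\ge 1-\cos(1/8)$. The result is $|\Psi_X(\xi)|\le 1-c_0|\xi|/2$ for $m^{-3/4}<|\xi|\le\eps$.

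\textbf{A repair within your route.} You could instead replace the crude bound on the imaginary part with the $m$-uniform estimate $|\Ex{\sin(sX)}|\le C|s|(1+\ln(1/|s|))$, obtained by splitting the integral at $x=1/|s|$. Combined with $1-\Re\Psi_X(s)\ge c|s|$, this gives $|\Psi_X(s)|^2\le 1-c|s|$ for $|s|\le s_0$.

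\textbf{Secondary points.} One integration by parts only gives $|\Psi_X(s)|\le C_R\bigl(8e^{-2}+(Rm)^{-2}\bigr)/|s|$. This is below $1$ only once $|s|$ exceeds an absolute constant larger than $1$. On a compact range $[\eps,\O{1}]$ you therefore still need something like the paper's argument: Riemann--Lebesgue and continuity for the untruncated $Y$, plus a perturbation bound for the truncation to get uniformity in $m$. On the other hand, the $1/|s|$ decay from integration by parts is exactly what gives integrability of $|\Psi_X(t/m)|^m$ over $|t|>\eps m$. Your proposal needs this integrability but does not mention it; the paper obtains it via Parseval.
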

\begin{proof}
Let $\Psi_{Z_m}$ and $\Psi_{\calZ}$ denote the characteristic functions of $Z_m$ and $\calZ$, and let $f_{Z_m}$ and $f_{\calZ}$ denote their probability density functions. 
By Fourier inversion,
\[\|f_{Z_m}-f_{\calZ}\|_\infty\le\frac{1}{2\pi}\int_{-\infty}^{\infty}|\Psi_{Z_m}(t)-\Psi_{\calZ}(t)|\,dt.\]
We split the integral into low and high frequency regimes.

For $|t|\le m^{1/4}$, the calculation in the proof of \Cref{thm:limit_cf} can be made quantitative. 
Uniformly for $R\in[a/\log n,b/\log n]$,
\[\left|\log\Psi_{Z_m}(t)-\log\Psi_{\calZ}(t)\right|\le\O{\frac{t^2\log^2 m+|t|\log m\log n}{m}}.\]
Indeed, writing
\[y_m(t)=\EEx{}{\exp(itX/m)}-1,\]
we have $|y_m(t)|=\O{|t|\log m/m}$, so the error in replacing $m\log(1+y_m(t))$ by $my_m(t)$ is $\O{t^2\log^2m/m}$. 
The remaining difference follows from $1-e^{-1/(um)}\le\min\{1,(um)^{-1}\}$ together with
\[E_1(1/(Rm)) = -\gamma+\log R+\log m+\O{1/(Rm)}.\]
Since $|t|\le m^{1/4}$, the displayed error is $o(1)$ uniformly.
Consequently,
\[|\Psi_{Z_m}(t)-\Psi_{\calZ}(t)|\le|\Psi_{\calZ}(t)|\O{\frac{t^2\log^2 m+|t|\log m\log n}{m}}.\]
Using \Cref{lem:truncation} and integrating gives
\[\int_{|t|\le m^{1/4}}|\Psi_{Z_m}(t)-\Psi_{\calZ}(t)|\,dt\le \frac{n^{\O{1}}}{m}.\]

We now consider the region $|t|>m^{1/4}$. 
For $\Psi_{\calZ}$, the exponential decay immediately gives
\[\int_{|t|>m^{1/4}}|\Psi_{\calZ}(t)|\,dt\le\exp\left(-\Omega(m^{1/4})\right).\]
It remains to bound $\Psi_{Z_m}$. 
Note that $\Psi_{Z_m}(t)=\Psi_X\left(\frac{t}{m}\right)^m$ and let $\xi=\frac{t}{m}$. 
We split the analysis into two further regimes.

First consider $m^{-3/4}<|\xi|\le \eps$, where $\eps\in(0,1/16]$ is a sufficiently small constant. 
Since the variance of $X$ diverges with $Rm$, a quadratic Taylor expansion is not useful here. 
Instead, we use the identity
\[1-|\Psi_X(\xi)|^2 = \Ex{1-\cos\left(\xi(X_1-X_2)\right)},\]
where $X_1,X_2$ are independent copies of $X$.
Set $A=\frac{1}{4|\xi|}$. 
Restricting to the event $X_1\in[1,2]$ and $X_2\in[A,2A]$, we have $|\xi(X_2-X_1)|\in[1/8,1/2]$. 
Hence
\[1-\cos\left(\xi(X_2-X_1)\right)\ge1-\cos(1/8)=: \kappa'>0.\]
We next lower bound the probability of these intervals. 
Since $f_X(x)\ge x^{-2}e^{-1/x}$ on $(0,Rm]$, then $\PPr{X_1\in[1,2]}\ge\frac{1}{2e}$ and
\begin{align*}
\PPr{X_2\in[A,2A]}\ge\int_A^{2A}x^{-2}e^{-1/x}
\,dx \ge e^{-1/A}\left(\frac{1}{A}-\frac{1}{2A}\right)\ge\frac{2}{e^{1/4}}|\xi|.
\end{align*}
Combining the two estimates gives $1-|\Psi_X(\xi)|^2\ge c_0|\xi|$ for some constant $c_0>0$. 
Using $\sqrt{1-y}\le 1-\frac{y}{2}$, we obtain
\[|\Psi_X(\xi)|\le1-\frac{c_0}{2}|\xi|.\]
Substituting $\xi=t/m$ therefore gives
\[|\Psi_{Z_m}(t)| = |\Psi_X(t/m)|^m \le \left(1-\frac{c_0|t|}{2m}\right)^m \le \exp\left(-\frac{c_0}{2}|t|\right).\]
Integrating over $m^{1/4}\le |t|\le \eps m$ gives
\[\int_{m^{1/4}\le |t|\le \eps m}|\Psi_{Z_m}(t)| \,dt \le \exp\left(-\Omega(m^{1/4})\right).\]
Finally, consider the regime $|\xi|>\eps$. 
Let $Y$ denote the untruncated random variable with density $f_Y(y)=y^{-2}e^{-1/y}$. 
Since $Y$ has an absolutely continuous distribution, the Riemann-Lebesgue lemma implies $\Psi_Y(\xi)\to 0$ as $|\xi|\to\infty$. 
By compactness, there exists $\delta>0$ such that $|\Psi_Y(\xi)|\le1-2\delta$ for all $|\xi|\ge\eps$. 
The truncated characteristic function differs from $\Psi_Y$ by at most the discarded tail mass:
\[|\Psi_X(\xi)-\Psi_Y(\xi)|\le\frac{\PPr{Y>Rm}}{\PPr{Y\le Rm}}\le\O{\frac{1}{Rm}}.\]
Hence for sufficiently large $m$, we have $|\Psi_X(\xi)|\le 1-\delta$ for all $|\xi|>\eps$. 

We now estimate the remaining integral:
\begin{align*}
\int_{|t|>\eps m}|\Psi_{Z_m}(t)|\,dt&=m\int_{|\xi|>\eps}|\Psi_X(\xi)|^m\,d\xi \\
&\le m(1-\delta)^{m-2}\int_{-\infty}^{\infty}|\Psi_X(\xi)|^2\,d\xi.
\end{align*}
By Parseval's identity,
\[\int_{-\infty}^{\infty}|\Psi_X(\xi)|^2\,d\xi=2\pi\int_0^{Rm}f_X(x)^2\,dx.\]
Since $f_X(x)^2\le4x^{-4}e^{-2/x}$, the integral is finite and bounded by an absolute constant. 
Therefore,
\[\int_{|t|>\eps m}|\Psi_{Z_m}(t)|\,dt\le\exp\left(-\Omega(m)\right).\]
Combining the low and high frequency estimates gives
\[\|f_{Z_m}-f_{\calZ}\|_\infty\le\frac{\poly(n)}{m}.\]
Finally, the same cumulant formulas evaluated at $t=\pm i$ show that
\[\Ex{e^{Z_m}}+\EEx{}{e^{-Z_m}}+\Ex{e^{\calZ}}+\EEx{}{e^{-\calZ}}\le n^{\O{1}}.\]
Thus, for $B=C_B\log n$ with $C_B$ sufficiently large,
\[\PPr{|Z_m|>B}+\PPr{|\calZ|>B}\le n^{-c-2}.\]
Therefore,
\[d_{TV}(Z_m,\calZ)\le B\|f_{Z_m}-f_{\calZ}\|_\infty+n^{-c-2}\le\frac{n^{\O{1}}}{m}+n^{-c-2}.\]
Since $m=\Theta(n^C)$, choosing $C$ sufficiently large proves the claim.
\end{proof}

\begin{restatable}{lemma}{lemtvderror}
\label{lem:tvd:error}
Let the target additive failure rate of the perfect $L_2$ sampler be $n^{-c}$. The numerical execution differs from the exact theoretical distribution by a total variation distance of at most $\O{n^{-c}}$.
\end{restatable}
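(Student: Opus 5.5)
We prove \Cref{lem:tvd:error} with a hybrid argument. We pass from the exact theoretical construction (explicit duplication, $k=n^C$ independent exponentials per coordinate, all fed into the dense Gaussian $\CountSketch$) to the numerical execution of \Cref{alg:update}. Each step changes one ingredient, and we bound the total variation distance of the entire sketch state, and hence of the final output after the statistical test of \Cref{fig:stat:test}, at each step. The triangle inequality then sums the errors.

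Since coordinates use independent seeds, we argue per coordinate $i$ and union bound over the $n$ coordinates. This loses a factor $n$, which is absorbed by enlarging the constants $C,c_1,C_{\mathrm{bin}},C_{\mathrm{cdf}}$. By data processing, any post-processing of the sketch, including the statistical test and the output index, cannot increase total variation distance. So it suffices to bound the distance between the joint laws of the per-coordinate sketch contributions.

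The hybrids are as follows.
\begin{enumerate}
\item[$H_0$.] The exact construction. By $2$-stability, conditioned on the top $\tau$ order statistics, the $m=k-\tau$ remaining copies contribute in each row exactly $\Delta\, g\sqrt{V_{\mathrm{exact}}}$ with $V_{\mathrm{exact}}=\sum_{j}X_j=m(Z_m+\ln m)$, where the $X_j$ are the conditioned inverse exponentials. Here $H_0$ is rewritten in ``head plus tail variance'' form with no loss. The head is generated exactly via \Cref{thm:renyi}, matching \Cref{alg:l2:generator}; where one instead compares to a limiting point process, \Cref{lem:head_tv} contributes $\O{n^{-C/2}}$.
\item[$H_1$.] Condition on the event of \Cref{lem:bounds}, namely $R\in[\frac{a}{\log n},\frac{b}{\log n}]$. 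This costs $2n^{-c}$ in total variation.
\item[$H_2$.] Replace $Z_m$ by an exact sample of $\calZ$, passed through $T(z)=\max(0,m(z+\ln m))$. By \Cref{lem:gaussian_tv} (valid on the event of $H_1$) and \Cref{lem:perturbation_robustness}, this costs $\O{n^{-c}}$. Note that $T(Z_m)=V_{\mathrm{exact}}$, since $Z_m\ge-\ln m$ almost surely.
\item[$H_3$.] Replace $\calZ$ by the numerical sample $\widehat Z$ from \Cref{alg:oracle:invert}. The required CDF accuracy $\eta$ is supplied by \Cref{thm:oracle}, with the truncation error handled by \Cref{lem:truncation}. \Cref{lem:shift} gives $d_{TV}(\widehat Z,\calZ)=\O{n^{-c}}$, and \Cref{lem:perturbation_robustness} carries this through $T$ and the Gaussian sketch.
\end{enumerate}

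One subtlety is that the conditional law of the tail given the head depends on the head through $R$. We therefore apply the lemmas conditionally on the head randomness $(\delta_1,\dots,\delta_\tau)$. Since those lemmas are uniform over $R$ in the good range, integrating over the head gives the same bound.

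The main obstacle is the interaction across the stream. The same coordinate $i$ is updated many times, and the algorithm must reuse the identical $(\calH_i,V_{\mathrm{tail}})$ and the identical Gaussians each time; this is ensured by the seeded pseudorandom generator. Hence the sketch equals $\sum_i x_i(\ldots)$, a single Markov kernel applied to the per-coordinate variables rather than one kernel per update, and the data processing step applies exactly once per coordinate.

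If the seeded generator is only pseudorandom, an additional generator-error term appears. We would treat it as $1/\poly(n)$ via a standard Nisan-type or random-oracle assumption, as in \cite{JayaramW18,SwartworthWZ25}. Summing $\O{n^{-C/2}}+2n^{-c}+n\cdot\O{n^{-c'}}$ with the constants $c'$ chosen large then yields $\O{n^{-c}}$.
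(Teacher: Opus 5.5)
Your proposal is correct and follows essentially the same route as the paper. You condition on the head randomness, use \Cref{lem:bounds} with a union bound over the $n$ coordinates to place every $R_i$ in $[a/\log n, b/\log n]$, combine \Cref{lem:gaussian_tv} and \Cref{lem:shift} by the triangle inequality per coordinate, sum over the conditionally independent coordinates, push the bound through $T$, the Gaussian sketch and the statistical test via \Cref{lem:perturbation_robustness}, and add the $\O{n^{-C/2}}$ term from \Cref{lem:head_tv}. Your extra remarks are consistent refinements rather than a different argument: that $T(Z_m)=V_{\mathrm{exact}}$ because $Z_m>-\ln m$, that the R\'enyi-generated head is exact, and that seeded reuse makes the sketch a single kernel per coordinate.
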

\begin{proof}
Fix the desired exponent $c>0$, and apply
\Cref{lem:shift,lem:gaussian_tv} with target error $n^{-c-3}$.

Condition first on the explicit head randomness. 
By \Cref{lem:bounds}, choosing the constant in $\tau=\Theta(\log n)$ sufficiently large and taking a union bound over the $n$ original coordinates, with probability at least $1-\O{n^{-c-2}}$ every corresponding truncation parameter satisfies 
\[R_i\in\left[\frac{a}{\log n},\frac{b}{\log n}\right]\]
for fixed constants $0<a<b$.

On this event, for every coordinate $i$, let $Z_{m,i}$ denote its exact finite normalized tail statistic and let $\widehat Z_i$ denote the sample generated by the numerical oracle. 
By the triangle inequality and \Cref{lem:shift,lem:gaussian_tv},
\[d_{TV}(Z_{m,i},\widehat Z_i)\le d_{TV}(Z_{m,i},\calZ_i)+d_{TV}(\calZ_i,\widehat Z_i)\le\O{n^{-c-3}}.\]
Conditioned on the head randomness, the tail variables are independent across the original coordinates. 
Hence, the total variation distance between the joint exact and simulated tail vectors is at most
\[\sum_{i=1}^n d_{TV}(Z_{m,i},\widehat Z_i)\le\O{n^{-c-2}}.\]
Equivalently, the two tail vectors admit a coupling under which they agree coordinate-wise except with probability $\O{n^{-c-2}}$. 
By \Cref{lem:perturbation_robustness}, applying the tail-variance map, generating the Gaussian sketch contributions, and running the statistical recovery procedure cannot increase this total variation distance.

Finally, \Cref{lem:head_tv} contributes $\O{n^{-C/2}}$ from the head construction. 
Choosing the duplication constant $C$ sufficiently large, and adding the probability of the bad truncation event above, gives
\[d_{TV}(\calD_{\mathrm{exact}},\calD_{\mathrm{alg}})\le \O{n^{-c}}.\]
\end{proof}

\subsection{Derandomization}
\label{sec:prg}
The preceding analysis assumes that independent exponential and normal random variables can be generated and stored efficiently. 
To implement this algorithm in the streaming model, we now derandomize these components of the linear sketch. 
We begin by recalling the pseudorandom generator of \cite{GopalanKM18}, which fools a class of Fourier transforms. 

\begin{definition}
An \emph{$(m,n)$-Fourier shape} is a function $f:[m]^n\to\mathbb{C}$ of the form
\[f(x_1,\ldots,x_n)=\prod_{j=1}^n f_j(x_j),\]
where each $f_j:[m]\to\mathbb{C}$ acts on a single coordinate.
\end{definition}
\noindent
The following theorem summarizes the guarantees of the Fourier-shape PRG.

\begin{theorem}
\cite{GopalanKM18}
\label{thm:gkm:prg}
Let $f:[m]^n\to\mathbb{C}$ be an $(m,n)$-Fourier shape with
\[f(x_1,\ldots,x_n)=\prod_{j=1}^n f_j(x_j),\]
where each $f_j:[m]\to\mathbb{C}$. 
Then for every $\eps>0$, there exists a seed length $\ell=\O{\log\frac{mn}{\eps}\cdot\left(\log\log\frac{mn}{\eps}\right)^2}$ and a deterministic function $G:\{0,1\}^{\ell}\to[m]^n$ such that
\[\left|\EEx{x\sim[m]^n}{f(x)}-\EEx{y\sim\{0,1\}^{\ell}}{f(G(y))}\right|\le\eps,\]
i.e., all $(m,n)$-Fourier shapes are $\eps$-fooled by $G$. 
\end{theorem}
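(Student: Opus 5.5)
The statement is the Fourier-shape pseudorandom generator of \cite{GopalanKM18}, and in our construction we use it only as a black box. Still, here is the route I would take to prove it, following the structure of that work. First I normalize. Multiplying each $f_j$ by a constant changes both sides by the same factor, so I may assume $|f_j|\le 1$ pointwise. I would also allow $m$ to be replaced by a power of two at the cost of constant factors. The goal then becomes $\eps$-fooling a product of $n$ bounded coordinate functions using $\O{\log\frac{mn}{\eps}\,(\log\log\frac{mn}{\eps})^2}$ random bits.

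The core of the argument is a dichotomy based on the total variance $\sigma^2=\sum_j \mathrm{Var}(f_j(x_j))$.

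\textbf{Low variance.} Suppose $\sigma^2\le \polylog\frac{mn}{\eps}$. Write $f_j=\mu_j(1+g_j)$ where $g_j$ has mean zero. I would Taylor-expand $\log\prod_j(1+g_j)$ and truncate at degree $k=\O{\log\frac1\eps}$. A $k$-wise independent distribution reproduces the expectation of this low-degree polynomial exactly, and its higher moments are controlled by standard moment bounds. This shows that $\O{\log\frac1\eps}$-wise independence fools $f$.

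\textbf{High variance.} Suppose instead $\sigma^2$ is large. Then $|\mathbb{E} f|\le e^{-\Omega(\sigma^2)}$ is negligible under the uniform distribution. The claim to prove is that the pseudorandom distribution also makes $|\mathbb{E} f|$ small. For this I would show that, under limited independence, a subset of coordinates carrying variance $\Omega(\polylog)$ is still essentially independent, and that subset alone already drives the expectation near zero.

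To assemble the generator I would use \emph{gradual pseudorandom restrictions}. At each of $T=\O{\log\log\frac{mn}{\eps}}$ rounds:
\begin{enumerate}
\item Using a pairwise-independent hash, select a constant fraction of the currently live coordinates.
\item Fill the selected coordinates from a small-bias or $k$-wise independent source with seed $\O{\log\frac{mn}{\eps}\log\log\frac{mn}{\eps}}$.
\end{enumerate}
The invariant to maintain is that, averaged over the fixed coordinates, the restricted Fourier shape still has expectation close to that of the original shape. This follows from the dichotomy applied at each round. After the rounds, the number of live coordinates has shrunk enough, or the residual variance has become small enough, that a final round of bounded independence finishes the job. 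The total seed length is $T$ rounds times $\O{\log\frac{mn}{\eps}\log\log\frac{mn}{\eps}}$ per round, which gives the claimed bound.

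I expect the main obstacle to be the high-variance case with only limited independence. There the product can have tiny expectation for reasons that are not captured by low-degree moments. What I would try first is the following bucketing argument:
\begin{enumerate}
\item Hash the coordinates into $\polylog$ buckets.
\item Show that, with high probability over the hash, many buckets each carry variance $\Omega(1)$.
\item Show that in each such bucket the expected magnitude of the partial product is at most a constant less than $1$ under bounded independence.
\item Conclude that the product over buckets decays, since buckets are independent enough under the hash structure.
\end{enumerate}
The delicate point is keeping the per-round error at $\eps/T$ while the seed stays at $\O{\log\frac{mn}{\eps}\log\log\frac{mn}{\eps}}$ per round.
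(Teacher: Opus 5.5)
The paper does not prove this statement. It imports it from \cite{GopalanKM18} as a black box, exactly as your first sentence does, and that citation is all the paper relies on. The self-contained route you then sketch is not a proof, however, and its first step exposes a missing hypothesis. Rescaling each $f_j$ by $c_j$ multiplies the \emph{difference} of the two expectations by $\prod_j c_j$, while the bound $\eps$ stays fixed. So normalizing only yields error $\eps\prod_j\|f_j\|_\infty$. In fact the statement is false for unbounded $f_j$. Once $2^\ell<m^n$, pick $x^*\notin G(\{0,1\}^\ell)$ and set $f_1=\lambda\mathbf{1}[x_1=x_1^*]$ and $f_j=\mathbf{1}[x_j=x_j^*]$ for $j\ge 2$. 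The error is then $\lambda m^{-n}$, which is unbounded in $\lambda$. The hypothesis $|f_j|\le 1$ must therefore be assumed, not derived. This is the hypothesis in GKM, and it is the one used in \Cref{thm:derandomize}, where each $f_j$ maps into the unit disk.

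The core steps also fail as written. In the low-variance case, writing $f_j=\mu_j(1+g_j)$ and taking $\log(1+g_j)$ requires $\mu_j\neq 0$ and $g_j$ bounded away from $-1$ pointwise. Small total variance gives neither. A single coordinate equal to a nontrivial character has $\mu_j=0$ yet total variance $1$. An $f_j$ that vanishes at one symbol contributes variance only about $1/m$, but $\log(1+g_j)$ is undefined there. You need an expansion that avoids the logarithm, for example into elementary symmetric polynomials of the deviations $f_j-\mu_j$, controlled via Newton's identities.

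More seriously, $\O{\log\frac{1}{\eps}}$-wise independence on $[m]^n$ costs $\Theta(\log\frac{1}{\eps}\cdot\log(mn))$ seed bits. That is already $\Theta(\log^2 n)$ for $\eps=n^{-c}$, far above the per-round budget $\O{\log\frac{mn}{\eps}\log\log\frac{mn}{\eps}}$ you allot. So the round invariant cannot simply ``follow from the dichotomy.'' The terminal step is also unjustified: with constant-fraction selection over $\O{\log\log\frac{mn}{\eps}}$ rounds, the live set shrinks only by a $\polylog$ factor. Making much weaker per-round randomness suffice is the real content of the GKM analysis, and your sketch does not supply it.

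Finally, the high-variance case, which you correctly flag as the crux, is left open. For unimodular $f_j$ we have $|f|\equiv 1$, so the bucket claim that ``the expected magnitude of the partial product is at most a constant less than $1$'' is false. The most one can hope for per heavy bucket $B$ is $|\mathbb{E}F_B|\le 1-c$. Turning that into a bound on $|\mathbb{E}\prod_B F_B|$ requires the buckets' inputs to be nearly independent, and a single limited-independence distribution does not provide this. For the same reason, a high-variance subset $S$ does not ``alone'' drive $\mathbb{E}f$ to zero. Under $k$-wise independence, $f_{\bar S}(X_{\bar S})$ may be correlated with $f_S(X_S)$, since their joint law involves more than $k$ coordinates, while still having modulus $1$. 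Closing this gap needs an extra mechanism, such as independent or expander-walk-correlated seeds per bucket in the spirit of Naor--Naor, with its seed cost accounted for.
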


Furthermore, the PRG of \cite{GopalanKM18} also fools half-space queries, which we now define. 
For each $i\in[\lambda]$, let a half-space query $H_i:\mathbb{R}^n\to\{0,1\}$ on an input $X=(x_1,\ldots,x_n)$ be given by
\[H_i(X)=\mathbf{1}\left[a_1^{(i)}x_1+\cdots+a_n^{(i)}x_n>\theta_i\right],\]
where $a^{(i)}\in\mathbb{Z}^n$ and $\theta_i\in\mathbb{Z}$ for all $i\in[\lambda]$.

\begin{definition}[$\lambda$-half-space tester]
Given an input $X=(x_1,\ldots,x_n)$, a $\lambda$-half-space tester is a function
\[\sigma\left(H_1(X),\ldots,H_\lambda(X)\right)\in\{0,1\},\]
where $\sigma:\{0,1\}^{\lambda}\to\{0,1\}$ and $H_1,\ldots,H_\lambda$ are half-space queries. 
We say that the $\lambda$-half-space tester is $M$-bounded if each coordinate $x_i$ is sampled from an integer distribution with magnitude at most $M$, and every coefficient $a_j^{(i)}$ and threshold $\theta_i$ also has magnitude at most $M$.
\end{definition}

The following theorem gives the corresponding guarantees for fooling half-space testers, c.f., Lemma 7 and Proposition 8 of \cite{JayaramW18}.

\begin{theorem}
\cite{GopalanKM18,JayaramW18}
\label{thm:gkm:prg:half:space}
Let $\calD$ be a distribution on $\{-M,\ldots,M\}$ with magnitude bounded by $M$ that can be sampled using at most $\O{\log M}$ random bits. 
Let $x_1,\ldots,x_n\sim\calD$ and define $X=(x_1,\ldots,x_n)$. 
Then for any $\eps>0$ and $c\ge1$, there exists $\ell=\O{\lambda\left(\log\frac{nM}{\eps}\right)\left(\log\log\frac{nM}{\eps}\right)^2}$ and a deterministic function $F:\{0,1\}^{\ell}\to\{-M,\ldots,M\}^n$ such that for every $\lambda$-half-space tester $\sigma_H$,
\[\left|\EEx{X\sim\calD^n}{\sigma_H(X)=1}-\EEx{y\sim\{0,1\}^{\ell}}{\sigma_H(F(y))=1}\right|\le\eps.\]
Moreover, if $F(y)\in\{-M,\ldots,M\}^n$, then each coordinate of $F(y)$ can be computed using $\O{\ell}$ space and $\polylog(nM)$ time.
\end{theorem}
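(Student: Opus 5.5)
The plan is to reduce fooling a $\lambda$-half-space tester to fooling a polynomial number of Fourier shapes, and then apply \Cref{thm:gkm:prg} with a suitably small error parameter. The factor $\lambda$ in the seed length $\ell$ should come entirely from this shrinking of the error. First, absorb the sampler for $\calD$ into the shape. Since each $x_j\sim\calD$ is produced from $\O{\log M}$ uniform bits, we write $x_j=s(u_j)$ with $u_j\in[m]$ uniform and $m=\poly(M)$. A Fourier shape in the $x_j$'s composed with $s$ is then an $(m,n)$-Fourier shape in the $u_j$'s. So it suffices to fool $\sigma_H(s(u_1),\ldots,s(u_n))$ over uniform $u\in[m]^n$.

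Next, express the tester through the joint law of the linear forms $L_i(X)=\sum_j a^{(i)}_jx_j$, $i\in[\lambda]$. Each $L_i$ is an integer of magnitude at most $nM^2$. Hence $L=(L_1,\ldots,L_\lambda)$ takes values in a box that embeds injectively into $\mathbb{Z}_P^\lambda$ for some $P=\O{nM^2}$. Since $\sigma_H(X)=g(L(X))$ for $g(v)=\sigma(\mathbf{1}[v_1>\theta_1],\ldots,\mathbf{1}[v_\lambda>\theta_\lambda])$, discrete Fourier inversion on $\mathbb{Z}_P^\lambda$ gives
\[\Ex{\sigma_H(X)}=\sum_{v}g(v)\,\frac{1}{P^\lambda}\sum_{w\in\mathbb{Z}_P^\lambda}e^{-2\pi i\langle w,v\rangle/P}\,\Ex{\prod_{j=1}^n \exp\left(\frac{2\pi i}{P}\sum_{i=1}^\lambda w_ia^{(i)}_jx_j\right)}.\]
For each fixed $w$, the inner expectation is the expectation of an $(m,n)$-Fourier shape, with $f_j(u_j)=\exp\big(\frac{2\pi i}{P}\sum_i w_ia^{(i)}_js(u_j)\big)$ of modulus one. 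The same identity holds verbatim when $u$ is replaced by $G(y)$ for a uniform seed $y$. Therefore, if $G$ $\eps'$-fools every Fourier shape, the difference of the two expectations is at most $\sum_v |g(v)|\cdot\eps'\le P^\lambda\eps'$. The factor $P^{-\lambda}$ cancels against the $P^\lambda$ choices of $w$, and at most $P^\lambda$ choices of $v$ remain.

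Setting $\eps'=\eps P^{-\lambda}$, \Cref{thm:gkm:prg} yields seed length $\O{\log\frac{mnP^\lambda}{\eps}(\log\log\frac{mnP^\lambda}{\eps})^2}=\O{\lambda\log\frac{nM}{\eps}(\log\log\frac{nM}{\eps})^2}$, as claimed. We define $F(y)=(s(G(y)_1),\ldots,s(G(y)_n))$. The explicitness statement then follows from the explicitness of the \cite{GopalanKM18} generator: each block of $G(y)$ is computable in $\O{\ell}$ space and $\polylog(nM)$ time, and applying $s$ costs $\polylog(M)$.

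The step I expect to be the main obstacle is the error bookkeeping in the Fourier expansion. The union over $P^\lambda$ characters and $P^\lambda$ lattice points must cost only a $P^{\O{\lambda}}$ factor, which requires that the linear forms really live in a range of size $\poly(nM)$; this is where $M$-boundedness of the coefficients, thresholds and samples is used. A naive continuous inversion would instead introduce truncation errors. If the crude count over $v$ turned out to be too lossy, I would try a different expansion: write $g$ directly as a combination of products of one-dimensional threshold indicators, and expand each indicator in the one-dimensional discrete Fourier basis. This gives the same $P^{\O{\lambda}}$ blowup but handles one coordinate at a time.
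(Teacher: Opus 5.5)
Your proposal is correct and is essentially the standard argument behind the cited result (Lemma 7 of \cite{JayaramW18}; the paper itself only cites it). The tester depends only on the $\lambda$ integer linear forms, whose joint law lives on a box of size $(nM)^{\O{\lambda}}$; each discrete Fourier coefficient of that law is the expectation of a unit-modulus Fourier shape in the sampler's seed symbols; and fooling all of them to error $\eps(nM)^{-\O{\lambda}}$ before inverting gives the $\lambda\log\frac{nM}{\eps}$ seed length.

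One small bookkeeping point: $\log\log\frac{mnP^\lambda}{\eps}=\log\lambda+\O{\log\log\frac{nM}{\eps}}$. So your final simplification tacitly requires $\lambda\le\polylog(nM/\eps)$, and the same holds for the $\polylog(nM)$ evaluation time, since the generator is now run at error $\eps P^{-\lambda}$. This restriction is harmless for polylogarithmic $\lambda$.
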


\begin{theorem}
\label{thm:jw:prg}
\cite{JayaramW18}
Let $\calS$ be a streaming algorithm that, given a stream vector $f\in\{-M,\ldots,M\}^n$ and a fixed matrix $X\in\mathbb{R}^{t\times nk}$, 
whose entries lie in $\{-M,\ldots,M\}$ for some $M=\poly(n)$, produces an output that depends only on the sketches $Af$ and $X\vec{A}$. 
Suppose that the random matrix $A\in\mathbb{R}^{k\times n}$ has i.i.d. entries, each of which can be sampled using $\O{\log n}$ random bits. 
Let $\sigma:\mathbb{R}^k\times\mathbb{R}^t\to\{0,1\}$ be a tester for the success of $\calS$, i.e., $\sigma\left(Af,X\vec{A}\right)=1$
whenever $\calS$ succeeds. 

Then for every fixed constant $c\ge1$, there exists an implementation of $\calS$ using a random matrix $A'$ generated from a seed of length $\O{(k+t)\log n(\log\log n)^2}$ such that
\[\left|\PPr{\sigma\left(Af,X\vec{A}\right)=1}-\PPr{\sigma\left(A'f,X\vec{A'}\right)=1}\right|<n^{-c(k+t)}.\]
Moreover, every entry of $A'$ can be computed in $\polylog(n)$ time using working space proportional to the seed length.
\end{theorem}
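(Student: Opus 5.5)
The plan is to reduce the statement to the PRG of \Cref{thm:gkm:prg:half:space}, following the template of Jayaram and Woodruff. The first step is a discretization. The tester $\sigma$ receives $Af$ and $X\vec{A}$, and the event $\sigma=1$ depends only on the sign pattern of finitely many linear forms in these quantities. Since $f$ and $X$ have integer entries bounded by $M=\poly(n)$, I will round each entry of $A$ to a grid of width $n^{-c'}$ and truncate it to magnitude $\poly(n)$. Each entry of $A$ is generated from $\O{\log n}$ bits, so after scaling it becomes an integer of magnitude $\poly(n)$, and this changes success probabilities by at most $n^{-c'}$.

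In these coordinates, each of the $k$ entries of $Af$ and each of the $t$ entries of $X\vec{A}$ is an integer linear form in the $nk$ independent entries of $A$, with coefficients and values bounded by $\poly(n)$. The event $\sigma=1$ will then be expressed through the integer values of these $k+t$ linear forms. Each form $L$ ranges over at most $\poly(n)$ values, so its value is determined by $\O{\log n}$ threshold queries $\mathbf{1}[L>\theta]$, found by binary search. Hence $\sigma$ becomes a $\lambda$-half-space tester with $\lambda=\O{(k+t)\log n}$ that is $\poly(n)$-bounded.

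\Cref{thm:gkm:prg:half:space} applies with $\eps=n^{-c(k+t)}$, giving a seed of length
\[\O{\lambda\log\frac{nM}{\eps}\,(\log\log)^2}.\]
This is too large by a $\log n$ factor, so the step I expect to be the main obstacle is the seed-length accounting. The naive count gives $\O{(k+t)\log n\cdot (k+t)\log n}$. The fix I would try first is the Fourier-shape version, \Cref{thm:gkm:prg}: fool the joint distribution of the $(k+t)$-dimensional vector of forms through its characteristic function $\prod_j \Ex{e^{i\langle \omega,\cdot\rangle}}$. This is a product over the independent entries of $A$, hence a Fourier shape, and it avoids multiplying by $\lambda$.

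The remaining step is to convert closeness of characteristic functions, up to $n^{-c(k+t)}$ on a grid of frequencies of size $\poly(n)^{k+t}$, into closeness of the probability of the event $\sigma=1$ on a $\poly(n)$-range integer lattice. This follows from discrete Fourier inversion over $\mathbb{Z}_{\poly(n)}^{k+t}$. The number of frequencies is $n^{\O{k+t}}$, so I set the fooling error to $n^{-c''(k+t)}$ with $c''$ large; the seed length is then $\O{\log\frac{mn}{\eps}(\log\log)^2}=\O{(k+t)\log n(\log\log n)^2}$, as claimed.

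Finally, \Cref{thm:gkm:prg:half:space} states that each output coordinate is computable in $\polylog(nM)$ time and $\O{\ell}$ space, and the Fourier-shape generator has the same property. This gives the stated efficiency for each entry of $A'$. The discretization error $n^{-c'}$ is absorbed by taking the grid width to be $n^{-c'(k+t)}$, which costs only $\O{(k+t)\log n}$ extra bits per entry inside the generator's alphabet.
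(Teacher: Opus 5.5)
Your core argument is correct. You view the $k+t$ coordinates of $(Af,X\vec{A})$ as integer linear forms in the independent entries of $A$. You then fool their joint characteristic function at every frequency of $\mathbb{Z}_P^{k+t}$ with the Fourier-shape PRG of \Cref{thm:gkm:prg}; each frequency gives a product of unimodular functions of the individual entries. Finally, you recover $\Pr[\sigma=1]$ by discrete Fourier inversion, which loses at most a factor $P^{k+t}=n^{O(k+t)}$. With Fourier-shape error $n^{-c''(k+t)}$, this gives error $n^{-c(k+t)}$ and seed $O((k+t)\log n(\log\log n)^2)$. The $(\log\log)^2$ factor is $O((\log\log n)^2)$ only when $k+t\le\polylog(n)$, which holds in the application. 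The paper gives no proof of this statement; it quotes it from \cite{JayaramW18}. Your mechanism is the same Fourier-shape-plus-inversion argument the paper itself uses in the proof of \Cref{thm:derandomize}. You were right to abandon the half-space route, though it is off by a factor $(k+t)\log n$, not $\log n$.

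The genuine problem is the discretization you wrap around this argument, which as written contradicts the bounds you are proving. There are two cases.

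\emph{Entries already have $O(\log n)$ bits.} This is what the hypothesis is meant to say, and it is how the theorem is used: see the hypotheses of \Cref{lem:gaussian-derandomization} and the first step of the proof of \Cref{thm:derandomize}. There, continuous variables are rounded to $O(\log n)$ bits before any PRG is invoked, and that error is charged there. In this case your rounding is vacuous and there is no $n^{-c'}$ error to absorb.

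\emph{Entries are arbitrary reals.} Then your claim that rounding changes success probabilities by at most $n^{-c'}$ is unjustified. $\sigma$ is an arbitrary, possibly discontinuous tester, so perturbing $A$ can flip it on a large-probability event. Even if the claim held, an additive $n^{-c'}$ exceeds the target $n^{-c(k+t)}$.

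Your repair in the last paragraph makes things worse. With grid width $n^{-c'(k+t)}$, the scaled entries are integers of magnitude $n^{O(k+t)}$. Each form then ranges over $P=n^{O(k+t)}$ values, and the inversion lattice has $P^{k+t}=n^{O((k+t)^2)}$ points. You would therefore need Fourier-shape error $n^{-\Omega((k+t)^2)}$, which means seed length $O((k+t)^2\log n\cdots)$. That is exactly the blow-up you switched to Fourier shapes to avoid.

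The fix is to drop the discretization. Take each entry to be $h(b)$, where $h$ is a fixed map from $O(\log n)$-bit blocks $b$ to integers of magnitude $\poly(n)$ (after a fixed common scaling). Define $A'$ by applying the same $h$ to pseudorandom blocks. Then $A'$ has the same support as $A$, all forms lie in a box of side $\poly(n)$ in $\mathbb{Z}^{k+t}$, and the PRG error is the only error. This integrality is needed, not cosmetic: your inversion step relies on $(Af,X\vec{A})$ living in such a box, which fails when the values of $h$ are arbitrary reals.
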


\begin{restatable}[Derandomizing the dense Gaussian sketch]{lemma}{lemgaussianderandomization}
\label{lem:gaussian-derandomization}
Let $s$ be the number of rows of the dense Gaussian CountSketch.
Suppose all inputs and Gaussian entries have $\O{\log n}$ bits of precision.  Then, for every fixed constant $c>0$, the Gaussian randomness used by the dense CountSketch and its recovery procedure can be replaced by pseudorandomness generated from a seed of length $O{s\log n(\log\log n)^2}$ such that the probability of every output of the recovery procedure changes by at most $n^{-c}$. 
Moreover, every required Gaussian entry can be generated from the seed in $\polylog(n)$ time.
\end{restatable}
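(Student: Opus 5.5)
We plan to reduce the lemma to \Cref{thm:jw:prg}, following the half-space tester argument of \cite{JayaramW18}. First we discretize. Each Gaussian entry $g_{\ell,i,j}$, including the tail entries $g_{\ell,i,\tail}$, is replaced by an integer random variable. Concretely, it is $\lfloor n^{c'}\,\Phi^{-1}(U)\rceil$ clipped to $[-M,M]$ with $M=\poly(n)$, where $U$ is a uniform variable drawn with $\O{\log n}$ bits. Each such entry is then sampled from $\O{\log n}$ random bits, as required by \Cref{thm:gkm:prg:half:space}. The clipping and rounding change every sketch value by at most $1/\poly(n)$ except on an event of probability $1/\poly(n)$, because Gaussian tails are subpolynomial beyond $\Theta(\sqrt{\log n})$. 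We will argue that this perturbation changes each recovery outcome's probability by at most $n^{-c}/2$. The argument uses anti-concentration of the sketch values: the uniformly random threshold $\mu$ in \Cref{fig:stat:test} smooths the comparisons, so an $n^{-\Omega(1)}$ perturbation alters the test outcome only with small probability.

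Next we fix all non-Gaussian randomness: the exponential head weights, the tail variances $V_{\mathrm{tail}}$, the threshold $\mu$, and the stream $x$. This makes the scaled vector $z$, with entries $x_i\sqrt{Y_{(j)}}$ and $x_i\sqrt{V_{\mathrm{tail}}}$, a fixed vector with $\poly(n)$-bounded discretized entries. The sketch $Gz$ is then a linear function of the Gaussian matrix. The recovery procedure computes $\widehat{z_{i,j}}=\frac1R\langle G_{i,j},Gz\rangle$, the estimate $Z$, and the comparisons in \Cref{fig:stat:test}. Each fixed output event, for example ``the test passes and returns index $c$,'' is a Boolean function of polynomially many comparisons. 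We intend to express this event as a $\lambda$-half-space tester in the sense of \Cref{thm:jw:prg}, with $\lambda=\O{s}$. To do so we view the output as a function only of $Gz$ and of linear functionals $X\vec G$ of the entries of $G$. The required quantities are the column inner products $\langle G_{i,j},Gz\rangle$ and the median-of-blocks structure.

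We expect this to be the main obstacle. The estimates $\langle G_{i,j},Gz\rangle$ are quadratic in $G$, not linear. We would first try the standard trick of \cite{JayaramW18}: condition on the sketch vector $y=Gz$ and treat $y$ as the quantity being fooled. Only $y$, which consists of $s$ linear forms, is stored. The columns $G_{i,j}$ appearing at recovery can then be regenerated from the same seed, so any test on them is a function of the seed together with $y$. Under this view the tester depends on $\O{s}$ linear forms in $G$, with thresholds depending on the fixed $z$ and $\mu$. \Cref{thm:jw:prg} then applies with $k+t=\O{s}$ and gives error $n^{-c(k+t)}\le n^{-c}/2$. If this conditioning is not rigorous, the fallback is to bound the outcome probability by a union over the $\poly(n)$ candidate index pairs. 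Each candidate event would be written as an intersection of $\O{s}$ half-spaces after fixing the columns. We then average over the fixed non-Gaussian randomness, which preserves the bound.

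It remains to check the cost. The seed length from \Cref{thm:jw:prg} is $\O{s\log n(\log\log n)^2}$. Every entry of $A'$, and therefore every needed Gaussian $g_{\ell,i,j}$, is computable in $\polylog(n)$ time by the final clause of \Cref{thm:jw:prg}. Combining the discretization error with the PRG error gives a total change of at most $n^{-c}$ in the probability of every output.
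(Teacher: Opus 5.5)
Your high-level route matches the paper's: apply \Cref{thm:jw:prg} (Theorem 7.13 of \cite{JayaramW18}, built on the GKM half-space PRG) to the dense sketch with $k+t=\O{s}$. The paper does not re-derive the count $k+t=\O{s}$; it imports it from the proof of Theorem 5.13 of \cite{SwartworthWZ25}. (Your discretization preamble is not needed, since the lemma already assumes $\O{\log n}$-bit precision; the paper discretizes in \Cref{thm:derandomize}.) The step that carries all the weight is that each output event of the recovery procedure is a tester $\sigma(Gz,X\vec G)$ with $X$ having only $t=\O{s}$ rows. Your argument does not establish this, and the justification you give for it is incorrect.

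The recovery output is a deterministic function of $y=Gz$ \emph{and} of the column $G_v$ of every tracked virtual coordinate $v$. This is because every estimate $\frac{1}{s}\langle G_v,y\rangle$ must be computed to identify the top two and run the gap test. That these columns are regenerated from the seed rather than stored is a statement about space, not about what must be fooled. A generator that fools the law of $y$ alone says nothing about the joint law of $(y,(G_v)_v)$. Putting all these columns into $X$ gives $t=\Theta(sn')$ and a seed of polynomial length.

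Your fallback does not repair this either. Fixing ``the columns'' fixes all of $G$, leaving nothing random. Fixing only the columns of a candidate pair $(a,b)$ still leaves the condition that every other estimate lies below $|\widehat{z_b}|$. That condition involves every remaining column and is quadratic in $G$, so it is not an intersection of $\O{s}$ half-spaces.

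To close the gap you need a structural argument that each output event depends on $Gz$ and only $\O{1}$ further columns, or is sandwiched by such events up to $n^{-c}$. One way to start is to transfer the per-coordinate accuracy events, each a function of the $2s$ linear forms $(Gz,G_v)$, by a union bound. You would then show that on the accuracy event only boundedly many columns can influence the output. Alternatively, cite the \cite{SwartworthWZ25} argument for this count, as the paper does.
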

\begin{proof}
This is the Gaussian-sketch derandomization in the proof of Theorem 5.13 in \cite{SwartworthWZ25}, with the number of sketch rows left as a parameter. 
That argument further applies Theorem 7.13 of \cite{JayaramW18} to the dense linear sketch and uses the GKM half-space PRG. 
If the dense sketch has $s$ rows, the quantities denoted $k+t$ in that argument are $\O{s}$, and hence the seed length is $\O{s\log n(\log\log n)^2}$. 
The argument is otherwise independent of $p$. 
In our application $s=\O{\log^2 n}$, giving seed length $\O{\log^3 n(\log\log n)^2}$. 
\end{proof}

\begin{restatable}{theorem}{thmderandomize}
\label{thm:derandomize}
Suppose the input frequency vector $x \in \mathbb{Z}^n$ has integer entries bounded in magnitude by $\poly(n)$. 
Then \Cref{alg:update} can be derandomized to use $\O{\log^3 n(\log\log n)^2}$ bits of space and uses $\polylog(n)$ update time per arriving element in expectation. 
\end{restatable}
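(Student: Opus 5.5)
The plan is to separate the randomness used by \Cref{alg:update} into two families and derandomize each with the tool that fits it. The first family is the per-coordinate randomness generating $(\calH_i,V_{\mathrm{tail}})$: the spacings $\delta_1,\ldots,\delta_\tau$, the uniforms $U$ and $W$ in \Cref{alg:oracle:invert}, and whatever is used inside the oracle. The second family is the Gaussian entries $g_{\ell,i,j}$ and $g_{\ell,i,\mathrm{tail}}$ of the dense $\CountSketch$. I would first discretize everything to $\O{\log n}$ bits of precision: exponentials truncated to $[n^{-\O{1}},\O{\log n}]$, uniforms rounded to multiples of $n^{-\O{1}}$, and Gaussians rounded similarly. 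The induced change in each sketch entry is $n^{-\Omega(1)}$-relative. Since $\mu$ is uniform in $[0.99,1.01]$, the thresholds in the statistical test of \Cref{fig:stat:test} are anticoncentrated, so rounding flips the test outcome only with probability $1/\poly(n)$. This is the same argument used in \cite{JayaramW18,SwartworthWZ25}.

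Next I would handle the Gaussian family via \Cref{lem:gaussian-derandomization} with $s=\O{\log^2 n}$ rows. This gives a seed of length $\O{\log^3 n(\log\log n)^2}$ and changes output probabilities by at most $n^{-c}$. For the per-coordinate family, each coordinate $i$ needs only $\O{\log n}$ numbers of $\O{\log n}$ bits, so its entries form one column of the matrix $A$ in \Cref{thm:jw:prg}. The final output is a function of the linear sketch $Af$ together with a bounded number of $\O{\log n}$ values (the head entries) that the tester reads. The success-and-output event is a $\lambda$-half-space tester with $\lambda=\O{\log^2 n}$ (sketch rows plus the test comparisons). \Cref{thm:gkm:prg:half:space} and \Cref{thm:jw:prg} then fool it with seed $\O{\log^3 n(\log\log n)^2}$. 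A union over the $n$ output values $j$ gives the additive $1/\poly(n)$ change in $\PPr{i^*=j}$. Combining with \Cref{lem:tvd:error} and the space of the sketch from \Cref{thm:dense:countsketch} (counters of $\O{\log n}$ bits over $\O{\log^2 n}$ rows) gives the claimed $\O{\log^3 n(\log\log n)^2}$ bits.

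For update time, on update $(i,\Delta)$ I regenerate coordinate $i$'s randomness from the seed in $\polylog(n)$ time per entry. I then run \Cref{alg:l2:generator}: $\tau=\O{\log n}$ spacings, a binary search with $\O{\log n}$ iterations by \Cref{alg:oracle:invert}, and each CDF call costing $N=\O{\log^2 n}$ evaluations of $\Si$, $\Cin$ and exponentials to $\O{\log n}$ bits by \Cref{thm:oracle}. These special functions have $\polylog(n)$-time series evaluations since $|tR|=\O{1}$. Finally I perform $\O{\log^2 n}\cdot\tau$ sketch updates.

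\textbf{Main obstacle.} The main obstacle is that the nonlinear tail generation (inverse CDF via binary search) is not a half-space of the seed bits, so the generator's tester must be formulated correctly. My first attempt would be to treat $(\calH_i,V_{\mathrm{tail}})$ as a deterministic function of $\O{\log n}$ truly random bits per coordinate, and then condition on these values as fixed inputs $X$ to \Cref{thm:jw:prg}. Only the Gaussians would then need fooling. The per-coordinate randomness would be handled by the Fourier-shape PRG of \Cref{thm:gkm:prg}: conditioned on the head values, the characteristic function of the sketch is a product over coordinates, which is exactly a Fourier shape. The expectation clause in the update time comes from the rare event that the precision must be increased or that some $R_i$ lies outside $[a/\log n,b/\log n]$. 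There I would fall back to slower computation and bound its cost times its $1/\poly(n)$ probability.
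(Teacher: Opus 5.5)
Your final route, in the ``Main obstacle'' paragraph, is the paper's, but it omits the step that actually produces the seed length. The paper discretizes and replaces the Gaussians via \Cref{lem:gaussian-derandomization} with the per-coordinate values held fixed. It then fixes the resulting pseudorandom matrix $G'$ and fools the per-coordinate seeds $r_1,\ldots,r_n$ with the Fourier-shape PRG of \Cref{thm:gkm:prg}, using that $Y=G'z=\sum_j x_j v_j(r_j)$ is a sum of independent per-coordinate contributions.

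Two corrections before the gap. First, the conditioning in that step must be on $G'$, not on the head values, because the head values are part of the randomness being fooled. Second, your first route via \Cref{thm:jw:prg} and half-space testers does not apply, as you suspected. With $G'$ fixed, the column contributed by coordinate $i$ is $\sum_j G'_{\cdot,(i,j)}\sqrt{Y^{(i)}_{(j)}}+G'_{\cdot,(i,\mathrm{tail})}\sqrt{V^{(i)}_{\mathrm{tail}}}$. This is neither a vector of i.i.d.\ $\O{\log n}$-bit entries nor identically distributed across $i$.

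The gap is that you stop at ``the characteristic function is a Fourier shape.'' Fooling $\mathbb{E}[\exp(2\pi i\langle\alpha,Y\rangle)]$ to error $\eps$ for each $\alpha$ does not by itself bound the change in $\PPr{i^*=j}$. That probability is an arbitrary function of $Y$: an argmax over $n\tau$ estimates together with the norm-based test. You also never choose $\eps$, so the seed length you claim for this stage is unsupported.

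The paper closes this with a finite-support Plancherel/Cauchy--Schwarz argument:
\begin{itemize}
\item Given $G'$, $Y$ has $\O{\log^2 n}$ entries of $\O{\log n}$ bits each, so its support has size $N_{supp}\le n^{\O{\log^2 n}}$.
\item Then $d_{TV}\le\frac12\sqrt{N_{supp}}\cdot\max_\alpha|\widehat{P}(\alpha)-\widehat{P'}(\alpha)|$, which forces Fourier error $\eps=n^{-\O{\log^2 n}}$.
\item Each block $r_j$ consists of $\O{\log n}$ exponentials plus two uniforms, i.e.\ $\O{\log^2 n}$ bits, so $M=2^{\O{\log^2 n}}$. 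Your ``$\O{\log n}$ truly random bits per coordinate'' undercounts this.
\item Hence $\log(nM/\eps)=\O{\log^3 n}$, and \Cref{thm:gkm:prg} gives exactly the seed length $\O{\log^3 n(\log\log n)^2}$ claimed in the theorem.
\end{itemize}
Without this conversion, the space bound for derandomizing the exponential/tail randomness is not established.
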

\begin{proof}
We separate the randomness into the Gaussian randomness $R_g$ used by the dense CountSketch and the randomness $R_e$ used to generate the exponential head variables and the simulated tail variables.
We first reduce to finite precision. 
As in the analysis of~\cite{SwartworthWZ25,JayaramW18}, we condition on the high-probability event that every continuous random variable used by the algorithm has magnitude at most $\poly(n)$ and retain $C_{\mathrm{disc}}\log n$ bits of precision, for a sufficiently large constant $C_{\mathrm{disc}}$. 
The resulting perturbation of each maintained sketch value is $n^{-\Omega(C_{\mathrm{disc}})}$. 
Hence, the change in each output probability is at most $n^{-c-10}$, and so we can take values form discrete set of $\poly(n)$-bit magnitude, having $\O{\log n}$ bit complexity.

We next derandomize $R_g$. 
The dense CountSketch used here has $s=\O{\log^2 n}$ rows.  
Therefore, by \Cref{lem:gaussian-derandomization}, its Gaussian matrix can be
generated from a seed $y_1$ of length $\O{\log^3 n(\log\log n)^2}$, while changing every output probability by at most $n^{-c-10}$. 
Every entry of this pseudorandom matrix can be recomputed from $y_1$ in $\polylog(n)$ time. 
Fixing $y_1$, let $G'=F_1(y_1)$ denote the resulting pseudorandom sketch matrix.

Now, consider any fixed seed $F_{1}(y_{1})$ and corresponding tester $\mathcal{B}_{i,F_{1}(y_{1})}(R_{e})$. 
Note that this determines the sketch matrix $G^{\prime}=F_{1}(y_{1})$ and so it suffices to consider the residual tester $\mathcal{B}_{i}(R_{e}) := \mathcal{A}_{i}(G^{\prime}, R_{e})$. 
It remains to derandomize the randomness $R_{e}$ used to generate the exponential random variables. 
Because we truncate and round each of the coordinates to $\nu_{disc}$ granularity, it suffices to focus on the discrete exponential random variables on a set with bounded support $\poly(n)$.

Instead of decomposing $R_{e}$ into separate head and tail components, observe that the turnstile stream updates are linear. 
The simulated randomness $r_{j}$ used for any single stream coordinate $j \in [n]$ operates completely independently of the randomness used for other coordinates. 
For each coordinate $j$, the algorithm uses the local seed $r_{j}$ to generate both its head order statistics and its simulated tail variance, which deterministically constructs a vector $v_{j}(r_{j}) \in \mathbb{R}^{n^{\prime}}$ representing the sketch updates for that coordinate.

The final virtual vector $z(R_{e})$ is exactly the sum of these independent contributions: $z(R_{e}) = \sum_{j=1}^{n} x_{j} v_{j}(r_{j})$. 
Consequently, the sketch vector $Y(R_{e}) = G^{\prime} \cdot z(R_{e})$ has a characteristic function that completely factorizes over the coordinates. 
Specifically, for any $\alpha \in \mathbb{R}^{R}$, we have
\[\Phi_{Y}(\alpha) = \mathbb{E}_{R_{e}}[\exp(2\pi i \langle \alpha, Y(R_{e}) \rangle)] = \prod_{j=1}^{n} \mathbb{E}_{r_{j}}[\exp(2\pi i \langle (G^{\prime})^{\top} \alpha, x_{j} v_{j}(r_{j}) \rangle)].\]
This matches the exact definition of an $(M, n)$-Fourier shape, where $F_{\alpha}(R_{e}) = \prod_{j=1}^{n} f_{j}(r_{j})$ and each $f_{j}$ maps into the complex unit disk. 
We apply the GKM PRG to fool this Fourier shape. 
For each coordinate $j$, simulating its updates requires drawing $\mathcal{O}(\log n)$ discrete exponential random variables and two discrete uniform random variables, which takes $\mathcal{O}(\log^2 n)$ bits, giving a domain size $M = 2^{\mathcal{O}(\log^2 n)}$.

The vector $Y$ has $R = \mathcal{O}(\log^2 n)$ coordinates, each discretized to $L = \O{\log n}$ bits, giving a support size $N_{supp} \le n^{\O{\log^2 n}}$. 
By the Cauchy-Schwarz inequality and Plancherel's theorem, the total variation distance between two probability distributions over a finite support is bounded by $\frac{1}{2}\sqrt{N_{supp}}$ times their maximum difference in the Fourier domain. 
Thus to guarantee the total variation distance changes by at most $\frac{1}{\poly(n)}$, it suffices to have Fourier error $\eps_{FT} \le \frac{n^{-\O{1}}}{\sqrt{N_{supp}}} = n^{-\O{\log^2 n}}$. 
By \Cref{thm:gkm:prg}, with $N=n$, $M=2^{\mathcal{O}(\log^2 n)}$, and $\eps=\eps_{FT}$, the corresponding seed length is $r_{2} = \O{\log \frac{NM}{\eps} (\log\log \frac{NM}{\eps})^2}$.

Since $\log(NM/\eps_{FT}) = \O{\log n} + \O{\log^2 n} + \O{\log^3 n} = \O{\log^3 n}$, the required seed length evaluates to exactly $r_{2} = \O{\log^3 n (\log \log n)^2}$. 
This pseudorandom generator $F_{2}$ ensures $d_{TV}(P_{Y}, P'_{Y}) \le \frac{1}{\poly(n)}$. 
Consequently,
\[|\mathbb{E}_{y_{1}, R_{e}}[\mathcal{A}_{i}(F_{1}(y_{1}), R_{e})] - \mathbb{E}_{y_{1}, y_{2}}[\mathcal{A}_{i}(F_{1}(y_{1}), F_{2}(y_{2}))]| \le \frac{1}{\poly(n)},\]
\end{proof}

\subsection{End-to-End Correctness}
\label{sec:final}
Finally, we have the end-to-end correctness of our algorithm, combined with the statistical test and the derandomization. 
\begin{restatable}[End-to-End Correctness for Perfect $L_2$ Sampling]{theorem}{thmendtoend}
\label{thm:end:to:end}
Given a dynamic turnstile stream defining a frequency vector $x\in\mathbb{R}^n$, there exists an algorithm using expected $\polylog(n)$ update time and $\tO{\log^3 n}$ bits of space such that, conditioned on returning an index $(i^*\neq\perp)$,
\[\PPr{i^*=j\mid i^*\neq\perp}=\frac{x_j^2}{\|x\|_2^2}\pm\O{\nu_{err}},\]
where $\nu_{err}=n^{-c}$ for any fixed constant $c>0$.
\end{restatable}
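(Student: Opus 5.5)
The plan is to compare two systems: the idealized duplicated process, and the algorithm actually run. In the idealized process all $k=n^C$ copies per coordinate are scaled by independent exponentials, sketched with truly Gaussian dense $\CountSketch$, and passed through the statistical test of \Cref{fig:stat:test}. The algorithm actually run has an explicit head, a numerically simulated tail, and is derandomized.

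\textbf{Step 1: the ideal system.} By \Cref{lem:ideal_dist}, the maximizer $D(1)$ of the virtual vector $z$ satisfies $\PPr{D(1)\in C_j}=x_j^2/\|x\|_2^2$. By \Cref{lem:swz_4_2}, the event $\neg\FAIL$ is independent of $D(1)$ up to additive $n^{-\O{c}}$. Bayes' rule then gives
\[\PPr{D(1)\in C_j\mid\neg\FAIL}=\frac{x_j^2}{\|x\|_2^2}\pm\frac{n^{-\O{c}}}{\PPr{\neg\FAIL}}.\]
By \Cref{lem:swz_4_7}, $\PPr{\neg\FAIL}\ge 0.75$, so the denominator costs only a constant factor. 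On success we also need the reported index to equal the true maximizer. This follows from \Cref{thm:dense:countsketch}: the estimation error $\frac{4}{\sqrt r}\|z\|_2$ is far below the gap $100\mu\eps Z$ enforced by the test, with $\eps=\Theta(1/\sqrt{\log n})$ and $r=\Theta(\log^2 n)$ rows per block. Hence, except with probability $1/\poly(n)$, the top estimate identifies $D(1)$.

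\textbf{Step 2: simulation error.} The algorithm's output is obtained from the ideal one by replacing the head point process and the tail variances with simulated versions. The output (an index or $\FAIL$) is a post-processing of the sketch, so \Cref{lem:tvd:error} gives
\[d_{TV}\le \O{n^{-c'}}.\]
That lemma packages \Cref{lem:head_tv}, \Cref{lem:shift}, \Cref{lem:gaussian_tv} and \Cref{lem:perturbation_robustness}. Consequently the joint law of (success indicator, output index) changes by $\O{n^{-c'}}$. The success probability stays at least $0.75-o(1)$, so the conditional probabilities change by $\O{n^{-c'}}$ as well.

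\textbf{Step 3: derandomization.} \Cref{thm:derandomize} changes every output probability by at most $1/\poly(n)$. Applying it to the events $\{i^*=j\}$ and $\{i^*\ne\perp\}$ again preserves the conditional distribution, up to an additive term, because the denominator is bounded below. Choosing all constants so that each error is at most $n^{-c}$ yields the claimed $\pm\O{\nu_{err}}$.

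\textbf{Step 4: resources.} Space is dominated by the dense sketch, with $\O{\log^2 n}$ rows and $\O{\log n}$-bit entries, together with the PRG seeds. This totals $\O{\log^3 n(\log\log n)^2}=\tO{\log^3 n}$ bits by \Cref{thm:derandomize}.

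Each update does the following work:
\begin{itemize}
\item generates $\tau=\O{\log n}$ head values;
\item runs the binary search of \Cref{alg:oracle:invert}, which takes $\O{\log n}$ iterations, each calling \Cref{alg:oracle} with $\O{\log^2 n}$ nodes (\Cref{thm:oracle}), where each node needs a $\polylog(n)$-time evaluation of $\Si$ and $\Cin$ on the bounded region $|tR|=\Theta(1)$;
\item touches $\O{\log^2 n}$ rows times $\tau+1$ Gaussians, each computable in $\polylog(n)$ time from the seed.
\end{itemize}
The guarantee is only in expectation because of the rare events where $R$ falls outside the range of \Cref{lem:bounds}, which happens with probability $n^{-c}$. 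On those events the algorithm can fall back to a slower computation whose $\poly(n)$ cost is multiplied by that probability.

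\textbf{Main obstacle.} I expect the main difficulty to be making the conditioning on $\neg\FAIL$ rigorous when all three error sources are present. Additive TV errors in the joint distribution must be turned into additive errors in conditional probabilities, and this requires a lower bound on the success probability that survives simulation and derandomization. I would handle this by treating the pair (success, index) as the single output of the tester throughout, and invoking \Cref{lem:swz_4_7} only in the ideal system before transferring.
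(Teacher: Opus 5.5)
Your proposal is correct and follows essentially the same route as the paper. You analyze the ideal duplicated system via \Cref{lem:ideal_dist}, \Cref{lem:swz_4_2} and \Cref{lem:swz_4_7}, then transfer to the implemented algorithm through \Cref{lem:tvd:error} and the derandomization, using the constant lower bound on the success probability to pass from joint to conditional probabilities (a step you make more explicit than the paper does).

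One small slip: to stay within the $\O{\log^2 n}$-row budget you use in Step 4, each block must have $r=\Theta(\log n)$ rows (with $\Theta(\log n)$ blocks), not $\Theta(\log^2 n)$. With that $r$, the \CountSketch error $\frac{4}{\sqrt r}\|z\|_2$ is of the same order $\Theta(\|z\|_2/\sqrt{\log n})$ as the threshold $100\mu\eps Z$. So identifying $D(1)$ follows from choosing the constants in $r$ and $\eps$ appropriately, not from a $\sqrt{\log n}$ margin.
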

\begin{proof}
Let $\calD_{ideal}$ denote the distribution induced by the exact discrete construction, and let $\calD_{alg}$ denote the distribution produced by the implemented algorithm. 
Let $\calE_{pass}$ denote the event that the statistical test succeeds and the algorithm returns an index.

Under $\calD_{ideal}$, the dense $\CountSketch$ recovers the unique maximum coordinate of the duplicated vector $z\in\mathbb{R}^N$ except with probability $n^{-\O{c}}$. 
Since the gap test only accepts when the top coordinate is separated from the remainder of the vector, conditioning on $\calE_{pass}$ implies that the returned coordinate agrees with the true maximizer $D(1)$ with high probability.
Fix a virtual coordinate $v\in[N]$.  
By \Cref{lem:swz_4_2},
\[\PPr{\calE_{pass}\mid D(1)=v}=\PPr{\calE_{pass}}\pm n^{-\O{c}}.\]
By \Cref{lem:swz_4_7}, we have that $\PPr{\calE_{pass}}\ge 0.75$. 
Therefore,
\[\PPr{\calE_{pass}\cap\{D(1)=v\}}=\PPPr{\calD_{ideal}}{D(1)=v}\cdot\left(\PPr{\calE_{pass}}\pm n^{-\O{c}}\right).\]
Let $C_j$ denote the collection of virtual copies associated with coordinate $j\in[n]$. 
Summing over all such copies gives
\begin{align*}
\PPr{i^*=j}&=\sum_{v\in C_j}\PPr{\calE_{pass}\cap\{D(1)=v\}}\pm n^{-\O{c}} \\
&=\sum_{v\in C_j}\PPPr{\calD_{ideal}}{D(1)=v}\left(\PPr{\calE_{pass}}\pm n^{-\O{c}}\right)\pm n^{-\O{c}}.
\end{align*}
By \Cref{lem:ideal_dist},
\[\sum_{v\in C_j}\PPPr{\calD_{ideal}}{D(1)=v}=\frac{x_j^2}{\|x\|_2^2}.\]
Substituting this identity gives
\[\PPr{i^*=j}=\frac{x_j^2}{\|x\|_2^2}\left(\PPr{\calE_{pass}}\pm n^{-\O{c}}\right)\pm n^{-\O{c}}.\]
Conditioning on $\calE_{pass}$ and using $\PPr{\calE_{pass}}\ge 0.75$,
\[\PPPr{\calD_{ideal}}{i^*=j\mid\calE_{pass}}=\frac{x_j^2}{\|x\|_2^2} \pm\O{n^{-c}}.\]

We now compare $\calD_{ideal}$ and $\calD_{alg}$. 
By \Cref{lem:tvd:error}, the total variation distance between the ideal distribution $\mathcal{D}_{ideal}$ and the algorithmic distribution $\mathcal{D}_{alg}$ is at most $\O{n^{-c}}$. 
Finally, the GKM18 pseudorandom generator $\nu_{err}$-fools both the dense $\CountSketch$ queries and the Fourier evaluations used by the oracle while using only $\O{\log^3 n(\log\log n)^2}$ random bits, contributing at most $\nu_{err}/3$ additional variation distance.

Combining the three bounds gives
\[d_{TV}(\calD_{ideal},\calD_{alg})\le\nu_{err}.\]
It follows that
\[\PPPr{\calD_{alg}}{i^*=j\mid i^*\neq\perp}=\frac{x_j^2}{\|x\|_2^2}\pm\O{\nu_{err}}.\]

Finally, the dense $\CountSketch$ uses $M=\Theta(\log n)$ blocks with $r=\Theta(\log n)$ rows each, for a total of $\Theta(\log^2 n)$ sketch rows. 
Each row stores a single counter using $\O{\log n}$ bits, giving overall space complexity $\O{\log^3 n}$ bits. 
Each update involves $\O{\log^2 n}$ rows and performs only polylogarithmic additional work for the explicitly tracked head coordinates and simulated tail contribution, giving expected update time $\polylog(n)$.
\end{proof}

\section*{Acknowledgments}
Google DeepThink contributed the main ideas underlying the technical proofs in this work, and several other AI models assisted with verification. 

\bibliographystyle{alpha}
\bibliography{references}
\end{document}